\title{
A Multiplicative Ergodic Theorem for\\
Bistochastic Ergodic Quantum Processes\\
with Applications to Entanglement
}
\date{}
\author{Owen Ekblad\orcidlink{0009-0006-0834-0327}\thanks{ekbladow@msu.edu}}
\affil{Michigan State University, Department of Mathematics}
\begin{document}

\pagenumbering{arabic}
\lhead{\thepage}
\maketitle

\begin{abstract}
We prove a multiplicative ergodic theorem for bistochastic completely positive (bcp) linear cocycles acting on finite-dimensional matrix algebras, giving an invariant splitting described explicitly in terms of the multiplicative domains of the underlying bcp maps.
As an application of our theorem, we classify when compositions of random bcp maps are asymptotically entanglement breaking, and use this classification to show that occasionally positive partial transpose bcp maps are asymptotically entanglement breaking.
We conclude by demonstrating a certain class of bcp linear cocycles are almost surely entanglement breaking in finite time.
\end{abstract}

\section{Introduction}
Consider an open quantum system $\mcS$ described by a finite-dimensional Hilbert space $\scrH = \mbC^d$. 
One simple model describing the dynamics of such a system is given by a quantum channel $\psi$, which is a completely positive and trace-preserving linear map 
\begin{equation}
    \psi:\matrices\to\matrices, 
\end{equation}
where $\matrices$ denotes the set $d\times d$ matrices with complex entries:
the dynamics is defined by the rule that if, at time $t = 0$, the state of $\mcS$ is described by the density matrix $\rho\in\matrices$, then the state of $\mcS$ at time $t = n$ is given by $\psi^n(\rho)$ \cite{Cusumano2022QuantumGuide, Ciccarello2022QuantumInteractions, Davies1976QuantumSystems, Kraus1983StatesTheory}. 
In this work, we consider a stochastic version of this model, where the quantum channel defining our dynamics is sampled from an ergodic stochastic process:
if $\mcS$ is described by the density matrix $\rho$ at time $t = 0$, then the state of $\mcS$ at time $t = n$ is described by 
\begin{equation}
    \phi_{{n-1}; \omega}\circ\cdots\circ\phi_{0; \omega}(\rho), 
\end{equation}
where $\omega\in\Omega$ is an element of a standard probability space $\seq{\Omega, \mcF, \mu}$, and $\Phi := \seq{\phi_n}_{n\in\mbZ}$ is a bi-infinite sequence of dynamically-defined random quantum channels, i.e., for all $n\in\mbZ$, 
\begin{equation}
    \begin{split}
        \phi_n:\Omega&\to\scrQ\\
        \phi_{n; \omega}
            &:=
        \phi_{T^n(\omega)},
    \end{split}
\end{equation}
where $\scrQ$ denotes the set of quantum channels, $T:\Omega\to\Omega$ is an invertible ergodic $\mu$-preserving transformation, and $\phi:\Omega\to\scrQ$ is a prescribed random quantum channel.
To refer to all this data succinctly, we call $\Phi$ the \textit{ergodic quantum process} defined by $\seq{T, \phi}$.
Examples of ergodic quantum processes include the situations where $\Phi$ is an independent identically distributed sequence or a Markovian sequence (see Example \ref{Ex:Markovian}), but also models situations with long-range stochastic correlations in time, like periodic or quasiperiodic regimes (see Example \ref{Ex:Nondeterministic}).
Understanding ergodic quantum processes under various assumptions has been the focus of recent literature, motivated on multiple fronts: 
as we have just described, ergodic quantum processes describe (disordered) open quantum dynamics, but they also describe the theory of matrix product states on quantum spin chains with homogeneously-distributed disorder \cite{Movassagh2021TheoryProcesses, Movassagh2022AnStates}.
In this work, we take as motivation the following simple question. 
\begin{description}
\hypertarget{ENT}{}
    \item[\ENT]  How is entanglement preserved under the random quantum dynamics defined by $\Phi$?
\end{description}
This is a basic question of interest in applications to quantum information theory and quantum communication \cite{Horodecki2003EntanglementChannels}, but has also proved to be a mathematically interesting question in its own right, evidenced by the substantial amount of mathematical literature dedicated to some form of \ENT\, \cite{Szczygielski2024EventuallyDynamics, Lami2015EntanglementbreakingIndices, Christandl2019WhenBreaking, Lami2016Entanglement-savingChannels, Rahaman2018EventuallyMaps, Kennedy2018CompositionMaps, Strmer2008SeparableMaps}.
The literature, however, has primarily been concerned with answering \ENT\, under the implicit assumption of \textit{no disorder}, and, to the author's knowledge, no substantial work has been done on addressing \ENT\, when there is external disorder affecting the system. 
In this work, we seek to remedy this gap in the literature, first providing a theoretical framework for addressing \ENT\, and problems like it in the disordered regime, and then beginning to address \ENT\, for ergodic quantum processes. 
Our main assumption in this work is the following: 
\begin{description}
\hypertarget{BCPa}{}
    \item[\BCPa]  For almost every $\omega\in\Omega$, $\phi_\omega\in\bcp$, where $\bcp$ denotes the set of unital quantum channels.
\end{description}
Here, unitality refers to the fact that $\phi_\omega(\mbI) = \mbI$, where $\mbI\in\matrices$ is the identity matrix. 
We call an element $\psi\in\bcp$ a bistochastic completely positive (\textit{bcp}) map, where bistochasticity refers to $\psi$ being both trace-preserving and unital. 
The assumption of bistochasticity covers many situations of interest in applications, in addition to giving a useful $C^*$-algebraic framework we use to address \ENT.\footnote{We address in more depth the extent to which the bcp assumption is used in Section \ref{Sec:Final}.}
Indeed, the methodology we follow to answer \ENT\, is the same used by others authors under the assumption of no disorder: 
in \cite{Rahaman2018EventuallyMaps}, specifically, the authors address \ENT\, by using a $C^*$-algebraic decomposition result from \cite{Rahaman2017MultiplicativeChannels} that gives a clean description of $\matrices$ in terms of the dynamics defined by $\Phi$. 
We follow the same methodology here:
our first two results (Theorems \ref{Thm:Deterministic} and \ref{Thm:Main theorem}) generalize the aforementioned decomposition result (\cite[Theorem 2.5]{Rahaman2017MultiplicativeChannels}) to the disordered setting, which then gives us a powerful technical tool to address \ENT, which we do in Theorems \ref{Thm:Abelian} and \ref{Thm:PPT}.
As we now describe, however, there are substantial technical adjustments that must be made to accommodate disorder. 
Let us be more precise.
\subsection{Multiplicative ergodic theory}
The primary technical fact we use is that the data $\seq{T, \phi}$ defining $\Phi$ above is a \textit{linear cocycle}, where we recall (from, e.g., \cite[Ch. 2]{Viana2014LecturesExponents}) that a linear cocycle is defined by a pair $\seq{\theta, A}$ where $\theta:\Omega\to\Omega$ is a probability-preserving map and $A:\Omega\to\mbM_D$ is a random matrix, and the cocycle is understood as a map $\seq{\theta, A}:\Omega\times\mbC^D\to\Omega\times\mbC^D$ defined by
\begin{equation}
    \seq{\theta, A}: (\omega, v)\mapsto (\theta(\omega), A_\omega v).
\end{equation}
A fundamental theorem concerning linear cocycles is the classical Multiplicative Ergodic Theorem (\textit{MET}) of Oseledets \cite{Oseledets1968ASystems}.
%
The refinement of the MET in the case of $\seq{T, \phi}$ forms the technical backbone of this work, and so it is useful to recall the statement of the MET here.
\begin{thmx}[MET \cite{Oseledets1968ASystems}]
\label{Thm:Mult Erg Theorem}
    Let $\seq{\theta, A}$ be a linear cocycle, and let $\matnorm{\cdot}$ and $\|\cdot\|$ be norms making $\mbM_D$ and $\mbC^D$, respectively, into Banach spaces.
    Assume that 
    \begin{equation}\label{Eqn:Integ condition for Oseledets}
        \int_\Omega 
            \log^+\matnorm{A_\omega}\,\dee\mu(\omega)
        <\infty,
    \end{equation}
    where $\log^+$ denotes the function $\max(\log, 0)$.
    Then there exist real numbers $\lambda_1 > \cdots > \lambda_k$ (with $\lambda_k$ equal to $-\infty$) and measurable $\seq{\theta, A}$-invariant subspace-valued random variables
    \begin{equation}
        0 \subseteq V^{\leq\lambda_k}
        \subsetneq \cdots \subsetneq V^{\leq\lambda_1} = \mbC^D
    \end{equation}
    such that for $\mu$-almost every $\omega\in\Omega$, all $j\in\{1, \dots, k\}$, and any $v\in V_{\omega}^{\leq\lambda_j}\setminus V_{\omega}^{\leq\lambda_{j+1}}$, the equation
    \begin{equation}
        \lim_{n\to\infty}
        \frac{1}{n}
        \log 
        \left\|
            A_{\theta^{n-1}(\omega)}\cdots A_\omega v
        \right\|
        =
        \lambda_{j}
    \end{equation}
    is satisfied.
\end{thmx}
For a standard treatment of this theorem, see \texorpdfstring{\cite[Theorem 10.2]{Walters1982AnTheory}}{l}. 
A subspace-valued random variable $V\subseteq \mbC^D$ is called $\seq{\theta, A}$-invariant if 
$
    A_\omega V_\omega \subseteq V_{\theta(\omega)} 
$
holds for almost every $\omega\in\Omega$. 
In the above theorem, the numbers $\lambda_1 > \cdots > \lambda_k$ are called the \textit{Lyapunov exponents} associated to the linear cocycle $\seq{\theta, A}$, and, collectively, the set $\set{\lambda_j}$ is called the \textit{Lyapunov spectrum}.
We call $\lambda_1$ the \textit{top Lyapunov exponent} and we call $\lambda_j$ the \textit{$j$th Lyapunov exponent}.
The subspace-valued random variables $V^{\leq\lambda_k}\subsetneq \cdots \subsetneq V^{\leq\lambda_1}$ are called the \textit{Lyapunov subspaces} associated to the linear cocycle $\seq{\theta, A}$.\footnote{Two notes about the version of the MET we state here: (1) usually, the MET is stated for matrices and vector spaces over $\mbR$, but it is easy to see that when $\mbC$ is isometrically identified with $\mbR^2$, the statement of the MET we gave here remains unchanged; (2) we have taken the convention that $\lambda_k = -\infty$ is always a Lyapunov exponent and that $V^{\leq-\infty}$ may be equal to $\set{0}$, which is nonstandard but streamlines our presentation.}
Let $\bcp$ denote the set of bcp maps. 
To refine the MET in the case of $\seq{T, \phi}$, we make use of the \textit{multiplicative domain}:
for any $\psi\in\bcp$, the multiplicative domain $\mcM_\psi$ is the unital $C^*$-algebra
\begin{equation}
    \mcM_\psi 
        :=
    \set{a\in\matrices\,\,:\,\, \psi(ab) = \psi(a)\psi(b)\text{ and }\psi(ba) = \psi(b)\psi(a)\text{ for all }b\in\matrices}.
\end{equation}
It has been shown by several authors that $\mcM_\psi$ is a useful object for understanding the information-theoretic properties of $\psi$ \cite{Choi2009TheCorrection, Johnston2011GeneralizedCorrection}, and in general $\mcM_\psi$ is a useful tool in the general $C^*$-algebraic analysis of completely positive maps \cite{Choi1974AC-Algebras, Strmer2008SeparableMaps}.
Most relevant to our present investigation, however, is the usage of the multiplicative domain in the classification of dynamical behavior of $\Phi$, where, in the deterministic setting, the following theorem of Rahaman is particularly pertinent.
Let $\innerhs{a}{b} =\tr{a^*b}$ denote the Hilbert-Schmidt inner product, and, given a linear map $\psi:\matrices\to\matrices$, let $\psi^*$ denote the adjoint of $\psi$ with respect to $\innerhs{\cdot}{\cdot}$. 
\begin{thmx}[\texorpdfstring{\cite[Theorem 2.5]{Rahaman2017MultiplicativeChannels}}{l}]
\label{Thm:Rahaman}
    Let $\psi\in\bcp$. 
    Then there is $N\in\mbN$ such that the following hold. 
    \begin{enumerate}[label=(\alph*)]
        \item The equality
        \begin{equation}
           \mcM_{\psi^N}^\perp
            =
            \set{a\in\matrices\,\,:\,\,
            \lim_{n\to\infty}
            \|\psi^n(a)\|
            =
            0
            }
        \end{equation}
        holds, where $\mcM_{\psi^N}^\perp = \set{a\in\matrices\,\,:\,\,\innerhs{b}{a} = 0\text{ for all }b\in\mcM_{\psi^N}}$ and $\|\cdot\|$ is any norm making $\matrices$ into a Banach space. 

        \item  $\psi\seq{\mcM_{\psi^N}} = \mcM_{\psi^N}$, and, moreover, the $\psi\vert_{\mcM_{\psi^N}}:\mcM_{\psi^N}\to\mcM_{\psi^N}$ defines a $*$-isomorphism of $C^*$-algebras with inverse $\psi^*\vert_{\mcM_{\psi^N}}$.

        \item $\mcM_{\psi^N}$ is the $C^*$-algebra generated by the set 
        \begin{equation}\label{Eqn:Rahaman:Intro}
            \set{a\in\matrices\,\,:\,\,
            \psi(a) = \lambda a \text{ for some $\lambda\in\mbT$}
            },
        \end{equation}
        where $\mbT = \set{\lambda\in\mbC\,\,:\,\,|\lambda| = 1}$. 
    \end{enumerate}
\end{thmx}
By viewing $\psi$ as the trivial linear cocycle $\seq{\operatorname{Id}, \psi}$ (where $\operatorname{Id}$ is the identity map on a one-point probability space), we may interpret Theorem \ref{Thm:Rahaman} as a version of the MET that says $\mcM_{\psi^N}^\perp = V^{\leq\lambda_2}$ and that $\seq{\operatorname{Id}, \psi}$ acts unitarily on $\mcM_{\psi^N}$, in addition to giving an explicit description of $\mcM_{\psi^N}$ in terms of the eigenmatrices of $\psi$. 
The smallest $N$ in the above theorem is called the \textit{multiplicative index} of $\psi$.
Our first technical result generalizes this concept of multiplicative index to the disordered setting, and establishes one of its most basic properties.
For the sake of simplifying our presentation in this introduction, we take the following assumption which we shall later drop in the main body. 
\begin{description}
\hypertarget{Indep}{}
    \item[\Indep]  The $\sigma$-algebras $\mcF^{<0} := \sigma(\phi_n\,\,:\,\,n < 0)$ and $\mcF^{\geq 0} := \sigma(\phi_n\,\,:\,\, n\geq 0)$ are such that $\mu[A] \in\set{0, 1}$ for all $A\in\mcF^{<0}\cap\mcF^{\geq 0}$. 
\end{description}
This assumption is satisfied, for example, in the situation that $\Phi$ is an i.i.d. sequence.
Now, some notation and terminology: 
we say that a random variable $\tau:\Omega\to\mbN\cup\set{\infty}$ is a $\Phi$-stopping time if $\set{\tau = n}\in\sigma(\phi_0, \dots, \phi_{n-1})$ for all $n\in\mbN$, where $\sigma(\phi_0, \dots, \phi_{n-1})$ is the $\sigma$-algebra generated by $\phi_0, \dots, \phi_{n-1}$.
Given such a time with $\tau<\infty$ almost surely, we define 
\begin{equation}
    \begin{split}
        \Phi^{(\tau)}:\Omega&\to\bcp\\
        \Phi^{(\tau)}_\omega &:=
        \phi_{\tau(\omega)-1; \omega}\circ\cdots\circ\phi_{0; \omega}.
    \end{split}
\end{equation}
A special case of this notation is when $\tau = n$ almost surely, so $\Phi^{(n)} = \phi_{n-1}\circ\cdots\circ\phi_0$.
We let $\mcM_{\Phi^{(\tau)}}$ denote the subspace-valued random variable defined by $\mcM_{\Phi^{(\tau)}; \omega} := \mcM_{\Phi_\omega^{(\tau)}}$.\footnote{For measurability concerns about $\Phi^{(\tau)}$ and $\mcM_{\Phi^{(\tau)}}$, see Appendices \ref{App:Meas} and \ref{App:Grassmannian}, respectively}
%
%
%
\begin{restatable}[Stabilization of Multiplicative Domain]{thm}{Deterministic}\label{Thm:Deterministic}
    Assume \BCPa\,  and \Indep. 
    Then there is a $\Phi$-stopping time $\tau$ and a deterministic $C^*$-algebra $\mcA_\Phi\subseteq\matrices$  such that $\mcM_{\Phi^{(\tau)}} = \mcA_\Phi$ almost surely. 
\end{restatable}
In keeping with \cite{Rahaman2017MultiplicativeChannels}, we call the $\Phi$-stopping time $\tau$ in the above theorem the \textit{multiplicative index} of $\Phi$, and we call the deterministic algebra $\mcA_\Phi$ the \textit{stabilized multiplicative domain} of $\Phi$.
\textit{A priori}, there is no reason that the subspace-valued random variable $\mcM_{\Phi^{(n)}}$ need be deterministic for any $n$, and that this is in fact true is the primary content of Theorem \ref{Thm:Deterministic}.
Our next result---our main technical theorem---describes how $\mcA_\Phi$ encodes dynamical information of $\Phi$, giving a full extension of Theorem \ref{Thm:Rahaman} to the disordered situation. 
\begin{restatable}[MET for Bistochastic Ergodic Quantum Processes]{thm}{Oseledets}\label{Thm:Main theorem}
Assume \BCPa\, and \Indep. 
Let $\tau$ and $\mcA_\Phi$ denote the multiplicative index and stabilized multiplicative domain of $\Phi$, respectively.
    \begin{enumerate}[label = (\alph*)]
        \item For almost every $\omega\in\Omega$, the equality
        \begin{equation}
            \mcA_\Phi^\perp
            =
           V^{\leq\lambda_2}_\omega
        \end{equation}
        holds, where $\lambda_2<0$ is the second Lyapunov exponent associated to the linear cocycle $\seq{T, \phi}$ and $V^{\leq\lambda_2}$ is the corresponding Lyapunov subspace. 
        In particular, $V^{\leq \lambda_2}$ is almost surely constant.

        \item For almost every $\omega\in\Omega$, $\phi_\omega\!\seq{\mcA_\Phi} = \mcA_\Phi$, and $\phi_\omega\vert_{\mcA_\Phi}:\mcA_\Phi\to \mcA_\Phi$
    defines a $*$-isomorphism of $C^*$-algebras with inverse $\phi_\omega^*\vert_{\mcA_\Phi}$. 

    \item For almost every $\omega\in\Omega$, $\mcA_\Phi$ is the $C^*$-algebra generated by the set 
    \begin{equation}
        \set{
        a\in\matrices\,\,:\,\,
        \Phi^{(\tau)}_\omega(a) = \lambda a \text{ for some }\lambda\in\mbT
        },
    \end{equation}
    where $\mbT = \set{\lambda\in\mbC\,\,:\,\,|\lambda| = 1}$. 
    \end{enumerate}
\end{restatable}
When compared with the MET, Theorems \ref{Thm:Deterministic} and \ref{Thm:Main theorem} together say that, under \BCPa\, and \Indep\, $V^{\leq\lambda_2}$ is deterministic and that $V^{\leq\lambda_1} = V^{\leq\lambda_2}\oplus\mcA_\Phi$, with $\mcA_\Phi$ invariant under $\seq{T, \phi}$. 
This gives a useful description of the dynamics of $\seq{T, \phi}$. 
Moreover, the above theorems vastly extend Theorem \ref{Thm:Rahaman}, first showing that, under \Indep, the stabilized multiplicative domain is necessarily deterministic, and then demonstrating that, for almost every $\omega\in\Omega$, if $a\in\matrices$ is such that 
\begin{equation}
    \lim_n\|\phi_{n; \omega}\circ\cdots\circ\phi_{0; \omega}(a)\|
    =
    0,
\end{equation}
then the rate of convergence is necessarily exponential, which, although obvious in the deterministic case\footnote{For quantum channels, the spectrum of $\psi$ is contained in the unit circle $\mbT\subset\mbC$, and therefore by the singular value decomposition of $\psi$ as a linear map, we see that if $\psi^n(a)$ converges to zero it does so at an exponential rate dictated by the second largest singular value of $\psi$.}, is not \textit{a priori} clear in the random case.
Our Theorem \ref{Thm:More general main theorem} gives a more general version of Theorem \ref{Thm:Main theorem} without assuming \Indep, where the main difference is that the limiting object, $\mcA_\Phi$, need not be deterministic anymore, which can be clearly seen in Example \ref{Ex:Nondeterministic} below.
These new results serve as a useful tool for addressing the entanglement question, as we now discuss. 
\subsection{Applications to entanglement}
Recall that a linear map $\psi:\matrices\to\matrices$ is called \textit{entanglement breaking} if, for any $k\in\mbN$ and any positive semidefinite matrix $\rho\in\matrices\otimes \mbM_k$, the positive semidefinite matrix $\psi\otimes\operatorname{Id}_{\mbM_k}(\rho)$ is separable, meaning that $\psi\otimes\operatorname{Id}_{\mbM_k}(\rho)$ belongs to the convex cone of $\matrices\otimes \mbM_k$ generated by the elements $P\otimes Q$ where $P\in\matrices$ and $Q\in\mbM_k$ are positive semidefinite \cite{Horodecki2003EntanglementChannels}.
Heuristically, maps which are entanglement breaking are unable to take advantage of quantum entanglement to accomplish communication tasks. 
The precise version of \ENT\, we seek to answer, therefore, is the following qualitative question: 
if we let $\eb$ denote the set of entanglement breaking bcp maps, under what circumstances does $\Phi$ approach $\eb$?
That is, we seek to classify the behavior
\begin{equation}\label{Eqn:Asym_ent_breaking}
    \lim_{n\to\infty}d_{\operatorname{HS}}\seq{\Phi^{(n)}_\omega, \eb}
    =
    0,
\end{equation}
where $d_{\operatorname{HS}}\seq{\cdot, \eb}:\bcp\to[0, \infty)$ denotes the distance function $d_{\operatorname{HS}}\seq{\psi, \eb} = \inf_{\varphi\in\eb}\hsnorm{\psi - \varphi}$ and $\hsnorm{\cdot} = \sqrt{\innerhs{\cdot}{\cdot}}$ is the norm induced by the Hilbert-Schmidt inner product.
We say that $\Phi_\omega$ is \textit{asymptotically entanglement breaking (a.e.b.)} at $\omega\in\Omega$ if $\omega$ satisfies (\ref{Eqn:Asym_ent_breaking}). 
As has been noted by other authors, entanglement breaking properties are intimately related to the multiplicative domain (see, e.g., \cite{Strmer2008SeparableMaps} and \cite{Rahaman2018EventuallyMaps}). 
Specifically, in the deterministic case, it was shown in \cite[Theorem 6.1]{Rahaman2018EventuallyMaps} that a given bcp map $\psi\in\bcp$ satisfies $\lim_{n\to\infty}d_{\operatorname{HS}}\seq{\psi^n, \eb} = 0$
if and only if the stabilized multiplicative domain of $\psi$ is abelian. 
Our first result towards \ENT\, generalizes this fact. 
Let $X_{\operatorname{aeb}}$ be the set of $\omega\in\Omega$ such that $\Phi_\omega$ is a.e.b. at $\omega$.\footnote{The set $X_{\operatorname{aeb}}$ is clearly measurable, since $\eb$ is a closed (hence Borel) subset of the set of linear maps $\matrices\to\matrices$.}
\begin{restatable}[Classification of a.e.b.]{thm}{Abelian}\label{Thm:Abelian}
    Assume \BCPa\, and \Indep. 
    Let $\mcA_\Phi$ be the stabilized multiplicative domain of $\Phi$.
    Then $\mu[X_{\operatorname{aeb}}]\in\set{0, 1}$.
    Moreover, $\mu[X_{\operatorname{aeb}}] = 1$ if and only if $\mcA_\Phi$ is abelian. 
\end{restatable}
In accordance with the above theorem, if $\Phi$ is such that $\mu[X_{\operatorname{aeb}}] = 1$, we may say that $\Phi$ is asymptotically entanglement breaking (\textit{a.e.b.}) without specifying $\omega\in\Omega$. 
The criterion for a.e.b. above gives an interesting corollary about positive partial transpose (\textit{PPT}) maps, which, with the goal of resolving of Christandl's $\mathrm{PPT}^2$ conjecture \cite{Christandl2012PPTConjecture}, have received a great deal of attention.
Recall that a linear map $\psi:\matrices\to\matrices$ is called \textit{PPT} if completely positive and completely copositive, meaning that both $\psi$ and $\operatorname{(\cdot)^T}\circ\psi$ are completely positive, where $(\cdot)^T:\matrices\to\matrices$ denotes the transpose with respect to the computational basis of $\mbC^d$. 
\begin{conj*}[\texorpdfstring{$\mathrm{PPT}^2$ Conjecture}{l} \cite{Christandl2012PPTConjecture}]
    Whenever $\psi:\matrices\to\matrices$ is PPT, $\psi^2$ is entanglement breaking. 
\end{conj*}
The $\mathrm{PPT}^2$ conjecture has been resolved in many special cases (see \cite{Collins2018TheStates, Chen2019Positive-partial-transpose3, Christandl2019WhenBreaking, Singh2022TheMaps, Nechita2024RandomChannels, gulati2025entanglementcyclicsigninvariant}), but remains unresolved in full generality. 
Our main contribution in this direction is that if products of bcp maps contained PPT maps with only positive (and possibly very small) probability, then this product is a.e.b.:
\begin{restatable}[Occasionally PPT implies a.e.b.]{thm}{PPT}\label{Thm:PPT}
    Assume \BCPa.   
    If $\mu[\text{$\phi$ is PPT}\,]>0$, then $\Phi$ is asymptotically entanglement breaking. 
\end{restatable}
Notice, in particular, that even if the $\mathrm{PPT}^2$ conjecture holds, it would \textit{not} imply the above theorem.
Indeed, the framework of ergodic quantum processes supports the possibility that, for almost every $\omega\in\Omega$ and all $n\in\mbZ$, at most one of $\phi_{n; \omega}$ and $\phi_{n+1; \omega}$ is PPT, even if $\mu[\text{$\phi$ is PPT}\,]>0$, as we show in Example \ref{Ex:Markovian}.  
A natural next step in this analysis is to ask not just about asymptotic entanglement breaking, but about \textit{eventual} entanglement breaking, which is more useful in practice. 
For $\omega\in\Omega$, we say that $\Phi$ is eventually entanglement breaking (\textit{e.e.b.}) at $\omega$ if there exists $N\in\mbN$ such that $\Phi^{(N)}_\omega\in\eb$, and, as above, we let $X_{\operatorname{eeb}}$ denote the set of all $\omega\in\Omega$ such that $\Phi$ is e.e.b. at $\omega$. 
We define 
\begin{equation}
    \begin{split}
        \iota:\Omega&\to\mbN\cup\set{\infty}\\
        \iota(\omega)
            &:=
        \inf\set{n\in\mbN\,\,:\,\,\Phi^{(n)}_\omega\in\eb},
    \end{split}
\end{equation}
where we define $\inf\emptyset = \infty$, and we call $\iota$ the \textit{index of separability} of $\Phi$. 
Note that $X_{\operatorname{eeb}} = \set{\iota < \infty}$.
Our results about the index of separability are much more preliminary than those on the multiplicative index, but nevertheless our technical framework allows us to obtain some results in a relatively straightforward manner. 
First of our results is the following 0-1 law for $\iota$: 
\begin{restatable}{thm}{Separability}\label{Thm:Index_of_serparbility}
    Assume \BCPa.    
    Then the index of separability $\iota$ defines a $\Phi$-stopping time such that $\mu[\iota < \infty]\in\set{0, 1}$. 
\end{restatable}
As in the case of a.e.b., we say that $\Phi$ is e.e.b. if $\iota<\infty$ almost surely.
The next natural question, then, is under what conditions is $\Phi$ e.e.b.?
We leave this question open in full generality, but we are able to prove a particular class of bistochastic ergodic quantum processes are e.e.b.
\begin{restatable}{thm}{EEB}\label{Thm:EEB}
    Assume \BCPa\, and \Indep. 
    Let $\mcA_\Phi$ denote the stabilized multiplicative domain of $\Phi$. 
    If $\mcA_\Phi = \mbC\mbI$, then $\Phi$ is e.e.b.
\end{restatable}
\subsection{Relation to other works}\label{Subsec:Relation}
As we mentioned at the start of the introduction, this work falls under the theoretical umbrella of so-called ergodic quantum processes. 
The systematic study of fully general ergodic quantum processes was laid out by Movassagh and Schenker in the joint physics and math papers \cite{Movassagh2021TheoryProcesses, Movassagh2022AnStates}, and has since been receiving increased attention by various authors, as mentioned in the body above.
Here, Movassagh and Schenker studied linear cocycles $\seq{T, \varphi}$ where $T$ was an invertible ergodic measure-preserving transformation (as above), but the only assumption was that $\seq{T, \varphi}$ is \textit{eventually strictly positive}, i.e., for almost every $\omega\in\Omega$, there exists $N\in\mbN$ such that $\varphi_{T^N(\omega)}\circ\cdots\circ\varphi_\omega(\rho)$ is invertible for all density matrices $\rho$. 
In particular, there is no requirement that $\varphi$ be trace-preserving or unital almost surely. 
In Proposition \ref{Prop:ESP} below, we realize that the assumption of eventual strict positivity is equivalent to $\mcA_\Phi = \mbC\mbI$ in the bistochastic case, and therefore Theorem \ref{Thm:EEB} is complementary to the work of Movassagh and Schenker.
Movassagh and Schenker were originally motivated by questions arising in the theory of matrix product states, but, as we have seen here, the framework they introduced in \cite{Movassagh2021TheoryProcesses, Movassagh2022AnStates} has a much wider range of applicability than just this. 
See \cite{Movassagh2021TheoryProcesses, Movassagh2022AnStates, Nelson2024ErgodicAlgebras, Ekblad2024ReducibilityInteractions, Souissi2025ErgodicProcesses} for a selection of recent works on ergodic quantum processes.
Movassagh and Schenker, however, were not the first to study compositions of random quantum channels: 
this has been studied for many years under the name of \textit{repeated interactions}, which are motivated more by general open quantum dynamics questions. 
These have been considered by many authors in both the physics and math literature; see \cite{Bruneau2010InfiniteDynamics, Nechita2012RandomStates, Bruneau2014RepeatedSystems, Bougron2022MarkovianSystems} for a non-exhaustive list of such works. 
On the random dynamical systems side, ergodic quantum processes and random repeated interactions may all be understood as a class of positivity-preserving linear cocycles on a finite-dimensional noncommutative $C^*$-algebra. 
The dynamical properties of linear cocycles that preserve positive definite matrices is nascent, and, to this author's knowledge, has to date mostly been studied under the guise of ergodic quantum processes and random repeated interactions, but the study of positivity-preserving linear cocycles on finite-dimensional \textit{commutative} $C^*$-algebras has been much better studied. 
These results are concerned with properties of products of random matrices with non-negative (and often strictly positive) entries, which may be understood as random compositions of positive maps acting on a finite-dimensional \textit{commutative} $C^*$-algebra $\mbC^d$; 
the interested reader may consult the non-exhaustive list of references
\cite{Evstigneev1974PositiveSystems, Arnold1994EvolutionaryMatrices, Kifer1996Perron-FrobeniusEnvironments, Hennion1997LimitMatrices, Pollicott2010MaximalProducts}
for further reading related to this topic. 
There has also been some work done on positive linear cocycles on ordered Banach spaces \cite{Mierczynski2013PrincipalTheory, Mierczynski2013PrincipalSystems, Mierczynski2016PrincipalSystems}.
Although we didn't explicitly mention this above, our Theorems \ref{Thm:Abelian} and \ref{Thm:PPT} extend \cite[Theorem 6.1]{Rahaman2018EventuallyMaps} and \cite[Theorem 4.4]{Rahaman2018EventuallyMaps}, respectively, to compositions of random bcp maps.
In general, the theory of entanglement breaking maps has received much attention. 
In \cite{Horodecki2003EntanglementChannels}, Horodecki, Shor, and Ruskai established the fundamental theory of entanglement breaking maps on matrix algebras, which shows that entanglement breaking maps cannot be used for any communication task that utilizes quantum entanglement in a fundamental way \cite{Bauml2015LimitationsRepeaters, Christandl2017PrivateSecrecy}. 
The question of asymptotic and eventual entanglement breaking has been addressed in \cite{Lami2015EntanglementbreakingIndices, Lami2016Entanglement-savingChannels, Rahaman2018EventuallyMaps, Kennedy2018CompositionMaps, Hanson2020EventuallyTimes, hanson2020entropiccontinuitybounds, Szczygielski2024EventuallyDynamics} and elsewhere. 
Many of these works take advantage of the $C^*$-algebraic framework of positive linear maps on general $C^*$-algebras, and most relevant to our work is the theory developed by St{\o}rmer in the works \cite{Stormer1963PositiveAlgebras, Strmer2007MultiplicativeMap, Strmer2008SeparableMaps}.
\subsection{Organization}
The remainder of this paper is dedicated to giving the full technical details needed to prove all our results. 
As mentioned above, we shall prove more general results in the case that \Indep\, does not necessarily hold, but we shall always assume \BCPa. 
We begin Section \ref{Sec:Mult} by giving the full definitions of all the objects whose meanings were left implicit in the above introduction.
Then, we prove the main technical results required to prove Theorems \ref{Thm:Deterministic} and \ref{Thm:Main theorem}, and the more general Theorem \ref{Thm:More general main theorem}. 
Next, we change gears and turn to the question of entanglement breaking, where we once more recall all relevant definitions and facts required to formally prove our main theorems here (the more general versions of Theorems \ref{Thm:Abelian}, \ref{Thm:PPT}, \ref{Thm:Index_of_serparbility}, and \ref{Thm:EEB} without assuming \Indep).
We conclude with some final remarks in Section \ref{Sec:Final}, where we discuss the extent to which the bistochastic assumption is used and how one might extend these results to the non-bistochastic regime, in addition to stating a generalization to the i.i.d. case of a result due to Kuperberg that didn't fit well within the main body presentation. 
In our appendices \ref{App:Grassmannian} and \ref{App:Meas}, we include the details required to show that pertinent subspace-valued random variables and stopping times are measurably defined, in addition to recalling some basic technical facts about the Grassmannian as a metric space.

\section{Multiplicative properties}\label{Sec:Mult}
\subsection{\texorpdfstring{$C^*$-algebraic preliminaries}{l}}
We begin by setting notation and establishing preliminaries that will be used throughout the rest of this paper. 
We write $\matrices$ to denote the space of $d\times d$ matrices with entries in $\mbC$. 
For $a\in\matrices$, we write $a^*$ to denote the conjugate transpose of $a$, we write $\operatorname{Spec}(a)$ to denote the set of eigenvalues of $a$, and we write $\tr{a}$ to denote the trace of $a$. 
We call $a\in\matrices$ \textit{positive semidefinite} if $a = a^*$ and $\operatorname{Spec}(a)\subset[0, \infty)$, and we write $a\geq 0$ to denote this. 
For $a, b\in\matrices$, we write $a\geq b$ or $b\leq a$ to denote that $a - b \geq 0$.
We let $\mbI$ denote the identity matrix in $\matrices$. 
The space $\matrices$ is a Banach space with various norms---all of which are topologically equivalent by merit of the finite-dimensionality of $\matrices$ as a $\mbC$-vector space---but there are only two norms on $\matrices$ to which we shall refer in this work: the operator norm and the norm induced by the Hilbert-Schmidt inner product. 
The operator norm $\infnorm{\cdot}$ is the norm defined by 
\begin{equation}
    \infnorm{a}
    =
    \sup_{v\in\mbC^d\setminus\set{0}}
    \cfrac{\|av\|}{\|v\|},
\end{equation}
where for a vector $v\in\mbC^d$, $\|v\|$ denotes the usual norm induced by the standard Euclidean inner product on $\mbC^d$. 
It is a standard fact that, when equipped with the operator norm $\infnorm{\cdot}$, $\matrices$ has the structure of a $C^*$-algebra. 
The Hilbert-Schmidt inner product $\innerhs{\cdot}{\cdot}$ is the inner product on $\matrices$ defined by 
\begin{equation}
    \innerhs{a}{b}
        =
    \tr{a^*b} \quad a, b\in\matrices. 
\end{equation}
It is a standard fact that $\innerhsNoArg$ makes $\matrices$ into a Hilbert space.
For a subset $S\subset\matrices$, we write $S^\perp$ to denote the orthogonal complement of $S$, i.e., the linear space defined by 
\begin{equation}
    S^\perp 
        =
    \set{a\in\matrices\,\,:\,\,
    \innerhs{a}{s} = 0\text{ for all $s\in S$}}.
\end{equation}  
We write $\hsnorm{\cdot}$ to denote the norm induced by $\innerhsNoArg$.
Unless otherwise stated, all maps $\matrices\to\matrices$ mentioned in this work are assumed to be $\mbC$-linear, and we write $\superops$ to denote the set of all such maps. 
Given $\psi\in\superops$, we write $\infnorm{\psi}$ (resp. $\hsnorm{\psi}$) to denote the operator norm on $\psi$ induced by $\infnorm{\cdot}$ (resp. $\hsnorm{\cdot}$), i.e., 
\begin{equation}
    \|\psi\|_{\#}
        =
    \sup_{a\in\matrices\setminus\set{0}}
    \cfrac{\|\psi(a)\|_{\#}}{\|a\|_{\#}}
\end{equation}
where $\# \in\{ \infty, \operatorname{HS}\}$. 
As before, these norms generate the same topology, because the space $\superops$ is a finite-dimensional $\mbC$-vector space. 
Specifically, there are constants $C_1, C_2>0$ such that 
\begin{equation}
    C_1\hsnorm{\psi}\leq \infnorm{\psi}\leq C_2\hsnorm{\psi},
\end{equation}
a fact which we use freely in the following. 
Whenever we refer to a measurability structure on $\superops$, we shall always refer to the Borel $\sigma$-algebra defined by these operator norms. 
In this work, we are concerned with a particular subset of $\superops$: specifically, the set of bistochastic completely positive maps. 
Recall that a map $\psi\in\superops$ is called \textit{positive} $\psi(a)\geq 0$ whenever $a\geq 0$, and we call $\psi$ \textit{completely positive} if for any $k\in\mbN$, the map 
\begin{equation}
    \begin{split}
        \psi\otimes\operatorname{Id_{\mbM_k}}:
        \matrices\otimes\mbM_k &\to \matrices\otimes\mbM_k\\
        a\otimes b&\mapsto \psi(a)\otimes b
    \end{split}
\end{equation}
is positive (to define positive semidefiniteness for an element $c\in\matrices\otimes\mbM_k$, one identifies $\matrices\otimes\mbM_k$ with $\mbM_{dk}$).
It is straightforward to check that if $\psi$ is positive, then $\psi(a^*) = \psi(a)^*$ for any $a\in\matrices$, a fact which we implicitly use throughout.
A map $\psi\in\superops$ is called \textit{trace preserving} if for any $a\in\matrices$, the identity
\begin{equation}
    \tr{\psi(a)}
    =
    \tr{a}
\end{equation}
holds, and $\psi$ is called \textit{unital} if $\psi(\mbI) = \mbI$. 
We call $\psi$ \textit{bistochastic} if $\psi$ is both unital and trace preserving.
In this work, we consider the collection of bistochastic completely positive (\textit{bcp}) maps frequently enough that some notation is warranted: we write $\bcp$ to denote the set
\begin{equation}
    \bcp
        =
    \set{\psi\in\superops\,\,:\,\,\text{$\psi$ is bcp}}.
\end{equation}
It is straightforward to check that $\bcp$ is a compact subset of $\superops$, a fact which we use freely in the following. 
For $\psi\in\bcp$, we write $\mcM_\psi$ to denote the \textit{multiplicative domain} of $\psi$, 
\begin{equation}
    \mcM_\psi 
        =
    \set{a\in\matrices\,\,:\,\,
    \psi(ab) = \psi(a)\psi(b)\text{ and }\psi(ba) = \psi(b)\psi(a)
    \text{ for all $b\in\matrices$}}.
\end{equation}
It is well-known that $\mcM_\psi$ is a $C^*$-algebra on which $\psi$ acts as a unital $*$-homomorphism. 
The following theorem of Choi gives an alternative description of $\mcM_\psi$. 
\begin{thmx}[\texorpdfstring{\cite{Choi1974AC-Algebras}}{l}]\label{Thm:Choi}
    For a unital completely positive map $\psi:\matrices\to\matrices$, the equality 
    \begin{equation}
        \mcM_\psi 
            =
        \set{a\in\matrices\,\,:\,\,
        \psi(a^*a)
        =
        \psi(a)^*\psi(a)\text{ and }
        \psi(aa^*) = \psi(a)\psi(a)^*}
    \end{equation}
    holds.
\end{thmx}
In \cite{Choi1974AC-Algebras}, Choi proves a more general result regarding the multiplicative domain of unital linear maps on general $C^*$-algebras satisfying positivity conditions weaker than complete positivity, but the version of Theorem \ref{Thm:Choi} is all that is required for the purposes of this work. 
A central property of unital completely positive maps used to conclude results of this sort is the \textit{Schwarz inequality}: 
given a unital completely positive map $\psi\in\superops$, $\psi$ satisfies 
\begin{equation}
    \psi(a^*a)
    \geq 
    \psi(a)^*\psi(a)
\end{equation}
for all $a\in\matrices$, a fact that is proved in \cite{Choi1974AC-Algebras}. 
We make frequent reference to this inequality, which in particular holds for all $\psi\in\bcp$. 
This has the following useful consequence. 
\begin{lemma}\label{Lem:Mult dom of bcp map}
    For $\psi\in\bcp$, 
    \begin{equation}
        \mcM_\psi 
            =
        \set{a\in\matrices 
            \,\,:\,\,
        \|a\|_{\operatorname{HS}}
        =
        \|\psi(a)\|_{\operatorname{HS}}
        }
    \end{equation}
    and 
     \begin{equation}
        \mcM_{\psi}^\perp
        \setminus\{0\}
            \subset
        \set{a\in\matrices\,\,:\,\,
            \|a\|_{\operatorname{HS}}
            >
            \|\psi(a)\|_{\operatorname{HS}}
        }.
    \end{equation}
    In particular, $\hsnorm{\psi}\leq 1$ for any $\psi\in\bcp$. 
\end{lemma}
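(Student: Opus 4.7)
The plan is to combine the Schwarz inequality with the trace preservation property of $\psi$ to bound the Hilbert--Schmidt norm, identify the case of equality via Choi's theorem (Theorem \ref{Thm:Choi}), and then deduce the statement about $\mcM_\psi^\perp$ for free.

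First I would establish the overall contraction inequality, which simultaneously gives ``$\supseteq$'' in the characterization of $\mcM_\psi$ and the ``in particular'' claim. Fix $a\in\matrices$. The Schwarz inequality gives $\psi(a^*a)-\psi(a)^*\psi(a)\geq 0$, and since the trace of a positive semidefinite matrix is nonnegative, trace preservation yields
\begin{equation}
\|\psi(a)\|_{\operatorname{HS}}^2 = \tr\bigl(\psi(a)^*\psi(a)\bigr) \leq \tr\bigl(\psi(a^*a)\bigr) = \tr(a^*a) = \|a\|_{\operatorname{HS}}^2.
\end{equation}
Taking the supremum over $a\neq 0$ gives $\|\psi\|_{\operatorname{HS}}\leq 1$.

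Next I would handle the characterization of $\mcM_\psi$. For the forward direction, if $a\in\mcM_\psi$ then Theorem \ref{Thm:Choi} gives $\psi(a^*a) = \psi(a)^*\psi(a)$, so the above inequality is tight, yielding $\|\psi(a)\|_{\operatorname{HS}} = \|a\|_{\operatorname{HS}}$. For the reverse direction, if $\|\psi(a)\|_{\operatorname{HS}} = \|a\|_{\operatorname{HS}}$, then $\tr\bigl(\psi(a^*a)-\psi(a)^*\psi(a)\bigr) = 0$; since this matrix is positive semidefinite, it must vanish, giving $\psi(a^*a)=\psi(a)^*\psi(a)$. To invoke Theorem \ref{Thm:Choi} I also need the companion identity $\psi(aa^*)=\psi(a)\psi(a)^*$. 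This is where one must be careful: I would use that $\psi$ is positive, so $\psi(a^*)=\psi(a)^*$, together with the $*$-invariance of the Hilbert--Schmidt norm to get $\|\psi(a^*)\|_{\operatorname{HS}}=\|\psi(a)\|_{\operatorname{HS}}=\|a\|_{\operatorname{HS}}=\|a^*\|_{\operatorname{HS}}$, and then rerun the preceding argument with $a^*$ in place of $a$ to obtain $\psi(aa^*) = \psi(a)\psi(a)^*$. Both Choi conditions hold, so $a\in\mcM_\psi$.

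Finally, the inclusion for $\mcM_\psi^\perp\setminus\{0\}$ follows immediately: since $\mcM_\psi$ is a subspace of $\matrices$, one has $\mcM_\psi\cap\mcM_\psi^\perp = \{0\}$, so any nonzero $a\in\mcM_\psi^\perp$ lies outside $\mcM_\psi$, and by the first part $\|\psi(a)\|_{\operatorname{HS}}\neq\|a\|_{\operatorname{HS}}$; combined with the contraction inequality already proved, this forces strict inequality $\|\psi(a)\|_{\operatorname{HS}}<\|a\|_{\operatorname{HS}}$.

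There is no real obstacle here; the only subtle point is remembering that Choi's theorem as stated requires \emph{both} $\psi(a^*a)=\psi(a)^*\psi(a)$ and $\psi(aa^*)=\psi(a)\psi(a)^*$, and that the second identity comes free from the first by exploiting $*$-invariance of $\|\cdot\|_{\operatorname{HS}}$ and the identity $\psi(a^*)=\psi(a)^*$ (which follows from positivity of $\psi$).
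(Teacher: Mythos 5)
Your proof is correct and follows essentially the same route as the paper's: Schwarz inequality plus trace preservation to get the contraction and the equality criterion, Choi's theorem for both directions of the characterization of $\mcM_\psi$, the $a^*$ trick (via $\psi(a^*)=\psi(a)^*$ and $*$-invariance of $\hsnorm{\cdot}$) to obtain the second Choi condition, and $\mcM_\psi\cap\mcM_\psi^\perp=\{0\}$ for the strict inequality on $\mcM_\psi^\perp\setminus\{0\}$. No gaps to report.
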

\begin{proof}
    Assume $a\in\mcM_\psi$. 
    Then from Theorem \ref{Thm:Choi}, $\psi(a^*a) = \psi(a)^*\psi(a)$, so since $\psi$ is trace preserving, we have that $\|a\|_{\operatorname{HS}} = \|\psi(a)\|_{\operatorname{HS}}$. 
    Conversely, suppose $\|a\|_{\operatorname{HS}} = \|\psi(a)\|_{\operatorname{HS}}$.
    By the Schwarz inequality, we know that $\psi(a^*a) - \psi(a)^*\psi(a)\geq 0,$ so since
    \begin{equation}
        \tr{\psi(a^*a) - \psi(a)^*\psi(a)}
        =
        \|a\|_{\operatorname{HS}}^2 - \|\psi(a)\|_{\operatorname{HS}}^2
        =
        0,
    \end{equation}
    we may conclude that $\psi(a^*a) - \psi(a)^*\psi(a) = 0$.
    Because $\|b\|_{\operatorname{HS}} = \|b^*\|_{\operatorname{HS}}$ for any $b\in\matrices$, we may argue the same way to conclude that $\psi(aa^*) = \psi(a)\psi(a)^*$, so by Theorem \ref{Thm:Choi}, we have that $a\in\mcM_{\psi}$. 
    To see the claim about $\mcM_{\psi}^\perp
        \setminus\{0\}$, note that by the Schwarz inequality, we have that 
    \begin{equation}
        \|a\|_{\operatorname{HS}}^2 = \tr{a^*a}
        =
        \tr{\psi(a^*a)}
        \geq 
        \tr{\psi(a)^*\psi(a)}
        =
        \|\psi(a)\|_{\operatorname{HS}}
    \end{equation}
    for any $a\in\matrices$. 
    So, since $\mcM_{\psi}^\perp\cap \mcM_{\psi} = \set{0}$, $a\in \mcM_{\psi}^\perp\setminus\{0\}$ implies $\|a\|_{\operatorname{HS}} > \|\psi(a)\|_{\operatorname{HS}}$.
    It is immediate from the above that $\hsnorm{\psi}\leq 1$ for any $\psi\in\bcp$, so the proof is concluded. 
\end{proof}
\begin{remark}
    It is clear that $\psi\in\bcp$ satisfies $\hsnorm{\psi}\leq 1$ directly from the Schwarz inequality, since the trace-preservation of $\psi$ yields $\tr{\psi(a)^*\psi(a)}\leq \tr{\psi(a^*a)} = \tr{a^*a}$, but we include the statement of this fact (and its alternative proof) in the above lemma for succinctness. 
\end{remark}
Another basic fact we use in this work is the following, which is proved in \cite{Rahaman2017MultiplicativeChannels}, but for completeness's sake we include a proof here also. 
\begin{lemma}\label{Lem:Comp of mult doms}
    For $\psi_1, \psi_2\in\bcp$, $\mcM_{\psi_2\circ\psi_1} = \set{a\in\mcM_{\psi_1}\,\,:\,\, \psi_1(a)\in\mcM_{\psi_2}}$. 
    Thus, $\mcM_{\psi_2\circ\psi_1}\subseteq \mcM_{\psi_1}$. 
\end{lemma}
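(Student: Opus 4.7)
The plan is to exploit the characterization of the multiplicative domain of a bcp map in terms of Hilbert--Schmidt norm preservation, which is the content of Lemma \ref{Lem:Mult dom of bcp map}. Since both $\psi_1$ and $\psi_2$ are bcp, this lemma says that $a \in \mcM_{\psi_1}$ if and only if $\|\psi_1(a)\|_{\operatorname{HS}} = \|a\|_{\operatorname{HS}}$, and similarly for $\psi_2$ and $\psi_2\circ\psi_1$ (the latter composition is again bcp as the composition of bcp maps is bcp, which follows directly from the definitions). The equality $\mcM_{\psi_2\circ\psi_1} = \{a\in\mcM_{\psi_1} : \psi_1(a)\in\mcM_{\psi_2}\}$ can then be established simply by chaining these norm equalities.

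For the inclusion $\supseteq$, I would take $a\in\mcM_{\psi_1}$ with $\psi_1(a)\in\mcM_{\psi_2}$, apply Lemma \ref{Lem:Mult dom of bcp map} twice to get
\begin{equation}
\|a\|_{\operatorname{HS}} = \|\psi_1(a)\|_{\operatorname{HS}} = \|\psi_2(\psi_1(a))\|_{\operatorname{HS}},
\end{equation}
and then apply Lemma \ref{Lem:Mult dom of bcp map} once more to conclude $a\in\mcM_{\psi_2\circ\psi_1}$.

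For the harder inclusion $\subseteq$, I would take $a\in\mcM_{\psi_2\circ\psi_1}$, so that $\|a\|_{\operatorname{HS}} = \|(\psi_2\circ\psi_1)(a)\|_{\operatorname{HS}}$, and use the fact (also from Lemma \ref{Lem:Mult dom of bcp map}) that $\hsnorm{\psi_1}\leq 1$ and $\hsnorm{\psi_2}\leq 1$ to produce the sandwich
\begin{equation}
\|a\|_{\operatorname{HS}} \;\geq\; \|\psi_1(a)\|_{\operatorname{HS}} \;\geq\; \|\psi_2(\psi_1(a))\|_{\operatorname{HS}} \;=\; \|a\|_{\operatorname{HS}}.
\end{equation}
Both intermediate inequalities are therefore equalities, which via Lemma \ref{Lem:Mult dom of bcp map} gives $a\in\mcM_{\psi_1}$ from the first and $\psi_1(a)\in\mcM_{\psi_2}$ from the second. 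The stated inclusion $\mcM_{\psi_2\circ\psi_1}\subseteq\mcM_{\psi_1}$ is then immediate.

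I do not foresee a serious obstacle here: the result is essentially a bookkeeping consequence of the norm-preservation characterization already proved. The only thing to keep in mind is that the argument depends crucially on \emph{both} factors being bcp (so that Lemma \ref{Lem:Mult dom of bcp map} applies to each individually and to the composition), and that it would fail for more general unital completely positive maps that are not trace-preserving, since then the contractivity $\hsnorm{\psi}\leq 1$ is not automatic.
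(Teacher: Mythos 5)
Your proof is correct and follows essentially the same route as the paper: both arguments sandwich $\hsnorm{\psi_2\circ\psi_1(a)}\leq\hsnorm{\psi_1(a)}\leq\hsnorm{a}$ (contractivity via the Schwarz inequality and trace preservation) and then invoke the norm-preservation characterization of Lemma \ref{Lem:Mult dom of bcp map} to turn the equality at the endpoints into membership in $\mcM_{\psi_1}$ and $\mcM_{\psi_2}$, and conversely. No gaps to report.
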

\begin{proof}
    Note that the composition of bcp maps is bcp. 
    In general, for any $a\in\matrices$, by the Schwarz inequality and the trace preservation of $\psi_1$ and $\psi_2$, we have the chain of inequalities
    \begin{equation}\label{Eqn:Comp_of_mult_doms_eqn_1}
        \tr{\psi_2\circ\psi_1(a)^*\psi_2\circ\psi_1(a)}
        \leq 
        \tr{\psi_1(a)^*\psi_1(a)}
        \leq 
        \tr{a^*a},
    \end{equation}
    i.e., $\hsnorm{\psi_2\circ\psi_1(a)}\leq \hsnorm{\psi_1(a)}\leq \hsnorm{a}$.
    So, by Lemma \ref{Lem:Mult dom of bcp map}, because $a\in\mcM_{\psi_2\circ\psi_1}$ if and only if $\hsnorm{a} = \hsnorm{\psi_2\circ\psi_1(a)}$, the inequalities in (\ref{Eqn:Comp_of_mult_doms_eqn_1}) become equalities, and another application of Lemma \ref{Lem:Mult dom of bcp map} concludes our proof. 
\end{proof}
As a result of this lemma, for any sequence $\Psi = \seq{\psi_n}_{n\geq 0}\subset\bcp$, we have the decreasing chain of $C^*$-algebras
\begin{equation}
    \mcM_{\psi_0}\supseteq 
    \mcM_{\psi_1\circ\psi_0}\supseteq\cdots\supseteq 
    \mcM_{\psi_n\circ\cdots\circ\psi_0}\supseteq\cdots,
\end{equation}
which, because $\matrices$ is a finite-dimensional $\mbC$-vector space, shows that there is $N\in\mbN$ for which
\begin{equation}\label{Eqn:Stab_Mult_Dom_defn}
    \bigcap_{n\in\mbN}
    \mcM_{\Psi^{(n)}}
        =
    \mcM_{\Psi^{(N)}},
\end{equation}
where $\Psi^{(n)} = \psi_{n-1}\circ\cdots\circ\psi_0$ for all $n\in\mbN$. 
\begin{definition}[Stabilized multiplicative domain and multiplicative index]
    For a sequence $\Psi = \seq{\psi_n}_{n\geq 0}$ with $\psi_n\in\bcp$ for all $n$, we call the intersection
    \begin{equation}\label{Eqn:Stabl_mult_dom}
        \bigcap_{n\in\mbN}
    \mcM_{\Psi^{(n)}}
    \end{equation}
    the stabilized multiplicative domain of $\Psi$ and we write $\StabMultDom{\Psi}$ to denote it. 
    If $\Psi = \seq{\psi_n}_{n\in\mbZ}$ is a bi-infinite sequence, we write $\StabMultDom{\Psi}$ to denote the same intersection as in (\ref{Eqn:Stabl_mult_dom}).
    In the special case that $\Psi_0 = \seq{\psi^n}_{n\in\mbN}$ for a given $\psi\in\bcp$, we write $\StabMultDom{\psi}$ to denote $\StabMultDom{\Psi_0}$.
    In general, we call the minimal number $N\in\mbN$ such that the equality (\ref{Eqn:Stab_Mult_Dom_defn}) holds the \textit{multiplicative index} of $\Psi$. 
\end{definition}
Before we move to the random situation described in the introduction, let us prove a basic lemma in this generality.
Given a bi-infinite sequence $\Psi = \seq{\psi_n}_{n\in\mbZ}$, we let $S(\Psi) = \seq{\psi_{n+1}}_{n\in\mbZ}$ be the shifted sequence. 
\begin{lemma}\label{Lem:inhomo_mult_cores}
    Let $\Psi = \seq{\psi_n}_{n\in\mbZ}$ be a bi-infinite sequence of bcp maps. 
    Then $\psi_0\seq{\StabMultDom{\Psi}} \subseteq \StabMultDom{S(\Psi)}$.
    Moreover, $\psi_0:\StabMultDom{\Psi}\to \StabMultDom{S(\Psi)}$ defines a unital $*$-homomorphism that is isometric with respect to both $\infnorm{\cdot}$ and $\hsnorm{\cdot}$.  
\end{lemma}
\begin{proof}
    It is clear from the definition of $\StabMultDom{\Psi}$ that $\psi_0\vert_{\StabMultDom{\Psi}}$ acts $*$-homomorphically, so we just need to prove the other claims. 
    We begin by showing that $\psi_0(\StabMultDom{\Psi})\subseteq \StabMultDom{S(\Psi)}$.
    Towards this end, let $a\in\StabMultDom{\Psi}$. 
    Then for all $n\in\mbN$, $a\in\mcM_{\Psi^{(n+1)}}\cap\mcM_{\psi}$. 
    Noting that $\Psi^{(n+1)} = \seq{S(\Psi)}^{(n)}\circ\psi_0$, we conclude from Lemma \ref{Lem:Mult dom of bcp map} that 
    \begin{equation}
        \hsnorm{\seq{S(\Psi)}^{(n)}\seq{\psi_0(a)}}
        =
        \hsnorm{\Psi^{(n+1)}(a)}
        =
        \hsnorm{a}
        =
        \hsnorm{\psi_0(a)}
    \end{equation}
    for all $a$ and $n\in\mbN$. 
    So, again by Lemma \ref{Lem:Mult dom of bcp map}, we have $\phi_0(a)\in \StabMultDom{S(\Psi)}$. 
    For the part about $\psi_0$ acting isometrically, note that $\hsnorm{\phi_0(a)} = \hsnorm{a}$ for any $a\in\StabMultDom{\Psi}\subseteq\mcM_{\phi_0}$ by Lemma \ref{Lem:Mult dom of bcp map}---hence $\phi_0\vert_{\StabMultDom{\Psi}}$ is isometric with respect to $\hsnorm{\cdot}$---and consequently $\phi_0\vert_{\StabMultDom{\Psi}}$ is injective.
    So, since $\psi_0\vert_{\StabMultDom{\Psi}}$ is a $*$-homomorphism between $C^*$-algebras (see, e.g., \cite[Theorem 2.1.7]{Murphy2007C-AlgebrasTheory}), $\infnorm{\phi_0(a)}\leq \infnorm{a}$ for all $a\in\StabMultDom{\Psi}$. 
    However, again because $\phi_0\vert_{\StabMultDom{\Psi}}$ is a $*$-homomorphism, $\phi_0\seq{\StabMultDom{\Psi}}$ is a $C^*$-algebra (see \cite[Theorem 3.1.6]{Murphy2007C-AlgebrasTheory}), so because $\phi_0^{-1}:\phi_0\seq{\StabMultDom{\Psi}}\to \StabMultDom{\Psi}$ is a $*$-homomorphism, so again by \cite[Theorem 2.1.7]{Murphy2007C-AlgebrasTheory}, we conclude that $\infnorm{\phi_0(a)}\geq \infnorm{a}$ for all $a\in\StabMultDom{\Psi}$, hence $\phi_0\vert_{\StabMultDom{\Psi}}$ is a $\infnorm{\cdot}$-isometry, as claimed. 
\end{proof}
\subsection{Random multiplicative domains and the multiplicative index}
In this work, we consider $\Psi = \seq{\psi_n}_{n\in\mbZ}$ where the $\psi_n$ satisfy global regularity conditions described by an ergodic dynamical system, and we study the associated multiplicative domains and indices. 
Let us make this precise. 
Let $\seq{\Omega, \mcF, \mu}$ be a standard probability space, and let $T:\Omega\to\Omega$ be an invertible ergodic measure preserving transformation, which means that $\mu \seq{T^{-1}(E)} = \mu(E)$ for all $E\in\mcF$ and that $\mu$ satisfies the ergodic hypothesis for $T$, i.e., 
\begin{equation}
    \text{for all $E\in\mcF$ with $T^{-1}(E) = E$ it is necessary that $\mu[E]\in\set{0, 1}$.}
\end{equation}
Note that the invertibility of $T$ implies that $T^{-1}$ is also an ergodic measure preserving transformation. 
Let us state some common useful reformulations of ergodicity that we will make free use of in the following. 
\begin{prop}[Equivalent formulations of ergodicity]\label{Prop:Basics_of_ergodicity}
    Let $\seq{\Omega, \mcF, \mu}$ be a probability space and let $T:\Omega\to\Omega$ be an invertible measure preserving transformation. 
    The following are equivalent. 
    \begin{enumerate}[label = (\alph*)]
        \item $T$ is ergodic.

        \item $T^{-1}$ is ergodic.

        \item For any $E\in\mcF$ with $T^{-1}(E)\subset E$, $\mu[E]\in\set{0, 1}$.

        \item For any measurable function $f:\Omega\to\mbR$, the almost sure equality $f\circ T = f$ implies that $f$ is constant. 
    \end{enumerate}
\end{prop}
\begin{proof}
    See \cite[Ch. 4]{Viana2015FoundationsTheory} or \cite[Ch. 1]{Walters1982AnTheory}.
\end{proof}
Let $\phi:\Omega\to\bcp$ define a $\mcF$-measurable bcp map. 
For any $n\in\mbN$, we let $\Phin:\Omega\to\bops{\matrices}$ denote the map defined by 
\begin{equation}
\Phin_\omega:=
\phi_{T^{n-1}(\omega)}\circ\cdots\circ\phi_{\omega}
\end{equation}
Because $\phi$ is bcp almost surely, for all $n\in\mbZ$ the map $\Phin$ is bcp almost surely, since the composition of bcp maps is bcp.
We let $\Phi_\omega$ denote the sequence $\seq{\phi_{n; \omega}}_{n\in\mbN}$, and we let $\StabMultDom{\Phi; \omega}$ denote the stabilized multiplicative domain associated to $\Phi_\omega$. 
We have thus defined an $\omega$-dependent subalgebra of $\matrices$, and it is necessary to understand the sense in which this algebra depends measurably on $\omega$. 
Letting $\mbG\seq{\matrices}$ denote the Grassmannian of $\matrices$, i.e., 
\begin{equation}
    \mbG\seq{\matrices} 
        =
    \set{W\subseteq\matrices\,\,:\,\, 
    \text{$W$ is a $\mbC$-linear subspace of $\matrices$}},
\end{equation}
then equipping $\mbG\seq{\matrices}$ with the Grassmannian metric 
\begin{equation}
    d_{\mbG\seq{\matrices}}\seq{W, U}
    =
     \max 
    \Bigg( 
        \sup_{w\in W_1}\inf_{u\in U} \hsnorm{w - u},
        \sup_{u\in U_1}\inf_{w\in W}\hsnorm{w - u}
    \Bigg),
\end{equation}
where for a subset $S\subseteq\matrices$, $S_1 = \set{s\in S\,\,:\,\, \hsnorm{s} = 1}$, we endow $\mbG\seq{\matrices}$ with a Borel $\sigma$-algebra. 
\begin{definition}[Subspace-valued random variable]
    We say a map to or from $\mbG\seq{\matrices}$ is measurable if it is measurable with respect to the Borel $\sigma$-algebra on $\mbG\seq{\matrices}$ defined by $d_{\mbG\seq{\matrices}}$.
    We call a measurable map $\mcV:\Omega\to\mbG\seq{\matrices}$ a subspace-valued random variable. 
\end{definition}
In Appendix \ref{App:Grassmannian}, we prove the following technical fact. 
\begin{lemma}\label{Lem:Meas of the Ms} 
    The following maps are measurable. 
    \begin{enumerate}[label = (\alph*)]
        \item The map $\mcM:\bcp\to\mbG\seq{\matrices}$ defined by $\psi\mapsto\mcM_\psi$. 

        \item The map $\StabMultDom{\phantom{\cdot}}:\bcp\to\mbG\seq{\matrices}$ defined by $\psi\mapsto\StabMultDom{\psi}$.

        \item The map $\StabMultDom{\Phi}:\Omega\to\mbG\seq{\matrices}$ defined by $\omega\mapsto\StabMultDom{\Phi; \omega}$. 

        \item The map $\Omega\to\mbN$ defined by $\omega\mapsto \operatorname{dim}\StabMultDom{\Phi; \omega}$. 
    \end{enumerate}
\end{lemma}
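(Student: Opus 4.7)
The plan is to exploit Lemma \ref{Lem:Mult dom of bcp map} to identify $\mcM_\psi$ with the kernel of a continuously-varying positive semidefinite operator on $\matrices$, and then reduce each of the four measurability claims to a single principle: that the kernel of a Borel-measurable family of linear operators defines a Borel-measurable subspace-valued map.

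For part (a), I would note that because $\hsnorm{\psi} \leq 1$ (Lemma \ref{Lem:Mult dom of bcp map}), the operator $\operatorname{Id}_{\matrices} - \psi^*\psi \in \superops$ is positive semidefinite, so its associated quadratic form $a \mapsto \hsnorm{a}^2 - \hsnorm{\psi(a)}^2$ vanishes on a vector if and only if the operator itself does. Combining with the norm-characterization of $\mcM_\psi$ in Lemma \ref{Lem:Mult dom of bcp map} gives the key identity
\begin{equation*}
    \mcM_\psi \;=\; \ker\bigl(\operatorname{Id}_{\matrices} - \psi^*\psi\bigr).
\end{equation*}
Since $\psi \mapsto \operatorname{Id}_{\matrices} - \psi^*\psi$ is continuous from $\bcp$ to $\superops$, it remains to show that the kernel operation is Borel measurable from $\superops$ to $\mbG(\matrices)$. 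I would do this by identifying $\mbG(\matrices)$ with the set of Hermitian projections in $\superops$ via $V \mapsto P_V$, stratifying $\superops$ by the lower semicontinuous rank function (so each stratum is measurable), and observing that on each fixed-rank stratum the kernel projection depends continuously on the operator; piecing the strata together gives Borel measurability overall.

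Parts (b) and (c) then follow by composition and countable intersection. Since $\matrices$ is finite-dimensional the descending chain stabilizes, so
\begin{equation*}
    \StabMultDom{\psi} = \bigcap_{n \in \mbN} \mcM_{\psi^n}, \qquad \StabMultDom{\Phi; \omega} = \bigcap_{n \in \mbN} \mcM_{\Phin_\omega}.
\end{equation*}
Continuity of $\psi \mapsto \psi^n$ combined with part (a) yields measurability of each summand for (b); for (c), continuity of composition on $\bcp$ together with measurability of $\phi$ and of $T$ gives that $\omega \mapsto \Phin_\omega$ is measurable, and part (a) then handles each $\mcM_{\Phin_\omega}$. The remaining step is measurability of a countable intersection of measurable subspace-valued maps, which I would establish via the associated projections using the von Neumann-type limit $P_{V_1 \cap \cdots \cap V_k} = \lim_{m \to \infty}(P_{V_1} \cdots P_{V_k})^m$ (finitely many suffice, by finite-dimensionality). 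Part (d) is then immediate from (c): the dimension function on $\mbG(\matrices)$ is locally constant, since if $V_n \to V$ in the Grassmannian metric but $\dim V_n > \dim V$, one can extract a unit vector in $V_n$ that lies at distance bounded below from the unit ball of $V$, contradicting Grassmannian convergence.

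The main obstacle is squarely the foundational measurability of the kernel operation on continuously-varying operators: the kernel projection is continuous on each stratum of fixed rank but discontinuous across strata, so one cannot escape the measure-theoretic argument that pieces together measurable data across these strata. This, together with the intersection step and the basic metric-topological properties of $\mbG(\matrices)$, is presumably the content of Appendix \ref{App:Meas}.
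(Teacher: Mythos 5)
Your proof is correct, but it takes a genuinely different route from the paper's. Your key identity $\mcM_\psi=\ker\seq{\operatorname{Id}_{\matrices}-\psi^*\psi}$ is valid: by Lemma \ref{Lem:Mult dom of bcp map} one has $\hsnorm{\psi}\leq 1$, so $\operatorname{Id}_{\matrices}-\psi^*\psi$ is a positive semidefinite operator on $\seq{\matrices,\innerhs{\cdot}{\cdot}}$ whose quadratic form vanishes exactly on its kernel, and the norm characterization of $\mcM_\psi$ does the rest; your rank-stratification argument for Borel measurability of the kernel map is also sound. The paper instead argues through random closed sets: Lemma \ref{Lem:App:Mult dom is Fell meas} shows $\psi\mapsto\mcM_\psi$ is Fell-measurable by proving that $\mcM^{-1}\seq{\mfF_K}$ is closed (a compactness argument resting on the same norm characterization you invoke), Lemma \ref{Lem:App:Measurability of mult domains} transfers Fell measurability to measurability for the Grassmannian metric, and parts (b)--(d) then follow because Fell-measurable random closed sets are stable under countable intersections (Corollary \ref{Cor:App:All Ms are Grassmannian measurable}). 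What each approach buys: the Fell route never has to confront the discontinuity of the kernel map across rank strata and gets the countable-intersection step for free from random-set theory, whereas your route is more elementary and self-contained linear algebra, and the reformulation $\mcM_\psi=\ker\seq{\operatorname{Id}_{\matrices}-\psi^*\psi}$ is an attractive observation in its own right. Two details to make explicit if you write this up: for (b) and (c) the chains are decreasing, so it is cleanest to note that the projection onto $\StabMultDom{\psi}$ (resp. $\StabMultDom{\Phi; \omega}$) is the pointwise, eventually constant, limit as $k\to\infty$ of the measurable maps given by the projections onto $\mcM_{\psi^k}$ (resp. $\mcM_{\Phi^{(k)}_\omega}$), and pointwise limits of measurable metric-space-valued maps are measurable---this sidesteps the point-dependence of the stabilization index hidden in ``finitely many suffice''; and for (d), local constancy of the dimension uses both halves of the Grassmannian metric (a unit vector of $V_n\cap V^\perp$ rules out $\dim V_n>\dim V$, and the symmetric argument rules out the reverse strict inequality).
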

\begin{proof}
    See Corollary \ref{Cor:App:All Ms are Grassmannian measurable} for proofs of (a), (b), and (c), and note that (d) follows immediately from (c), since the dimension map $\mbG\seq{\matrices}\to\mbN$ is clearly measurable (and in fact continuous) upon identification of $\mbG\seq{\matrices}$ with the set of orthogonal projections in $\scrL$.
\end{proof}
With Lemma \ref{Lem:Meas of the Ms} in hand, we may now begin to prove our main results concerning $\StabMultDom{\Phi}$. 
First, let us notice that, since $\Phi_{T(\omega)} = S(\Phi_\omega)$, Lemma \ref{Lem:inhomo_mult_cores} implies the following for $\Phi$. 
\begin{lemma}\label{Lem:phi omega is C* hom of local cores}
    For almost every $\omega\in\Omega$,  $\phi_\omega\!\seq{\StabMultDom{\Phi; \omega}} \subseteq \StabMultDom{\Phi; T(\omega)}$.
    Moreover, $\phi_\omega:\StabMultDom{\Phi; \omega}\to \StabMultDom{\Phi; T(\omega)}$ defines a unital $*$-homomorphism that is isometric with respect to both $\infnorm{\cdot}$ and $\hsnorm{\cdot}$.  
\end{lemma}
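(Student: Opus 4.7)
The plan is to exploit the cocycle identity $\Phi^{(n+1)}_\omega = \Phi^{(n)}_{T(\omega)} \circ \phi_\omega$, which holds deterministically for every $\omega\in\Omega$ and every $n\geq 1$, and combine it with Lemma \ref{Lem:Comp of mult doms}. Fix $\omega\in\Omega$ and take $a\in\StabMultDom{\Phi;\omega}$. By definition of the stabilized multiplicative domain, $a\in\mcM_{\Phi^{(n+1)}_\omega}$ for every $n\geq 1$. Applying Lemma \ref{Lem:Comp of mult doms} with $\psi_1=\phi_\omega$ and $\psi_2=\Phi^{(n)}_{T(\omega)}$ yields $\phi_\omega(a)\in\mcM_{\Phi^{(n)}_{T(\omega)}}$. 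Intersecting over all $n\geq 1$ gives
\begin{equation}
\phi_\omega(a)\in\bigcap_{n\geq 1}\mcM_{\Phi^{(n)}_{T(\omega)}}=\StabMultDom{\Phi;T(\omega)},
\end{equation}
establishing the inclusion $\phi_\omega\seq{\StabMultDom{\Phi;\omega}}\subseteq\StabMultDom{\Phi;T(\omega)}$.

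For the algebraic structure, observe that taking $n=1$ in the intersection defining $\StabMultDom{\Phi;\omega}$ yields $\StabMultDom{\Phi;\omega}\subseteq\mcM_{\phi_\omega}$. Since $\phi_\omega$ acts as a unital $*$-homomorphism on $\mcM_{\phi_\omega}$ (a standard property of multiplicative domains of unital completely positive maps), its restriction to the smaller subalgebra $\StabMultDom{\Phi;\omega}$ is a unital $*$-homomorphism into $\StabMultDom{\Phi;T(\omega)}$ by the inclusion just proved.

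For the isometry statements, the Hilbert-Schmidt case follows instantly from Lemma \ref{Lem:Mult dom of bcp map}: elements of $\mcM_{\phi_\omega}$ (and therefore of $\StabMultDom{\Phi;\omega}$) satisfy $\hsnorm{\phi_\omega(a)}=\hsnorm{a}$. This in turn forces $\phi_\omega|_{\StabMultDom{\Phi;\omega}}$ to be injective, and since any injective $*$-homomorphism between $C^*$-algebras is automatically isometric with respect to the $C^*$-norm, the $\infnorm{\cdot}$-isometry follows.

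I do not anticipate a real obstacle here; the only subtlety is matching the indexing convention $\Phi^{(n)}_\omega = \phi_{T^{n-1}(\omega)}\circ\cdots\circ\phi_\omega$ against the form $\psi_2\circ\psi_1$ required by Lemma \ref{Lem:Comp of mult doms}, and keeping track that the intersection over $n\geq 1$ on one side corresponds to the intersection over $n\geq 1$ on the other side after peeling off one factor of $\phi_\omega$. Once the cocycle identity is in hand, everything is a one-line consequence of the prior lemmas.
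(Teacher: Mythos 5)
Your proposal is correct and follows essentially the same route as the paper: the cocycle identity $\Phi^{(n+1)}_\omega=\Phi^{(n)}_{T(\omega)}\circ\phi_\omega$, the inclusion into $\StabMultDom{\Phi;T(\omega)}$, the Hilbert--Schmidt isometry from Lemma \ref{Lem:Mult dom of bcp map}, and the operator-norm isometry from injectivity of a $*$-homomorphism between $C^*$-algebras. The only cosmetic difference is that you invoke Lemma \ref{Lem:Comp of mult doms} to get $\phi_\omega(a)\in\mcM_{\Phi^{(n)}_{T(\omega)}}$, whereas the paper re-runs the underlying Hilbert--Schmidt norm computation directly; the two are interchangeable.
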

As a corollary, we are able to conclude that the dimension of $\LocalMultCore{\omega}$ is actually deterministic: 
\begin{cor}\label{Cor:Orbit dimension constant on orbits}
    There is $\delta\in\mbN$ such that $\operatorname{dim}\LocalMultCore{\omega} = \delta$ for almost every $\omega\in\Omega$. 
\end{cor}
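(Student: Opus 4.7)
The plan is to combine the measurability and monotonicity-under-$T$ of the dimension function with ergodicity. By Lemma \ref{Lem:phi omega is C* hom of local cores}, for every $\omega\in\Omega$ the map $\phi_\omega$ restricts to an injective unital $*$-homomorphism $\LocalMultCore{\omega}\to\LocalMultCore{T(\omega)}$; hence
\begin{equation}
f(\omega) := \operatorname{dim}\LocalMultCore{\omega}
\;\leq\;
\operatorname{dim}\LocalMultCore{T(\omega)}
\;=\; f(T(\omega))
\end{equation}
pointwise in $\omega$. The function $f$ is measurable by Lemma \ref{Lem:Meas of the Ms}(d) and is bounded, taking values in $\{0, 1, \dots, d^2\}$.

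Next I would use the measure-preserving property of $T$ to upgrade the pointwise inequality $f \leq f\circ T$ to an almost-sure equality: since $f$ is bounded, it is integrable, and $\int_\Omega f\circ T\, \dee\mu = \int_\Omega f\, \dee\mu$ because $T$ preserves $\mu$, so $\int_\Omega (f\circ T - f)\, \dee\mu = 0$. As the integrand is nonnegative, $f\circ T = f$ almost surely.

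Finally, $f$ is an almost-surely $T$-invariant measurable function on an ergodic system, so it is almost surely constant. Concretely, for each $c\in\mbR$ the sublevel set $\set{f\leq c}$ agrees $\mu$-a.e. with its preimage $T^{-1}\set{f\leq c}$, so ergodicity forces $\mu\seq{f\leq c}\in\set{0,1}$; varying $c$ over the finitely many values $f$ can take pins down a single $\delta\in\set{0,1,\dots,d^2}$ such that $f = \delta$ almost surely. This $\delta$ is the desired constant. There is no real obstacle here — everything reduces to the standard measure-preserving trick of turning a monotone relation into an almost-sure equality, followed by ergodicity.
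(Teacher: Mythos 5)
Your proof is correct, and it uses the same skeleton as the paper's: injectivity of $\phi_\omega$ on $\LocalMultCore{\omega}$ (Lemma \ref{Lem:phi omega is C* hom of local cores}) gives the pointwise monotonicity $\dim\LocalMultCore{\omega}\leq\dim\LocalMultCore{T(\omega)}$, measurability comes from Lemma \ref{Lem:Meas of the Ms}, and ergodicity finishes. The only place you diverge is in how the monotone relation is upgraded to an almost-sure equality: the paper applies Poincar\'e recurrence to each level set $E_k=\set{\omega : \dim\LocalMultCore{\omega}=k}$ and closes the chain $k\leq\dim\LocalMultCore{T(\omega)}\leq\cdots\leq\dim\LocalMultCore{T^N(\omega)}=k$ along a return time, whereas you integrate: since $f=\dim\LocalMultCore{(\cdot)}$ is bounded and $T$ preserves $\mu$, $\int_\Omega (f\circ T-f)\,\dee\mu=0$ with a nonnegative integrand, so $f\circ T=f$ almost surely. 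Both are legitimate; your integral trick is arguably the more economical route (it needs only boundedness and measure preservation, no recurrence), while the paper's recurrence argument has the mild virtue of working level set by level set without invoking integrability. Your final step (an a.e.\ $T$-invariant measurable function is a.s.\ constant under ergodicity) is exactly the form of the ergodic hypothesis the paper records in Section 2, so no gap arises from using a.e.\ invariance rather than exact invariance.
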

\begin{proof}
    Let $k\in\mbN$ and let $E_k$ be the set 
    \begin{equation}
        E_k
            =
        \set{\omega\in\Omega\,\,:\,\, \operatorname{dim}\LocalMultCore{\omega} = k}.
    \end{equation}
    By Lemma \ref{Lem:Meas of the Ms}, $E_k$ is measurable. 
    Thus, by Poincar\'e recurrence, for almost every $\omega\in E_k$, there is $N\in\mbN$ such that $T^N(\omega)\in E_k$. 
    So, by the injectivity implied by Lemma \ref{Lem:phi omega is C* hom of local cores}, we conclude that 
    \begin{equation}
       k= \operatorname{dim}\LocalMultCore{\omega}
        \leq 
        \operatorname{dim}\LocalMultCore{T(\omega)}
        \leq 
        \cdots
        \leq
        \operatorname{dim}\LocalMultCore{T^N(\omega)}
        =
        k,
    \end{equation}
    which shows that $\operatorname{dim}\LocalMultCore{\omega} = \operatorname{dim}\LocalMultCore{T(\omega)}$ for almost every $\omega\in E_k$. 
    That is, $T(E_k)\subseteq E_k$. 
    Therefore, by Proposition \ref{Prop:Basics_of_ergodicity}, we see that $\mu(E_\delta) = 1$, which concludes the proof. 
\end{proof}
Henceforth, we shall assume that all statements made about $\omega\in\Omega$ refer to those $\omega$ such that 
\begin{equation}
    \operatorname{dim}\StabMultDom{\Phi; T^k(\omega)} = \delta
\end{equation}
for all $k\in\mbZ$; this is a measurable probability one set by the above corollary as $T$ and $T^{-1}$ are measure preserving.
\begin{cor}\label{Cor:Unitary}
    For almost every $\omega\in\Omega$, the map $\phi_\omega:\LocalMultCore{\omega}\to\LocalMultCore{T(\omega)}$ defines a unital $*$-isomorphism that is isometric with respect to both $\infnorm{\cdot}$ and $\hsnorm{\cdot}$. 
    Moreover, $\phi_\omega^*\!\seq{\LocalMultCore{T(\omega)}} = \LocalMultCore{\omega}$, and $\phi_\omega^*\vert_{\LocalMultCore{T(\omega)}}$ is the inverse of $\phi_\omega\vert_{\LocalMultCore{\omega}}$.
\end{cor}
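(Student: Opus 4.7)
The plan is to combine Lemma \ref{Lem:phi omega is C* hom of local cores} with Corollary \ref{Cor:Orbit dimension constant on orbits} to promote the isometric $*$-homomorphism $\phi_\omega\vert_{\LocalMultCore{\omega}}$ to a $*$-isomorphism, and then to extract the adjoint identity $\phi_\omega^*\circ\phi_\omega = \operatorname{Id}$ on $\LocalMultCore{\omega}$ from the observation that $\phi_\omega^*$ is itself bcp, hence Hilbert-Schmidt contractive by Lemma \ref{Lem:Mult dom of bcp map}.

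For the first claim, Lemma \ref{Lem:phi omega is C* hom of local cores} already gives that $\phi_\omega\vert_{\LocalMultCore{\omega}}$ is a unital $*$-homomorphism into $\LocalMultCore{T(\omega)}$ that is isometric---and therefore injective---with respect to both $\infnorm{\cdot}$ and $\hsnorm{\cdot}$. On the full-measure set of Corollary \ref{Cor:Orbit dimension constant on orbits} to which we have already restricted, the domain and codomain share the same finite dimension $\delta$, so injectivity upgrades to bijectivity onto $\LocalMultCore{T(\omega)}$.

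For the adjoint, I would first verify that $\phi_\omega^*$ is itself bcp: the duality identities $\tr{\phi_\omega^*(a)} = \innerhs{\phi_\omega(\mbI)}{a}$ and $\innerhs{\mbI}{\phi_\omega^*(a)} = \tr{\phi_\omega(a)}$ exchange unitality and trace-preservation, and complete positivity passes to Hilbert-Schmidt adjoints via Kraus representations. By Lemma \ref{Lem:Mult dom of bcp map}, $\hsnorm{\phi_\omega^*}\leq 1$. Polarizing the Hilbert-Schmidt isometry of $\phi_\omega$ on $\LocalMultCore{\omega}$ gives $\innerhs{\phi_\omega(a)}{\phi_\omega(b)} = \innerhs{a}{b}$ for all $a, b \in \LocalMultCore{\omega}$, so that $\innerhs{a}{\phi_\omega^*\phi_\omega(b) - b} = 0$ for every $a \in \LocalMultCore{\omega}$; equivalently, the orthogonal projection of $\phi_\omega^*\phi_\omega(b)$ onto $\LocalMultCore{\omega}$ is exactly $b$. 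The squeeze
\begin{equation}
    \hsnorm{b}^2 \leq \hsnorm{\phi_\omega^*\phi_\omega(b)}^2 \leq \hsnorm{\phi_\omega(b)}^2 = \hsnorm{b}^2
\end{equation}
then forces $\phi_\omega^*\phi_\omega(b) = b$ on the nose. Hence $\phi_\omega^*\vert_{\LocalMultCore{T(\omega)}}$ inverts $\phi_\omega\vert_{\LocalMultCore{\omega}}$, and since every element of $\LocalMultCore{\omega}$ lies in the image of this restriction, injectivity between $\delta$-dimensional spaces gives $\phi_\omega^*\seq{\LocalMultCore{T(\omega)}} = \LocalMultCore{\omega}$.

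The main (and rather mild) obstacle is ensuring that $\phi_\omega^*\phi_\omega(b)$ lands exactly in $\LocalMultCore{\omega}$, rather than merely projecting onto $b$ there---a gap closed by the squeeze argument above, which pits the contractivity of $\phi_\omega^*$ against the Hilbert-Schmidt isometry of $\phi_\omega\vert_{\LocalMultCore{\omega}}$. Once this is in hand, everything else is dimension counting.
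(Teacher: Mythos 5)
Your argument is correct, but the second half takes a genuinely different route from the paper. For the adjoint statement, the paper names the inverse $\psi_\omega$ of $\phi_\omega\vert_{\LocalMultCore{\omega}}$ (which exists by the dimension count you also use) and identifies it with $\phi_\omega^*\vert_{\LocalMultCore{T(\omega)}}$ by testing against an arbitrary $c\in\matrices$: since $\psi_\omega(a)\in\mcM_{\phi_\omega}$, one has $\tr{\phi_\omega(c)^*\phi_\omega(\psi_\omega(a))}=\tr{\phi_\omega(c^*\psi_\omega(a))}=\tr{c^*\psi_\omega(a)}$ by multiplicativity and trace preservation, so $\phi_\omega^*(a)=\psi_\omega(a)$. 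You instead run a purely operator-theoretic argument: polarize the Hilbert--Schmidt isometry on $\LocalMultCore{\omega}$, note that the contractivity of $\phi_\omega^*$ together with the Pythagorean/squeeze estimate forces $\phi_\omega^*\circ\phi_\omega(b)=b$ exactly (this is the standard rigidity fact that a contraction acting isometrically on a vector satisfies $L^*Lv=v$), and then use surjectivity of $\phi_\omega\vert_{\LocalMultCore{\omega}}$ onto $\LocalMultCore{T(\omega)}$ to get $\phi_\omega^*\seq{\LocalMultCore{T(\omega)}}=\LocalMultCore{\omega}$. What your route buys is generality and economy of structure: it never uses multiplicativity of $\phi_\omega$ on its multiplicative domain, only the norm information from Lemma \ref{Lem:Mult dom of bcp map}; moreover, your detour through showing $\phi_\omega^*$ is bcp, while correct, is unnecessary, since $\hsnorm{\phi_\omega^*}=\hsnorm{\phi_\omega}\leq 1$ already gives the contractivity you need. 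What the paper's route buys is that it exhibits the adjoint directly as the inverse $*$-homomorphism by an algebraic computation, with no appeal to polarization or projections. Both proofs are complete; the only cosmetic slip in yours is the phrase ``every element of $\LocalMultCore{\omega}$ lies in the image of this restriction,'' where you mean every element of $\LocalMultCore{T(\omega)}$ lies in the image of $\phi_\omega\vert_{\LocalMultCore{\omega}}$.
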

\begin{proof}
    That $\phi_\omega$ is a unital $*$-isomorphism with the asserted isometric properties follows immediately from  Lemma \ref{Lem:phi omega is C* hom of local cores} and Corollary \ref{Cor:Orbit dimension constant on orbits}.
    Let $\psi_\omega:\StabMultDom{\Phi; T(\omega)}\to\StabMultDom{\Phi; \omega}$ denote the inverse map of $\phi_\omega\vert_{\StabMultDom{\Phi; \omega}}$. 
    We show that $\psi_\omega = \phi_\omega^*\vert_{\StabMultDom{\Phi; T(\omega)}}$, which will conclude the proof. 
    So, let $a\in\StabMultDom{\Phi; T(\omega)}$. 
    We show that $\phi_\omega^*(a) = \psi_\omega(a)$.
    To see this, let $c\in\matrices$ and compute 
     \begin{align*}
     \innerhs{c}{\phi_\omega^*(a)}
        =
        \innerhs{\phi_\omega(c)}{a}
        &= 
        \innerhs{\phi_\omega(c)}{\phi_\omega\seq{\psi_\omega(a)}}
        \\
        &= 
        \tr{
        \phi_\omega(c)^*\phi_\omega\seq{\psi_\omega(a)}
        }\\
        &= 
        \tr{\phi_\omega(c^*\psi_\omega(a))} &&\text{since $\psi_\omega(a)\in\mcM_{\phi_\omega}$}\\
        &= 
        \innerhs{c}{\psi_\omega(a)},
    \end{align*}
    where the last equality holds since $\phi_\omega$ is trace-preserving. 
    Thus, because $c\in\matrices$ was arbitrary, it holds that $\phi_\omega^*(a) = \psi_\omega(a)$, as claimed.
    Since $a\in\StabMultDom{\Phi; T(\omega)}$ was arbitrary, we conclude that $\psi_\omega = \phi_\omega^*\vert_{\StabMultDom{\Phi; T(\omega)}}$, as desired.  
\end{proof}
Now, by the Artinian property of $\matrices$, we know that for all $\omega\in\Omega$, the decreasing chain of algebras 
\begin{equation}
    \mcM_{\Phi^{(1)}_\omega}\supseteq \mcM_{\Phi^{(2)}_\omega}\supseteq\cdots\supseteq\mcM_{\Phin_\omega}
    \supseteq\cdots 
\end{equation}
must eventually stabilize; 
let $\tau:\Omega\to\mbN$ denote the function 
\begin{equation}
    \tau(\omega) 
        := 
    \min
    \set{n\in\mbN\,\,:\,\,\mcM_{\Phin_\omega} = \mcM_{\Phi^{(n+m)}_\omega}\text{ for all $m\in\mbN$}}.
\end{equation}
In keeping with \cite{Rahaman2017MultiplicativeChannels}, we call $\tau$ the \textit{multiplicative index} of $\Phi$. 
The key property of $\tau$ is that 
\begin{equation}
    \LocalMultCore{\omega}
    =
    \mcM_{\Phi^{(\tau)}_\omega}
\end{equation}
for all $\omega\in\Omega$. 
Furthermore, if we let $\mcT = \seq{\mcT_n}_{n\in\mbN}$ be the filtration of $\sigma$-subalgebras of $\mcF$ defined by 
\begin{equation}
    \mcT_n
        =
    \sigma\Big(
        \phi_m\,\,:\,\, m = 0, \dots, n-1
    \Big),
\end{equation}
then actually $\tau$ defines a $\mcT$-stopping time, which we prove in Lemma \ref{Lem:App:Meas of tau}.
At this juncture, we are prepared to see how, under the assumption \Indep\, given in the introduction, we can prove Theorem \ref{Thm:Deterministic}. 
Let us recall the theorem statement. 
\Deterministic*
\begin{proof}
    We already know that the multiplicative index $\tau$ is a $\Phi$-stopping time, and that $\mcM_{\Phi^{(\tau)}} = \StabMultDom{\Phi}$ almost surely, so it suffices to show that $\StabMultDom{\Phi}$ is almost surely constant.
    To do this, we begin by recalling that $\mcF^{<0}$ was the $\sigma$-algebra generated by the random bcp maps $\phi_n$ for all $n < 0$, and that $\mcF^{\geq 0}$ was defined similarly. 
    Note that by the definition of $\StabMultDom{\Phi}$, we have that $\StabMultDom{\Phi}$ is $\mcF^{\geq 0}$-measurable by its very expression as 
    \begin{equation}
        \StabMultDom{\Phi; \omega}
        =
        \bigcap_{n\in\mbN}
        \mcM_{\phi_{n-1; \omega}\circ\cdots\circ\phi_{0; \omega}}.
    \end{equation}
    Therefore, by the assumption \Indep, to show that $\StabMultDom{\Phi}$ is almost surely constant, it suffices to show that there is an $\mcF^{<0}$-measurable subspace-valued random variable $\mcM'$ such that $\mcM' = \StabMultDom{\Phi}$ almost surely. 
    So, the main fact we need to establish is the existence of such $\mcM'\in\mcF^{<0}$.
    To do this, we make an approximation argument, which takes a bit of work to set up. 
    As a start, note that from Corollary \ref{Cor:Orbit dimension constant on orbits} that there is $\delta\in\mbN$ such that $\operatorname{dim}\StabMultDom{\Phi} =\delta$ almost surely. 
    We can use this to our advantage by recalling the canonical identification of $\mbG_\delta\seq{\matrices}$ with $\scrP_\delta$, where 
    \begin{equation}
        \scrP_\delta 
        =
        \set{P\in\scrL\,\,:\,\, P = P^2 = P^*\text{ and }\operatorname{dim}\operatorname{ran}(P) = \delta},
    \end{equation}
    That is, $\scrP_\delta$ is the set of rank $\delta$ orthogonal projections in the set of linear maps $\seq{\matrices, \innerhs{\cdot}{\cdot}}\to \seq{\matrices, \innerhs{\cdot}{\cdot}}$.
    We denote this identification by the map 
    \begin{equation}
        \begin{split}
            \operatorname{Proj}:\mbG_\delta\seq{\matrices} &\to\scrP_\delta\\
            W&\mapsto \operatorname{Proj}(W),
        \end{split}
    \end{equation}
    where $\operatorname{Proj}(W)\in\scrL$ is the unique orthogonal projection onto $W$. 
    From Lemma \ref{Lem:App:Isometry_Grassmannian} in Appendix \ref{App:Grassmannian}, we know that if we equip $\scrP_\delta$ with the metric $d_\scrP(\cdot, \cdot)$ defined by 
    \begin{equation}
        d_\scrP(P, Q)
        =
        \hsnorm{P - Q}
    \end{equation}
    where $\hsnorm{P}$ is the operator norm of $P\in\scrL$ with respect to $\hsnorm{\cdot}$ on $\matrices$, then we have that $\operatorname{Proj}$ is an isometry between $\seq{\mbG_\delta\seq{\matrices}, d_{\mbG\seq{\matrices}}}$ and $\seq{\scrP_\delta, d_\scrP}$.
    The advantage of this identification is that $\operatorname{Proj}\seq{\StabMultDom{\Phi}}:\Omega\to\scrP_\delta$ may be understood as a $\mbM_{d^2}$-valued random variable, and, therefore, because $\operatorname{Proj}\seq{\StabMultDom{\Phi}}$ is $\mcF^{\geq 0}$-measurable, we have by the martingale convergence theorem \cite[Theorem 4.2.11]{Durrett2019Probability:Examples} that 
    \begin{equation}
        \lim_{n\to\infty}\mbE\left[
           \operatorname{Proj}\seq{\StabMultDom{\Phi}}
           \,\,\vert\,\,
           \mcF^{[0, n]}
        \right]
        =
        \operatorname{Proj}\seq{\StabMultDom{\Phi}}
    \end{equation}
    holds $\mu$-almost surely, where $\mcF^{[0, n]}$ is the $\sigma$-algebra generated by $\set{\phi_0, \dots, \phi_n}$. 
    (Note that all the entries of $\operatorname{Proj}\seq{\StabMultDom{\Phi}}$ are integrable since $\opnorm{\operatorname{Proj}\seq{\StabMultDom{\Phi}}}\leq 1$ almost surely).
    Here, $\mbE[\cdot\,\,\vert\,\,\mcF^{[0, n]}]$ denotes the conditional expectation.
    Therefore, by dominated convergence, we have that 
    \begin{equation}
        \lim_{n\to\infty}
        \int_\Omega 
        d_\scrP\seq{
        V^{(n)}_\omega, 
        \operatorname{Proj}\seq{\StabMultDom{\Phi; \omega}}
        }
        \,\dee\mu(\omega)
        =
        0,
    \end{equation}
    where $V^{(n)} = \mbE\left[
           \operatorname{Proj}\seq{\StabMultDom{\Phi}}
           \,\,\vert\,\,
           \mcF^{[0, n]}
        \right]$.
    Because $T$ is invertible and $\mu$-preserving, we may rewrite this as 
    \begin{equation}\label{Eqn:Determinstic, eqn 1}
        \lim_{n\to\infty}
        \int_\Omega 
        d_\scrP\seq{
        V^{(n)}_{T^{-n-1}(\omega)}, 
        \operatorname{Proj}\seq{\mcM^\infty_{\Phi; T^{-n-1}(\omega)}}
        }
        \,\dee\mu(\omega)
        =
        0.
    \end{equation}
    On the other hand, by Corollary \ref{Cor:Unitary}, we know that
    \begin{equation}
        \Phi^{(n+1)}_{T^{-n-1}(\omega)}\seq{\StabMultDom{\Phi; T^{-n-1}(\omega)}} = \StabMultDom{\Phi; \omega}
    \end{equation}
    holds for $\mu$-almost every $\omega$ for all $n\in\mbN$.
    Therefore, because $\operatorname{Proj}$ is an isometry of metric spaces and because $\Phi^{(n+1)}_{T^{-n-1}(\omega)}$ acts isometrically on $\StabMultDom{\Phi; T^{-n-1}(\omega)}$ by Corollary \ref{Cor:Unitary} for $\mu$-almost every $\omega$, by Lemma \ref{Lem:App:Lipschitz for Grassmann}, we conclude that 
    \begin{align*}
        &\int_\Omega 
        d_{\mbG\seq{\matrices}}\seq{
        \Phi^{(n+1)}_{T^{-n-1}(\omega)}\seq{\operatorname{Proj}^{-1}\seq{V^{(n)}_{T^{-n-1}(\omega)}}}, 
        \mcM^\infty_{\Phi; \omega}
        }
        \,\dee\mu(\omega)\\
        &=\int_\Omega 
        d_{\mbG\seq{\matrices}}\seq{
        \Phi^{(n+1)}_{T^{-n-1}(\omega)}\seq{\operatorname{Proj}^{-1}\seq{V^{(n)}_{T^{-n-1}(\omega)}}}, 
        \Phi^{(n+1)}_{T^{-n-1}(\omega)}\seq{\mcM^\infty_{\Phi; T^{-n-1}(\omega)}}
        }
        \,\dee\mu(\omega)\\
        &\leq 
        \int_\Omega 
        d_{\mbG\seq{\matrices}}\seq{
        \operatorname{Proj}^{-1}\seq{V^{(n)}_{T^{-n-1}(\omega)}}, 
        \mcM^\infty_{T^{-n-1}(\omega)}
        }
        \,\dee\mu(\omega)\\
        &\leq 
        \int_\Omega 
        d_{\scrP}\seq{
        V^{(n)}_{T^{-n-1}(\omega)}, 
        \operatorname{Proj}\seq{\mcM^\infty_{T^{-n-1}(\omega)}}
        }
        \,\dee\mu(\omega).
    \end{align*}
    In particular, by (\ref{Eqn:Determinstic, eqn 1}), we conclude that 
    \begin{equation}\label{Eqn:Determinstic, eqn 2}
        \lim_{n\to\infty}
        \int_\Xi 
        d_{\mbG\seq{\matrices}}\seq{
        \Phi^{(n+1)}_{T^{-n-1}(\omega)}\seq{\operatorname{Proj}^{-1}\seq{V^{(n)}_{T^{-n-1}(\omega)}}}, 
        \mcM^\infty_{\omega}
        }
        \,\dee\mu(\omega)
        =
        0.
    \end{equation}
    Therefore, if we let $W^{(n)}:\Omega\to\scrP_\delta$ be the $\mbM_{d^2}$-valued random variable 
    \begin{equation}
        W^{(n)}_\omega
        =
        \operatorname{Proj}\seq{
         \Phi^{(n+1)}_{T^{-n-1}(\omega)}\seq{\operatorname{Proj}^{-1}\seq{V^{(n)}_{T^{-n-1}(\omega)}}}
        },
    \end{equation}
    then, because $V^{(n)}$ is $\mcF^{[0, n]}$-measurable and $\Phi^{(n+1)}\circ T^{-n-1}$ is $\mcF^{[-n-1, -1]}$-measurable for all $n$, we see that $W^{(n)}$ is $\mcF^{<0}$-measurable for all $n$, and, moreover, again by the isometry of Lemma \ref{Lem:App:Isometry_Grassmannian}, we have from (\ref{Eqn:Determinstic, eqn 2}) that 
    \begin{equation}
        \lim_{n\to\infty}
        \int_\Omega
        d_\scrP\seq{W^{(n)}_\omega, \StabMultDom{\Phi; \omega}}\,\dee\mu(\omega)
        =
        0.
    \end{equation}
    Therefore, there is a subsequence $\seq{n_k}_{k\in\mbN}$ such that 
    \begin{equation}
        \lim_{k\to\infty}
        d_\scrP\seq{W^{(n_k)}_\omega, \operatorname{Proj}\seq{\StabMultDom{\Phi; \omega}}}
        =
        0
    \end{equation}
    holds for $\mu$-almost every $\omega\in\Omega$. 
    In particular, we conclude that $\mcM' := \operatorname{Proj}^{-1}\seq{\lim_k W^{(n_k)}}\in\mcF^{<0}$ satisfies $\mcM'\in\mcF^{<0}$ and $\mcM' = \StabMultDom{\Phi}$ almost surely. 
    As discussed above, this was all we needed to show to prove the result, so the proof is concluded. 
\end{proof}
\begin{remark}
    The assumption \Indep\, will hold in the following general setup. 
    Assume that $\Omega = \bcp^\mbZ$ and that $\mcF$ is the $\sigma$-algebra generated by cylinder sets, where $\bcp$ is given the Borel $\sigma$-algebra per usual. 
    We define $T:\Omega\to\Omega$ to be the shift map, $T(\seq{\psi_n}_{n\in\mbZ}) = \seq{\psi_{n+1}}_{n\in\mbZ}$, and we let $\mu$ be any measure on $\Omega$ that is invariant and ergodic under $T$. 
    If we define $\phi:\Omega\to\bcp$ by $\phi_{\seq{\psi_{n}}_{n\in\mbZ}} = \psi_0$, we see that \Indep\, is satisfied,
    even though it need not be (at least \textit{a priori}) that the random variables $\seq{\phi_n}_{n\in\mbZ}$ are independent.
\end{remark}
\begin{example}[Nonconstant \texorpdfstring{$\StabMultDom{\Phi}$}{l}]\label{Ex:Nondeterministic}
    The following example shows that $\StabMultDom{\Phi}$ need not be deterministic in general. 
    Let $\Omega = [0, 1)$, let $\mu$ be the Lebesgue measure, and let $\mcF$ be the Borel $\sigma$-algebra.
    Let $\alpha\in[0, 1)$ be an irrational number, and let $T:\Omega\to\Omega$ be the quasiperiodic rotation map $T(\omega) = \omega + \alpha - \lfloor\omega + \alpha \rfloor$, where $\lfloor\cdot\rfloor$ denotes the floor function. 
    Then it is a standard fact that $T$ is an invertible and ergodic map for $\mu$. 
    Define vectors in $\mbC^2$ by 
    \begin{equation}
        |v_\omega\rangle
       =
        \seq{\begin{array}{cc}
             \sin(2\pi \omega)  \\
             \cos(2\pi \omega)
        \end{array}}
        \qquad\text{and}\qquad 
        |w_\omega\rangle
       =
        \seq{\begin{array}{cc}
             \cos(2\pi \omega)  \\
             -\sin(2\pi \omega)
        \end{array}}
    \end{equation}
    for all $\omega\in\Omega$, where we have used the standard quantum mechanics notation to denote vectors.
    Note that $\langle w_\omega|v_\omega\rangle = 0$. 
    We define $\phi:\Omega\to\bcp$ by 
    \begin{equation}
        \phi_\omega(a)
        =
        \ketbra{v_{T(\omega)}}{v_\omega}a\ketbra{v_\omega}{v_{T(\omega)}}
        +
        \ketbra{w_{T(\omega)}}{w_\omega}a\ketbra{w_\omega}{w_{T(\omega)}}\quad a\in\matrices.
    \end{equation}
    Then notice that $\Phin_\omega(a) =  \ketbra{v_{T^{n-1}(\omega)}}{v_\omega}a\ketbra{v_\omega}{v_{T^{n-1}(\omega)}}
        +
        \ketbra{w_{T^{n-1}(\omega)}}{w_\omega}a\ketbra{w_\omega}{w_{T^{n-1}(\omega)}}$
    for all $n\in\mbN$. 
    Therefore, 
    \begin{equation}
        \Phi^{(n)*}_\omega(a) =  \ketbra{v_\omega}{v_{T^{n-1}(\omega)}}a\ketbra{v_{T^{n-1}(\omega)}}{v_\omega}
        +
        \ketbra{w_\omega}{w_{T^{n-1}(\omega)}}a\ketbra{w_{T^{n-1}(\omega)}}{w_\omega},
    \end{equation}
    so, because by \cite[Lemma 2.1]{Rahaman2017MultiplicativeChannels} $\mcM_{\Phin_\omega}$ is equal to the set of fixed points of $\Phin_\omega\circ\Phi^{(n)*}_\omega$, we see that $\mcM_{\Phin_\omega}$ is the $C^*$-algebra $\mcA_\omega$ generated by $\set{\ketbra{v_\omega}{v_\omega}, \ketbra{w_\omega}{w_\omega}}$ for all $n$. 
    Hence, $\StabMultDom{\Phi; \omega} = \mcA_\omega$. 
    If, however, $\mcA_\omega = \mcA_{T(\omega)}$ almost surely, then we would have that $\ketbra{v_{T(\omega)}}{v_{T(\omega)}}\in\mcA_\omega$ almost surely. 
    However, because $\ketbra{v_\omega}{v_\omega}$ and $\ketbra{w_\omega}{w_\omega}$ are orthogonal, the only rank one projections in $\mcA_\omega$ are exactly $\ketbra{v_\omega}{v_\omega}$ and $\ketbra{w_\omega}{w_\omega}$. 
    So, $\ketbra{v_{T(\omega)}}{v_{T(\omega)}}\in\mcA_\omega$ is almost surely false, because $\ket{v_\omega}\neq \ket{v_{T(\omega)}}$ and $\ket{w_\omega}\neq \ket{v_{T(\omega)}}$ almost surely. 
    Therefore, $\mcA_\omega = \StabMultDom{\Phi; \omega}$ is nondeterministic. 
\end{example}
We now turn our attention towards proving Theorem \ref{Thm:Main theorem}. 
We begin by proving part (c) of this theorem, which by Theorem \ref{Thm:Deterministic}, follows in a straightforward way. 
\begin{prop}
    Assume \BCPa\, and \Indep.
    Then Theorem \ref{Thm:Main theorem} (c) holds. 
\end{prop}
\begin{proof}
    Note that by Theorem \ref{Thm:Deterministic} and Corollary \ref{Cor:Unitary}, the map $\Phi^{(\tau)}_\omega\vert_{\mcA_\Phi}:\mcA_\Phi\to\mcA_\Phi$ is unitary with respect to the Hilbert-Schmidt inner product on the $\mcA_\Phi$. 
    Therefore, by the spectral theorem applied to $\Phi^{(\tau)}_\omega\vert_{\mcA_\Phi}$, there is an orthonormal basis $\set{u_\lambda}_{\lambda\in\Sigma_\omega}$ of $\mcA_\Phi$, where $\Sigma_\omega$ is the spectrum of $\Phi^{(\tau)}_\omega\vert_{\mcA_\Phi}$ and each $u_\lambda\in\matrices$ satisfies $\Phi^{(\tau)}(u_\lambda) = \lambda u_\lambda$. 
    Because $\Phi^{(\tau)}_\omega\vert_{\mcA_\Phi}$ is unitary, we know that $\Sigma_\omega\subset\mbT$ as well. 
    In particular, we see that $\mcA_\Phi$ is contained in the $C^*$-algebra generated by the set $\set{a\in\matrices\,\,:\,\, \Phi^{(\tau)}_\omega(a) = \lambda a\text{ for some $\lambda\in\mbT$}}$. 
    On the other hand, if $a$ satisfies $\Phi^{(\tau)}_\omega(a) = \lambda a$ for some $\lambda\in\mbT$, then it is clear that $\hsnorm{a} = \hsnorm{\Phi^{(\tau)}_\omega(a)}$, and therefore by Lemma \ref{Lem:Mult dom of bcp map} any such $a$ is contained in $\mcA_\Phi = \mcM_{\Phitau{\omega}}$. 
    Therefore, because $A_\Phi$ is a $C^*$-algebra, the $C^*$-algebra generated by all such $a$ is necessarily contained in $A_\Phi$, so, by the above, we conclude (c), and end the proof. 
\end{proof}
\begin{remark}
    Notice that if $a\in\matrices$ is such that $\Phi^{(\tau)}_\omega(a) = \lambda a$, then by Lemma \ref{Lem:Mult dom of bcp map} we have that $a\in\mcM_{\Phi^{(\tau)}_\omega}$. 
    Therefore, the $C^*$-algebra $\mcA_{\omega}$ generated by all such $a$ is always contained in $\mcM_{\Phi^{(\tau)}_\omega} = \StabMultDom{\Phi; \omega}$. 
    However, as Example \ref{Ex:Nondeterministic} shows, the equality $\mcA_\omega = \StabMultDom{\Phi; \omega}$ need not hold in general. 
\end{remark}
Next, to prove part (a) of Theorem \ref{Thm:Main theorem}, we need the following two technical lemmas. 
\begin{lemma}\label{Cor:Linear algebra corollary}
For almost every $\omega\in\Omega$, $\phi_\omega\seq{\StabMultDomPerp{\Phi; \omega}}\subseteq\StabMultDomPerp{\Phi; T(\omega)}$ 
    and $\phi_\omega^*\seq{\StabMultDomPerp{\Phi; T(\omega)}}\subseteq\StabMultDomPerp{\Phi; \omega}$.
\end{lemma}
\begin{proof}
    From Corollary \ref{Cor:Unitary}, we know that $\phi_\omega\seq{\LocalMultCore{\omega}}=\LocalMultCore{T(\omega)}$ 
    and 
    $\phi_\omega^*\seq{\LocalMultCore{T(\omega)}}=\LocalMultCore{\omega}$.
    So, let $a\in \StabMultDomPerp{\Phi; \omega}$, and let $b\in\LocalMultCore{T(\omega)}$. 
    Then 
    \begin{equation}
        \innerhs{\phi_\omega(a)}{b}
        =
        \innerhs{a}{\phi_\omega^*(b)}
        =
        0, 
    \end{equation}
    since $\phi_\omega^*(b)\in\LocalMultCore{\omega}$. 
    So, $\phi_\omega\seq{\StabMultDomPerp{\Phi; \omega}}\subseteq\StabMultDomPerp{\Phi; T(\omega)}$. 
    Arguing the same way yields the containment  $\phi_\omega^*\seq{\StabMultDomPerp{\Phi; T(\omega)}}\subseteq\StabMultDomPerp{\Phi; \omega}$, and the proof is done.
\end{proof}
\begin{lemma}\label{Lem:Kingman technical}
    The inequality
    \begin{equation}
        \inf_n \frac{1}{n}\int_\Omega \log \hsnorm{\Phi^{(n)}_\omega\vert_{\StabMultDomPerp{\Phi; \omega}}}\,
        \dee\mu(\omega)
        < 
        0
    \end{equation}  
    holds. 
\end{lemma}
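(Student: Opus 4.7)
The strategy is to exhibit a single deterministic integer $K \in \mbN$, depending only on $d$, for which $\hsnorm{\Phi^{(K)}_\omega\vert_{\StabMultDomPerp{\Phi; \omega}}} < 1$ uniformly in $\omega \in \Omega$. Taking logarithms and integrating then provides a strictly negative upper bound on the infimum in the claim, since every term $(1/n)\int_\Omega \log\hsnorm{\Phi^{(n)}_\omega\vert_{\StabMultDomPerp{\Phi;\omega}}}\,d\mu(\omega)$ is at most $0$ and we will have produced one term that is strictly less than $0$.

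First, I would establish pointwise strict contraction at the random time $n = \tau_\omega$. By the definition of $\tau_\omega$ one has $\mcM_{\Phi^{(\tau_\omega)}_\omega} = \StabMultDom{\Phi;\omega}$, so $\mcM_{\Phi^{(\tau_\omega)}_\omega}^\perp = \StabMultDomPerp{\Phi;\omega}$, and Lemma~\ref{Lem:Mult dom of bcp map} applied to the bcp map $\Phi^{(\tau_\omega)}_\omega$ yields $\hsnorm{\Phi^{(\tau_\omega)}_\omega(a)} < \hsnorm{a}$ for every nonzero $a \in \StabMultDomPerp{\Phi;\omega}$; compactness of the unit $\hsnorm{\cdot}$-sphere in the finite-dimensional space $\StabMultDomPerp{\Phi;\omega}$ then promotes this pointwise inequality to the operator bound $\hsnorm{\Phi^{(\tau_\omega)}_\omega\vert_{\StabMultDomPerp{\Phi;\omega}}} < 1$. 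The crucial observation is now that $\tau$ is \emph{uniformly} bounded: the chain $\mcM_{\Phi^{(1)}_\omega} \supseteq \mcM_{\Phi^{(2)}_\omega} \supseteq \cdots$ consists of subspaces of the $d^2$-dimensional space $\matrices$ and admits at most $d^2$ strict inclusions, so it must stabilize by index $K := d^2 + 1$, giving $\tau_\omega \leq K$ for every $\omega$. Using the cocycle factorization $\Phi^{(K)}_\omega = \Phi^{(K-\tau_\omega)}_{T^{\tau_\omega}(\omega)} \circ \Phi^{(\tau_\omega)}_\omega$, the invariance statement in Corollary~\ref{Cor:Linear algebra corollary} iterated $\tau_\omega$ times, and the fact that bcp maps satisfy $\hsnorm{\cdot}\leq 1$ (Lemma~\ref{Lem:Mult dom of bcp map}), we conclude that $\hsnorm{\Phi^{(K)}_\omega\vert_{\StabMultDomPerp{\Phi;\omega}}} < 1$ for every $\omega \in \Omega$.

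Finally, the map $\omega \mapsto \log\hsnorm{\Phi^{(K)}_\omega\vert_{\StabMultDomPerp{\Phi;\omega}}}$ is measurable by Lemma~\ref{Lem:Meas of the Ms}, is everywhere $\leq 0$, and is everywhere $< 0$, so its integral over $\Omega$ lies in $[-\infty, 0)$ and
\[
\inf_n \frac{1}{n}\int_\Omega \log \hsnorm{\Phi^{(n)}_\omega\vert_{\StabMultDomPerp{\Phi;\omega}}}\,d\mu(\omega)
\leq \frac{1}{K}\int_\Omega \log \hsnorm{\Phi^{(K)}_\omega\vert_{\StabMultDomPerp{\Phi;\omega}}}\,d\mu(\omega) < 0.
\]
The main obstacle is the uniform bound $\tau_\omega \leq K$: without it, Lemma~\ref{Lem:Mult dom of bcp map} would yield only strict contraction at a random time, and one could not immediately produce a single deterministic index $n$ witnessing the desired strict negativity. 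The uniform bound itself is only a finite-dimensional chain-length count, but it is precisely what converts the $\omega$-by-$\omega$ strict inequality into a global deterministic estimate.
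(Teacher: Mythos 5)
The pointwise strict contraction at the random time $\tau_\omega$ (via Lemma \ref{Lem:Mult dom of bcp map} plus compactness of the unit sphere of $\StabMultDomPerp{\Phi;\omega}$) and the final integration step are fine, but the pivot of your argument --- the uniform bound $\tau_\omega \leq K := d^2+1$ --- is false, and this is a genuine gap. The descending chain $\mcM_{\Phi^{(1)}_\omega}\supseteq\mcM_{\Phi^{(2)}_\omega}\supseteq\cdots$ indeed admits at most $d^2$ strict inclusions, but nothing forces those strict inclusions to occur in the first $d^2+1$ steps: because the maps $\phi_{T^n(\omega)}$ change with $n$, the chain can sit constant for an arbitrarily long stretch and then drop. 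Concretely, by Lemma \ref{Lem:Comp of mult doms}, equality $\mcM_{\Phi^{(n)}_\omega}=\mcM_{\Phi^{(n+1)}_\omega}$ says only that $\Phi^{(n)}_\omega$ maps $\mcM_{\Phi^{(n)}_\omega}$ into $\mcM_{\phi_{T^{n}(\omega)}}$; it says nothing about $\mcM_{\phi_{T^{n+1}(\omega)}}$. For an explicit counterexample, take $\phi$ i.i.d.\ equal to the identity channel with probability $1/2$ and to a bcp map $\psi$ with $\mcM_\psi\subsetneq\matrices$ with probability $1/2$: then $\mcM_{\Phi^{(n)}_\omega}=\matrices$ until the first occurrence of $\psi$, so $\tau_\omega$ dominates a geometric waiting time and is almost surely finite but unbounded (and not even essentially bounded). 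Hence there is in general no deterministic $K$ with $\hsnorm{\Phi^{(K)}_\omega\vert_{\StabMultDomPerp{\Phi;\omega}}}<1$ for (almost) every $\omega$, and your chain-length count does not deliver the single index $n$ you need.

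The repair is the route the paper takes, and it only needs $\tau<\infty$ pointwise rather than boundedness: since $\tau_\omega<\infty$ for all $\omega$, there exists $N\in\mbN$ with $\mu[\tau=N]>0$. Because $\hsnorm{\Phi^{(N)}_\omega}\leq 1$ everywhere, the integrand $\log\hsnorm{\Phi^{(N)}_\omega\vert_{\StabMultDomPerp{\Phi;\omega}}}$ is nonpositive on all of $\Omega$, so
\begin{equation}
    \int_\Omega \log \hsnorm{\Phi^{(N)}_\omega\vert_{\StabMultDomPerp{\Phi;\omega}}}\,\dee\mu(\omega)
    \leq
    \int_{\{\tau=N\}} \log \hsnorm{\Phi^{(\tau_\omega)}_\omega\vert_{\mcM^\perp_{\Phi^{(\tau_\omega)}_\omega}}}\,\dee\mu(\omega)
    <0,
\end{equation}
the last inequality holding because on the positive-measure event $\{\tau=N\}$ one has $\StabMultDomPerp{\Phi;\omega}=\mcM^\perp_{\Phi^{(\tau_\omega)}_\omega}$ and your own compactness argument gives a strictly negative integrand there. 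Dividing by $N$ yields the claimed strict negativity of the infimum. In short: replace ``$\tau$ is uniformly bounded by $d^2+1$'' with ``pick one value $N$ that $\tau$ attains with positive probability and restrict the integral to that event, using nonpositivity of the integrand elsewhere.''
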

\begin{proof}   
    By applying Lemma \ref{Lem:Mult dom of bcp map} to the bcp map $\Phi^{(\tau)}_\omega$, we have that 
    \begin{equation}\label{Eqn:Perp space as shrinking}
         \mcM_{\Phi^{(\tau)}_\omega}^\perp
         \setminus\{0\}
            \subset
        \set{a\in\matrices\,\,:\,\,
            \|a\|_{\operatorname{HS}}
            >
            \|\Phi^{(\tau)}_\omega(a)\|_{\operatorname{HS}}
        }
    \end{equation}
    for almost every $\omega\in\Omega$. 
    Since $\mcM_{\Phi^{(\tau)}_\omega}^\perp\cap \set{a\in\matrices\,\,:\,\,\|a\|_{\operatorname{HS}} = 1}$ is compact, there is $a_{\operatorname{max}}\in \mcM_{\Phi^{(\tau)}_\omega}^\perp$ such that $\hsnorm{a_{\operatorname{max}}} = 1$ and 
    $\hsnorm{\Phitau{\omega}(a_{\operatorname{max}})}
        =
        \hsnorm{\Phitau{\omega}\vert_{\mcM^\perp_{\Phitau{\omega}}}
        }$,
    so (\ref{Eqn:Perp space as shrinking}) gives
    \begin{equation}\label{Eqn:Strict inequality for phi tau}
        \hsnorm{\Phitau{\omega}\vert_{\mcM^\perp_{\Phitau{\omega}}}}
        <
        1
    \end{equation}  
    for almost every $\omega\in\Omega$. 
    Now, because $\tau(\omega) < \infty$ for almost every $\omega\in\Omega$, there is $N\in\mbN$ such that $\mu[\tau = N] > 0$. 
    Fix such an $N$. 
    Because $\hsnorm{\Phi^{(N)}}\leq 1$ almost everywhere (from Lemma \ref{Lem:Mult dom of bcp map}), noting that $\mcM_{\Phitau{\omega}} = \StabMultDom{\Phi; \omega}$, we may conclude from (\ref{Eqn:Strict inequality for phi tau}) that
    \begin{align*}
        \int_\Omega 
         \log \left\|
            \Phi^{(N)}_\omega\vert_{\StabMultDomPerp{\Phi; \omega}}
        \right\|_{\operatorname{HS}}\,
        \dee\mu(\omega)
        &\leq 
       \int_{\tau = N} 
       \log \left\|
            \Phi^{(N)}_\omega\vert_{\StabMultDomPerp{\Phi; \omega}}
        \right\|_{\operatorname{HS}}\,
        \dee\mu(\omega)\\
        &= 
        \int_{\tau = N} 
       \log \left\|
            \Phi^{(\tau)}_\omega\vert_{\mcM_{\Phi^{(\tau)}_\omega}^\perp}
        \right\|_{\operatorname{HS}}\,
        \dee\mu(\omega)
        < 
        0,
    \end{align*}
    which concludes the proof. 
\end{proof}
We now have the tools we need to prove the more general version of Theorem \ref{Thm:Main theorem} that doesn't require \Indep, from which it will be clear that Theorem \ref{Thm:Main theorem} (a) and (b) follow. 
Since we have already proved Theorem \ref{Thm:Main theorem} (c) above, after we prove the following theorem, our treatment of Theorems \ref{Thm:Deterministic} and \ref{Thm:Main theorem} will be concluded. 
\begin{thm}\label{Thm:More general main theorem}
    Assume \BCPa. 
    \begin{enumerate}[label = (\alph*)]
        \item For almost every $\omega\in\Omega$, the equality 
        \begin{equation}
            \StabMultDomPerp{\Phi; \omega} = V^{\leq\lambda_2}_\omega
        \end{equation}
        holds, where $\lambda_2<0$ is the second Lyapunov exponent associated to the linear cocycle $(T, \phi)$ and $V^{\leq\lambda_2}$ is the corresponding Lyapunov subspace. 

        \item For almost every $\omega\in\Omega$, $\phi_\omega\seq{\StabMultDom{\Phi; \omega}} = \StabMultDom{\Phi; T(\omega)}$, and $\phi_\omega\vert_{\StabMultDom{\Phi; \omega}}:\StabMultDom{\Phi; \omega} \to \StabMultDom{\Phi; T(\omega)}$ defines a $*$-isomorphism of $C^*$-algebras with inverse $\phi^*\vert_{\StabMultDom{\Phi; \omega}}$.
    \end{enumerate}
\end{thm}
\begin{proof}
    Part (b) is proved in light of Corollary \ref{Cor:Unitary}, so all that needs to be done is to prove (a). 
    To do this, note that by Theorem \ref{Thm:Deterministic}, it suffices to show that both inclusions $\StabMultDomPerp{\Phi; \omega}\supseteq V^{\leq\lambda_2}_\omega$ and $\StabMultDomPerp{\Phi; \omega}\subseteq V^{\leq\lambda_2}_\omega$ hold for $\mu$-almost every $\omega\in\Omega$. 
    We begin by showing the first inclusion. 
    First, notice that almost every $\omega\in\Omega$ and for any $b\in V^{\leq\lambda_2}_\omega$, we have that $\lim_{n}n^{-1}\log\hsnorm{\Phin_\omega(b)} < 0$, so in particular
    \begin{equation}
        \lim_{n\to\infty}\hsnorm{\Phin_\omega(b)} = 0.
    \end{equation}
    Therefore, if we let $a\in\StabMultDom{\Phi;\omega}$, we have that 
    \begin{align}
        |\innerhs{a}{b}|
            =
        |\tr{a^*b}|
            &=
        \lim_{n\to\infty}
        \left|\tr{\Phin_\omega(a^*b)}\right|
        &&\text{$\Phin_\omega$ is trace preserving}\notag\\
            &=
        \lim_{n\to\infty}
        \left|\tr{\Phin_\omega(a)^*\Phin_\omega(b)}\right|
            &&\text{since $a^*\in\mcM_{\Phin_\omega}$ for all $n$}\notag\\
            &\leq 
        \lim_{n\to\infty}
            \|a\|_{\operatorname{HS}}^2
            \|\Phin_\omega(b)\|_{\operatorname{HS}}^2
            &&\text{by Cauchy-Schwarz and $\|\Phin_\omega\|_{\operatorname{HS}}\leq 1$}\notag\\
            &= 0,\label{Eq:Calculation_pf_main_thm} 
    \end{align}
    which establishes that $\StabMultDomPerp{\Phi; \omega}\supseteq V^{\leq\lambda_2}_\omega$.
    Next, we show that $\StabMultDomPerp{\Phi; \omega}\subseteq V^{\leq\lambda_2}_\omega$. 
    Let $F_n:\Omega\to[-\infty, 0]$ be the function defined by 
    \begin{equation}
        F_n(\omega)
        =
        \log\left\|\Phin_\omega\vert_{\StabMultDomPerp{\Phi; \omega}}\right\|_{\operatorname{HS}},
    \end{equation}
    where we take $\log(0):= - \infty$.
    Now, for $n, m\in\mbN$, note that $\Phi^{(n+m)}_\omega = \Phi^{(m)}_{T^n(\omega)}\circ\Phin_\omega$, so from Corollary \ref{Cor:Linear algebra corollary} we have  
    \begin{align*}
    \left\|\Phi^{(n +m)}_\omega\vert_{\StabMultDomPerp{\Phi; \omega}}\right\|_{\operatorname{HS}}
        &= 
    \sup 
        _{a\in \StabMultDomPerp{\Phi; \omega}: \|a\|_{\operatorname{HS}} = 1}
        \left\|
            \Phi^{(n+m)}_\omega(a)
        \right\|_{\operatorname{HS}}\\
        &\leq
        \left\|
            \Phi^{(m)}_{T^n(\omega)}
            \vert_{\StabMultDomPerp{\Phi; T^n(\omega)}}
        \right\|_{\operatorname{HS}}
        \cdot 
    \sup 
        _{a\in \StabMultDomPerp{\Phi; \omega}: \|a\|_{\operatorname{HS}} = 1}
        \left\|
            \Phi^{(n)}_\omega(a)
        \right\|_{\operatorname{HS}}\\
        &= 
    \left\|
            \Phi^{(m)}_{T^n(\omega)}
            \vert_{\StabMultDomPerp{\Phi; T^n(\omega)}}
        \right\|_{\operatorname{HS}}
    \left\|
            \Phi^{(n)}_{\omega}
            \vert_{\StabMultDomPerp{\Phi; \omega}}
        \right\|_{\operatorname{HS}}
    \end{align*}
    hence 
    \begin{equation}
        F_{n + m}(\omega)
        \leq 
        F_n(\omega)
        +
        F_m(T^n(\omega))
    \end{equation}
    for almost every $\omega\in\Omega$.  
    Clearly, $\max(F_1, 0) = 0 \in L^1(\Omega, \mu)$, so we may apply the Subadditive Ergodic Theorem \cite{Kingman1968TheProcesses, Kingman1973SubadditiveTheory} to $\seq{F_n}_{n\in\mbN}$ to conclude that there is $\kappa'\in [-\infty, 0]$ such that 
    \begin{equation}\label{Eqn:Kingman}
        \kappa'
            =
        \lim_{n\to\infty}\frac{1}{n}
        F_n(\omega)
            =
        \inf_n\frac{1}{n}\int_\Omega F_n(\omega')\dee\mu(\omega')
    \end{equation}
    for almost every $\omega\in\Omega$. 
    By Lemma \ref{Lem:Kingman technical}, we know that $\kappa' < 0$.
    If $\kappa' = -\infty$, then let $\kappa  = \lambda_2$; otherwise, let $\kappa = \kappa'$.
    In either case, because $\kappa'<0$, we know from the MET that $\kappa\leq\lambda_2$. 
    Thus, from (\ref{Eqn:Kingman}) we see that
    \begin{equation}
        \StabMultDomPerp{\Phi; \omega}
            \subseteq
        \set{
        a\in\matrices\,\,:\,\,
        \lim_{n\to\infty}
        \frac{1}{n}
        \log\|\Phi^{(n)}_\omega(a)\|_{\operatorname{HS}} < \kappa
        }
        \subseteq V^{\leq\lambda_2}_\omega
    \end{equation}
    holds for $\mu$-almost every $\omega\in\Omega$. 
    So, the proof of (a) is concluded, which ends the proof. 
\end{proof}
\subsection{Entanglement breaking}\label{Subsec:EB}
We now apply the theorems proved above to address \ENT. 
Recall the definition of entanglement breaking from the introduction.
\begin{definition}[Entanglement breaking]
    A linear map $\psi:\matrices\to\matrices$ is called entanglement breaking if, for any $k\in\mbN$ and any positive semidefinite matrix $\rho\in\matrices\otimes \mbM_k$, the positive semidefinite matrix $\psi\otimes\operatorname{Id}_{\mbM_k}(\rho)$ is separable, meaning that $\psi\otimes\operatorname{Id}_{\mbM_k}(\rho)$ belongs to the convex cone of $\matrices\otimes \mbM_k$ generated by the elements $P\otimes Q$ where $P\in\matrices$ and $Q\in\mbM_k$ are positive semidefinite.
    We let $\eb$ denote the set of all entanglement breaking maps $\matrices\to\matrices$.
\end{definition}
In \cite{Horodecki2003EntanglementChannels}, it was shown that entanglement breaking has the following equivalent formulations. 
\begin{thmx}[\texorpdfstring{\cite{Horodecki2003EntanglementChannels}}{l}]\label{Thm:EB}
    For a linear map $\psi:\matrices\to\matrices$, the following are equivalent. 
    \begin{enumerate}[label = (\alph*)]
        \item $\psi$ is entanglement breaking.

        \item For any positive map $\varphi:\matrices\to\matrices$, the maps $\varphi\circ\psi$ and $\psi\circ\varphi$ are completely positive. 

        \item For any completely positive map $\varphi:\matrices\to\matrices$, the maps $\varphi\circ\psi$ and $\psi\circ\varphi$ are entanglement breaking. 
        
        \item There exist states $\rho_j$ and positive semidefinite matrices $Q_j$ for $j = 1, \dots, k$ such that for any $a\in\matrices$, we have that 
        \begin{equation}
        \psi(a)
            =
        \sum_{j=1}^k
        \innerhs{\rho_j}{a}Q_j.
        \end{equation}
    \end{enumerate}
\end{thmx}
Note that $\eb$ is a closed subset of $\scrL$, which follows immediately from part (b) in the theorem above, as the set of completely positive maps is closed. 
Thus, the function 
\begin{equation}
    \begin{split}
        d_{\operatorname{HS}}(\cdot, \eb):\bcp &\to [0, \infty)\\
        d_{\operatorname{HS}}(\psi, \eb)
            &= 
        \inf_{\varphi\in\eb}\hsnorm{\psi - \varphi}
    \end{split}
\end{equation}
is continuous. 
Our theorems concern the concept of \textit{asymptotic} entanglement breaking, whose definition we recall. 
\begin{definition}[Asymptotic entanglement breaking]
    Let $\Psi = \seq{\psi_n}_{n\in\mbZ}$ be a bi-infinite sequence of linear maps $\psi_n:\matrices\to\matrices$.
    If 
    \begin{equation}
        \lim_{n\to\infty}d_{\operatorname{HS}}\seq{\Psi^{(n)}, \eb}
        =
        0,
    \end{equation}
    where $\Psi^{(n)} = \psi_{n-1}\circ\cdots\circ\psi_0$,
    we say that $\Psi$ is asymptotically entanglement breaking (a.e.b.). 
    We call $\Phi$ a.e.b. at $\omega\in\Omega$ if $\Phi_\omega$ is a.e.b., and we say that $\Phi$ is a.e.b. if $\Phi_\omega$ is a.e.b. for almost every $\omega\in\Omega$. 
\end{definition}
To prove our theorems regarding asymptotic entanglement breaking, it is appropriate to recall some relevant results. 
The following is established in \cite[Corollary 3]{Strmer2008SeparableMaps} and \cite[Theorem 10]{Strmer2008SeparableMaps}, noting that $\psi\vert_{\mcM_\psi}$ is injective whenever $\psi$ is a bcp map. 
\begin{thmx}[\texorpdfstring{\cite{Strmer2008SeparableMaps}}{l}]
\label{Thm:Stormer}
    Let $\psi:\matrices\to\matrices$ be a bcp map. 
    If $\psi$ is entanglement breaking, then $\mcM_\psi$ is an abelian $C^*$-algebra. 
    If, on the other hand, $\psi(\matrices)$ is contained in an abelian $C^*$-algebra, then $\psi$ is entanglement breaking. 
\end{thmx}
Towards proving Theorem \ref{Thm:Abelian}, we start with the following lemma. 
We call $\psi\in\bcp$ a cluster point of $\Phi_\omega$ if $\psi$ is a cluster point of the forward-time sequence $\seq{\phi_{n; \omega}}_{n\geq 0}$.
\begin{lemma}\label{Lem:Cluster points}
    For almost every $\omega\in\Omega$, there are cluster points of $\Phi_\omega$, which are necessarily in $\bcp$. 
    Moreover, for any cluster point $\psi\in\bcp$ of $\Phi_\omega$, we have that $\mcM_\psi = \StabMultDom{\Phi; \omega}$. 
\end{lemma}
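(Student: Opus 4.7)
The plan is to combine the compactness of $\bcp$ with the two-sided characterization
$\mcM_\psi = \{a\in\matrices : \hsnorm{\psi(a)} = \hsnorm{a}\}$
supplied by Lemma \ref{Lem:Mult dom of bcp map}. The existence of cluster points is immediate for every $\omega\in\Omega$: each term $\Phi^{(n)}_\omega$ lies in the compact (and closed) set $\bcp$, so the sequence $\Phi_\omega$ has cluster points, and any such cluster point automatically belongs to $\bcp$.

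The key observation driving the identification $\mcM_\psi = \StabMultDom{\Phi;\omega}$ is that, for any fixed $a\in\matrices$, the sequence $n\mapsto \hsnorm{\Phi^{(n)}_\omega(a)}$ is non-increasing. This follows from the factorization $\Phi^{(n+1)}_\omega = \phi_{T^n(\omega)}\circ\Phi^{(n)}_\omega$ together with the contractivity $\hsnorm{\phi_{T^n(\omega)}}\leq 1$ from Lemma \ref{Lem:Mult dom of bcp map}. Consequently, whenever $\Phi^{(n_k)}_\omega \to \psi$ along a subsequence, continuity of evaluation yields
\begin{equation}
\hsnorm{\psi(a)} \;=\; \lim_{k\to\infty} \hsnorm{\Phi^{(n_k)}_\omega(a)} \;=\; \inf_{n\in\mbN} \hsnorm{\Phi^{(n)}_\omega(a)}
\end{equation}
for every $a\in\matrices$.

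With this monotonicity in hand, the equality $\mcM_\psi = \StabMultDom{\Phi;\omega}$ reduces to two short applications of Lemma \ref{Lem:Mult dom of bcp map}. If $a\in\StabMultDom{\Phi;\omega}$, then $a\in\mcM_{\Phi^{(n)}_\omega}$ for every $n$, so $\hsnorm{\Phi^{(n)}_\omega(a)}=\hsnorm{a}$ for all $n$, and passing to the subsequential limit gives $\hsnorm{\psi(a)}=\hsnorm{a}$, i.e., $a\in\mcM_\psi$. Conversely, if $a\in\mcM_\psi$, then $\hsnorm{\psi(a)}=\hsnorm{a}$, and combining this with the infimum formula above forces $\hsnorm{\Phi^{(n)}_\omega(a)}=\hsnorm{a}$ for every $n$. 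Another application of Lemma \ref{Lem:Mult dom of bcp map} places $a$ in $\mcM_{\Phi^{(n)}_\omega}$ for every $n$, hence in $\StabMultDom{\Phi;\omega}$.

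I do not anticipate a substantive obstacle: the argument reduces to the contractivity-driven monotonicity of $n\mapsto \hsnorm{\Phi^{(n)}_\omega(a)}$ and the norm characterization of the multiplicative domain. The ``almost every $\omega$'' qualifier enters only through the earlier almost-sure measurability and dimension-constancy statements for $\StabMultDom{\Phi;\omega}$, and does not play a role in either inclusion argument.
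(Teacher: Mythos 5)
Your proof is correct, but your handling of the reverse inclusion differs from the paper's. The paper proves $\mcM_\psi\subseteq\StabMultDom{\Phi;\omega}$ by invoking Theorem \ref{Thm:Main theorem}: it takes $a\in\mcM_\psi\cap\StabMultDomPerp{\Phi;\omega}$, uses the negative Lyapunov exponent to get $\hsnorm{\Phi^{(n_k)}_\omega(a)}\to 0$, concludes $a=0$, and then combines $\StabMultDom{\Phi;\omega}\subseteq\mcM_\psi$ with the almost-sure decomposition $\matrices=\StabMultDom{\Phi;\omega}\oplus\StabMultDomPerp{\Phi;\omega}$ to force equality. You instead observe that $n\mapsto\hsnorm{\Phi^{(n)}_\omega(a)}$ is non-increasing (contractivity of each $\phi_{T^n(\omega)}$ from Lemma \ref{Lem:Mult dom of bcp map}), so any subsequential limit of $\hsnorm{\Phi^{(n_k)}_\omega(a)}$ equals $\inf_n\hsnorm{\Phi^{(n)}_\omega(a)}$; then $a\in\mcM_\psi$ gives $\hsnorm{\psi(a)}=\hsnorm{a}$, which pins $\hsnorm{\Phi^{(n)}_\omega(a)}=\hsnorm{a}$ for every $n$ and hence $a\in\StabMultDom{\Phi;\omega}$ via the norm characterization of the multiplicative domain. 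Your route is more elementary and more self-contained: it bypasses Theorem \ref{Thm:Main theorem} and the orthogonal decomposition entirely, and in fact establishes the identity for \emph{every} $\omega\in\Omega$ rather than almost every $\omega$ (as you note, the a.e.\ qualifier is not actually needed for either inclusion). What the paper's approach buys is mainly economy of exposition at that point in the text—the main theorem is already available, so the exponential decay on $\StabMultDomPerp{\Phi;\omega}$ gives the disjointness in one line—but your monotonicity argument would serve equally well and removes an unnecessary dependence.
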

\begin{proof}
    The first statement follows because $\bcp$ is a compact subset of $\scrL$ and $\Phi_\omega\subset\bcp$. 
    To see the next part, let $\psi\in\bcp$ be a cluster point of $\Phi_\omega$, and let $\Phi^{(n_k)}_\omega\to\psi$. 
    First, we show that $\StabMultDom{\Phi; \omega}\subseteq \mcM_\psi$. 
    Let $a\in\StabMultDom{\Phi; \omega}$, so $a\in\mcM_{\Phi^{(n_k)}_\omega}$ for all $k$.
    Then by Lemma \ref{Lem:Mult dom of bcp map}, we have that 
    \begin{equation}
        \hsnorm{\psi(a)}
        =
        \lim_{k\to\infty}
        \hsnorm{\Phi^{(n_k)}_\omega(a)}
        =
        \lim_{k\to\infty}\hsnorm{a}
        =
        \hsnorm{a}.
    \end{equation}
    So, again by Lemma \ref{Lem:Mult dom of bcp map} we conclude that $a\in\mcM_\psi$. 
    Thus, $\StabMultDom{\Phi; \omega}\subseteq\mcM_\psi$. 
    On the other hand, suppose that $a\in\mcM_\psi\cap \StabMultDomPerp{\Phi; \omega}$.
    Then by Theorem \ref{Thm:More general main theorem}, we know that there is $\kappa<0$ such that 
    \begin{equation}
        \lim_{k\to\infty}
        \frac{1}{k}
        \log\hsnorm{\Phi^{(n_k)}_\omega(a)}
        =
        \kappa.
    \end{equation}
    In particular, $\lim_k\hsnorm{\Phi^{(n_k)}_\omega(a)} = 0$. 
    Thus, by Lemma \ref{Lem:Mult dom of bcp map}, we conclude that 
    \begin{equation}
    \hsnorm{a}
        =
    \hsnorm{\psi(a)}
        =
    \lim_{k\to\infty}
    \hsnorm{\Phi^{(n_k)}_\omega(a)} 
        = 
    0,
    \end{equation}
    so $a = 0$. 
    Thus, $\mcM_\psi\cap \StabMultDomPerp{\Phi; \omega} = \set{0}$, so since $\StabMultDom{\Phi; \omega}\subseteq\mcM_\psi$, we conclude that $\mcM_\psi = \StabMultDom{\Phi; \omega}$, which ends the proof. 
\end{proof}
Next, we have another simple lemma. 
\begin{lemma}\label{Lem:psi(matrices) = psi(mcMpsi)}
    For almost every $\omega\in\Omega$, any cluster point $\psi$ of $\Phi_\omega$ satisfies $\psi(\matrices) = \psi(\mcM_\psi)$. 
\end{lemma}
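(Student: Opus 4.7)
The plan is to exploit the direct sum decomposition $\matrices = \StabMultDom{\Phi; \omega} \oplus \StabMultDomPerp{\Phi; \omega}$ guaranteed by Theorem \ref{Thm:Main theorem}, together with the identification $\mcM_\psi = \StabMultDom{\Phi; \omega}$ from Lemma \ref{Lem:Cluster points}, to show that any cluster point $\psi$ of $\Phi_\omega$ annihilates $\StabMultDomPerp{\Phi; \omega}$.

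More precisely, restrict attention to the full-measure set of $\omega$ on which Theorem \ref{Thm:Main theorem} and Lemma \ref{Lem:Cluster points} apply. Fix such an $\omega$, and let $\psi$ be a cluster point of $\Phi_\omega$, realized as a limit $\Phi^{(n_k)}_\omega \to \psi$ in the Hilbert-Schmidt norm on $\scrL$. For any $a \in \StabMultDomPerp{\Phi; \omega}$, the characterization (\ref{Eqn:Main theorem, eqn 1}) gives a constant $\kappa < 0$ such that
\begin{equation}
    \lim_{n \to \infty} \frac{1}{n} \log \hsnorm{\Phi^{(n)}_\omega(a)} < \kappa,
\end{equation}
which in particular forces $\hsnorm{\Phi^{(n)}_\omega(a)} \to 0$. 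Passing to the subsequence $n_k$, continuity of the Hilbert-Schmidt norm yields $\psi(a) = \lim_k \Phi^{(n_k)}_\omega(a) = 0$.

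Now let $b \in \matrices$ be arbitrary. By the direct sum decomposition from Theorem \ref{Thm:Main theorem}, we may write $b = b_1 + b_2$ uniquely with $b_1 \in \StabMultDom{\Phi; \omega}$ and $b_2 \in \StabMultDomPerp{\Phi; \omega}$. By Lemma \ref{Lem:Cluster points}, $\StabMultDom{\Phi; \omega} = \mcM_\psi$, so $b_1 \in \mcM_\psi$. By the preceding paragraph, $\psi(b_2) = 0$, so linearity of $\psi$ gives
\begin{equation}
    \psi(b) = \psi(b_1) + \psi(b_2) = \psi(b_1) \in \psi(\mcM_\psi).
\end{equation}
This shows $\psi(\matrices) \subseteq \psi(\mcM_\psi)$, and the reverse inclusion is trivial from $\mcM_\psi \subseteq \matrices$.

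I do not anticipate any serious obstacle here: all the substantive work has already been done in Theorem \ref{Thm:Main theorem} and Lemma \ref{Lem:Cluster points}. The only nontrivial ingredient is ensuring the exponential decay on $\StabMultDomPerp{\Phi; \omega}$ applies along the particular subsequence $n_k$ converging to $\psi$, which is immediate because decay along the entire sequence $n \to \infty$ is stronger than decay along any subsequence.
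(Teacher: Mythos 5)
Your proposal is correct and follows essentially the same route as the paper's own proof: identify $\mcM_\psi = \StabMultDom{\Phi; \omega}$ via Lemma \ref{Lem:Cluster points}, use the direct sum decomposition from Theorem \ref{Thm:Main theorem}, and conclude that the perpendicular component is annihilated in the limit because the exponential decay from (\ref{Eqn:Main theorem, eqn 1}) forces $\hsnorm{\Phi^{(n_k)}_\omega(a)}\to 0$ along the convergent subsequence. No gaps.
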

\begin{proof}
    Let $\psi$ be a cluster point of $\Phi_\omega$, and let $\Phi^{(n_k)}_\omega\to\psi$. 
    By the previous lemma, we know that $\mcM_\psi = \StabMultDom{\Phi; \omega}$. 
    Thus, by the decomposition $\matrices = \StabMultDom{\Phi; \omega}\oplus\StabMultDomPerp{\Phi; \omega}$, any $a\in\matrices$ may be written as $a = b + c$ with $b\in\mcM_\psi$ and $c\in\StabMultDomPerp{\Phi; \omega}$.
    From Theorem \ref{Thm:More general main theorem}, we know that 
    \begin{equation}
        \hsnorm{\psi(c)}
        =
        \lim_{k\to\infty}
        \hsnorm{\Phi^{(n_k)}_\omega(c)}
        =
        0.
    \end{equation}
    Thus, we have that $\psi(a) = \psi(b)$. 
    Because $a\in\matrices$ was arbitrary, we see that $\psi\seq{\matrices}=\psi(\mcM_\psi)$, which concludes the proof. 
\end{proof}
This allows us to conclude the main technical result required for proving (a generalization of) Theorem \ref{Thm:Abelian}.
\begin{lemma}\label{Lem:Asym eb iff cluster point}
    For almost every $\omega\in\Omega$, the following are equivalent. 
    \begin{enumerate}[label = (\alph*)]
        \item $\Phi$ is asymptotically entanglement breaking at $\omega$. 
        
        \item There exists a cluster point of $\Phi_\omega$ in $\eb$. 

        \item Every cluster point of $\Phi_\omega$ is in $\eb$. 
    \end{enumerate}
\end{lemma}
\begin{proof}
    Assume (a) and let $\psi$ be a cluster point of $\Phi_\omega$. 
    Then there is a subsequence $\seq{\Phi^{(n_k)}_\omega}_{k\in\mbN}$ such that $\Phi^{(n_k)}_\omega\to\psi$. 
    On the other hand, we have that $d_{\operatorname{HS}}\seq{\Phi^{(n_k)}_\omega, \eb}\to 0.$
    Thus, because $d_{\operatorname{HS}}(\cdot, \eb)$ is continuous and $\eb$ is closed, it must be that $d_{\operatorname{HS}}(\psi, \eb) = 0$, hence $\psi\in\eb$. 
    Conversely, to see that (c) implies (a), it suffices to show that for any subsequence $\seq{n_k}_{k\in\mbN}\subset\mbN$, there exists a further subsequence $\seq{n_{k_j}}_{j\in\mbN}\subset\seq{n_k}_{k\in\mbN}$ such that 
    \begin{equation}
        \lim_{j\to\infty}
        d_{\operatorname{HS}}\seq{\Phi^{\seq{n_{k_j}}}_\omega, \eb}
        =
        0.
    \end{equation}
    To see this, note that for any subsequence $\seq{n_k}_{k\in\mbN}$, the set $\seq{\Phi^{(n_k)}_\omega}_{k\in\mbN}$ is a subset of the compact set $\bcp$. 
    Thus, there exists a convergent subsequence $\seq{\Phi^{\seq{n_{k_j}}}_\omega}_{j\in\mbN}$, converging to some $\psi\in\eb$ by assumption (c). 
    Because $\psi\in\eb$, we have in particular that 
    \begin{equation}
        \lim_{j\to\infty}
        d_{\operatorname{HS}}\seq{\Phi^{\seq{n_{k_j}}}_\omega, \eb}
        =
        0,
    \end{equation}
    which by the above discussion allows us to conclude that (c) is true. 
    Of course, (c) immediately implies (b), so it remains to show that (b) implies (c). 
    So, assume (b). 
    By Lemma \ref{Lem:Cluster points} and Theorem \ref{Thm:Stormer}, $\Phi_\omega$ having an entanglement breaking cluster point implies that $\StabMultDom{\Phi; \omega}$ is abelian. 
    So, if we let $\psi\in\bcp$ be an arbitrary cluster point of $\Phi_\omega$, then another application of Lemma \ref{Lem:Cluster points} shows that $\mcM_\psi$ is abelian. 
    By Lemma \ref{Lem:psi(matrices) = psi(mcMpsi)}, we know that $\psi(\matrices) = \psi(\mcM_\psi)$.
    Now, because $\psi\vert_{\mcM_\psi}$ is a $*$-homomorphism, and because $\mcM_\psi$ is a $C^*$-algebra, we know that $\psi\seq{\mcM_\psi}$ is a $C^*$-algebra \cite[Theorem 3.1.6]{Murphy2007C-AlgebrasTheory}.
    So, because $\mcM_\psi$ is abelian, we know that $\psi\seq{\mcM_\psi}$ is also abelian. 
    Therefore, $\psi\seq{\matrices} = \psi\seq{\mcM_\psi}$ is an abelian $C^*$-algebra, so by Theorem \ref{Thm:Stormer}, we conclude that $\psi\in\eb$. 
    Since $\psi$ was an arbitrary cluster point of $\Phi_\omega$, the proof is concluded. 
\end{proof}
We now have enough to prove the appropriate generalization of Theorem \ref{Thm:Abelian} without the assumption \Indep. 
Recall that $X_{\operatorname{aeb}} = \set{\omega\in\Omega\,\,:\,\, \Phi_\omega \text{ is a.e.b. at }\omega}$.
\begin{thm}\label{Thm:General Abelian}
    Assume \BCPa. 
    Then $\mu[X_{\operatorname{aeb}}]\in\set{0, 1}$. 
    Moreover, $\mu[X_{\operatorname{aeb}}] = 1$ if and only if $\StabMultDom{\Phi}$ is abelian with positive probability. 
\end{thm}
\begin{proof}
    We begin by showing that $\mu[X_{\operatorname{aeb}}]\in\set{0, 1}$. 
    %
    By Lemma \ref{Lem:Asym eb iff cluster point}, we know that
    \begin{equation}
        \mu[X_{\operatorname{aeb}}]
        =
        \mu[
        \omega\in\Omega\,\,:\,\, \text{there is a cluster point of $\Phi_\omega$ in $\eb$}
        ].
    \end{equation}
    If, however, $\omega\in\Omega$ is such that $\Phi_\omega$ has a cluster point $\psi\in\eb$, then by Theorem \ref{Thm:EB}, we see that $\psi' = \psi\circ\phi_{T^{-1}(\omega)}\in\eb$ and  $\psi'$ is a cluster point of $\Phi_{T^{-1}(\omega)}$. 
    Therefore, by Proposition \ref{Prop:Basics_of_ergodicity}, we conclude that 
    \begin{equation}
         \mu[
        \omega\in\Omega\,\,:\,\, \text{there is a cluster point of $\Phi_\omega$ in $\eb$}
        ]
        \in
        \set{0,1}, 
    \end{equation}
    which is what we wanted. 
    Next, we show that $\mu[X_{\operatorname{aeb}}] = 1$ is equivalent to $\StabMultDom{\Phi}$ being abelian with positive probability. 
    First, assume $\mu[X_{\operatorname{aeb}}] = 1$. 
    Then by Lemma \ref{Lem:Cluster points} we see that $\StabMultDom{\Phi; \omega} = \mcM_\psi$ almost surely, where $\psi$ is any cluster point of $\Phi_\omega$. 
    But by Theorem \ref{Thm:Stormer} and Lemma \ref{Lem:Asym eb iff cluster point}, we have that $\psi\in\eb$ hence $\mcM_\psi$ is abelian. 
    Therefore, on the event $X_{\operatorname{aeb}}$, we have that $\StabMultDom{\Phi}$ is abelian. 
    Because $\mu[X_{\operatorname{aeb}}] = 1$, this occurs with positive probability. 
    Conversely, if $\StabMultDom{\Phi}$ is abelian with positive probability, then again by Theorem \ref{Thm:Stormer}, Lemma \ref{Lem:Cluster points}, and Lemma \ref{Lem:Asym eb iff cluster point}, we have that $\mu[X_{\operatorname{aeb}}]>0$. 
    But we have already seen that $\mu[X_{\operatorname{aeb}}]\in\set{0, 1}$, hence $\mu[X_{\operatorname{aeb}}] = 1$, which concludes the proof. 
\end{proof}
We may now conclude the corollary of this result advertised in the introduction above, which we recall now. 
\PPT*
\begin{proof}
    It was shown in \cite{Rahaman2018EventuallyMaps} that any PPT bcp map $\psi:\matrices\to\matrices$ satisfies that $\mcM_\psi$ is abelian. 
    Thus, because $\phi_\omega$ is PPT with positive probability, the almost sure inclusion
    \begin{equation}
        \mcM_{\phi_\omega}\supseteq \StabMultDom{\Phi; \omega}
    \end{equation}
    implies that $\StabMultDom{\Phi; \omega}$ is abelian almost surely.
    Thus, by Theorem \ref{Thm:Abelian}, we may conclude that $\Phi$ is asymptotically entanglement breaking, as desired. 
\end{proof}
The following example shows that the above theorem is not implied by the $\mathrm{PPT}^2$ conjecture. 
Better yet, we show that there are ergodic quantum processes $\Phi$ such that $\mu[\phi\text{ is PPT}]>0$ while 
\begin{equation}
    \mu[
    \text{only one of }\phi_1, \phi_2, \dots, \phi_n\text{ is PPT}
    ]
    =
    1
\end{equation}
for arbitrary $n\in\mbN$.
\begin{example}[Composition structure defined by graphs]\label{Ex:Markovian}
    Let $\Gamma = \seq{V_\Gamma, E_\Gamma}$ be a strongly connected directed finite graph, where $V_\Gamma$ denotes the finite vertex set and $E_\Gamma$ denotes the finite edge set. 
    For an edge $e$, denote it by $e = \seq{v, w}$ where $e$ is the edge going from $v$ to $w$. 
    Let $\vec{p} = \seq{p_{e}}_{e\in E_{\Gamma}}$ be a set of numbers in $(0, 1]$ such that 
    \begin{equation}
        \sum_{w\in V_\Gamma\,\,:\,\, \seq{v, w}\in E_\Gamma}
            p_{\seq{v, w}}
                =
            1.
    \end{equation}
    Because $\Gamma$ is strongly connected and each entry of $\vec{p}$ is strictly positive, this defines an irreducible Markov chain $X$ with state space $V_\Gamma$ with transition probabilities
    \begin{equation}
        \operatorname{Prob}(X = w\,\,\vert\,\, X = v)
        =
        p_{\seq{v, w}},
    \end{equation}
    which follows because each element of $\vec{p}$ is strictly positive and $\Gamma$ is strongly connected \cite[Ch. I.9]{Schaefer1974BanachOperators}. 
    In particular, if we let $\boldsymbol{p}$ be the unique invariant measure of this Markov chain, let $\Omega = V_\Gamma^\mbZ$ and $\mcF$ be the $\sigma$-algebra generated by cylinder sets, and let $T:\Omega\to\Omega$ be the shift $T\seq{\seq{v_n}_{n\in\mbZ}}
    =
    \seq{v_{n+1}}_{n\in\mbZ}$, then letting $\mbP$ denote the probability measure on $\Omega$ induced by $\seq{\boldsymbol{p}, \vec{p}}$, we have that $T:\Omega\to\Omega$ is ergodic for $\mbP$ \cite[Ch. 7.2]{Viana2015FoundationsTheory}. 
    Within this Markovian framework, we can easily construct the example we described above. 
    Indeed, the goal is to use the Markovian structure to dictate that if $\phi_k$ is PPT, then none of $\phi_{k+1}, \dots, \phi_{k+n-1}$ is PPT. 
    To be more precise, let $V_{\mathrm{PPT}}\subsetneq V_\Gamma$ be a distinguished nonempty set of vertices. 
    For all $v\in V_{\mathrm{PPT}}$, let $\phi_v$ be a PPT bcp map, and for all $v\in V_\Gamma\setminus V_{\mathrm{PPT}}$, let $\phi_v$ be a \textit{non}-PPT bcp map. 
    Define $\phi:\Omega\to\bcp$ by 
    \begin{equation}
        \phi_{\seq{v_n}_{n\in\mbZ}}
            :=
        \phi_{v_0}.
    \end{equation}
    Then by Theorem \ref{Thm:PPT}, we may conclude that, for $\mbP$-almost every $\seq{v_n}_{n\in\mbZ}\in\Omega$, we have that 
    \begin{equation}
        \lim_{n\to\infty}
            d_{\operatorname{HS}}(\phi_{v_n}\circ\cdots\circ \phi_{v_0}, \eb)
                =
            0.
    \end{equation}
    This construction gives us great freedom in choice of frequency of occurrence of PPT maps in the above compositions, so that, for example, we may force these compositions to never contain consecutive compositions of PPT maps. 
    It is not hard to see that, given any $k\in\mbN$, we may construct $\Gamma$ in such a way that, given $\phi_{v_0}$ is PPT (i.e., given $v_0\in V_{\mathrm{PPT}}$), we have that $\phi_{v_i}$ is not PPT for all $i = 1, \dots, k-1$ (i.e., the shortest path from $v_0$ back to $V_{\mathrm{PPT}}$ is of length at least $k$). 
\end{example}
\begin{remark*}
    There is a natural question implicit in the above discussion of asymptotics, highlighted by the above example:
    what is the rate of convergence in the asymptotic approach to the set $\eb$?
    In particular, it seems interesting to ask how the quantity $\mu[\phi\text{ is PPT}]$ is related to 
    \begin{equation}\label{Eqn:remark}
        \begin{split}
            f:\mbN\times (0, \infty) &\to [0, 1]\\
            f(n, \varepsilon)
                &:=
            \mu\Big[
            d_{\operatorname{HS}}\seq{
            \Phin, \eb 
            }
            < \varepsilon
            \Big],
        \end{split}
    \end{equation}
    and how the set PPT functions in which $\phi$ takes values affects $f$.
    Indeed, an alternative, equivalent version of the $\mathrm{PPT}^2$ conjecture (see \cite[Conjecture 4.1]{Christandl2019WhenBreaking}) states that the composition of any two PPT maps $\varphi\circ\psi$ is entanglement breaking, and therefore, assuming this conjecture holds, it is obvious that the approach to $\eb$ is immediate given any consecutive compositions of PPT maps. 
    However, in the case where there are no consecutive compositions of PPT maps, as in the above example, the quantity (\ref{Eqn:remark}) becomes the primary quantity of interest.
    This seems to be an interesting route for further exploration. 
\end{remark*}
Next, we change gears to the question of eventual entanglement breaking. 
Recall the definition of eventual entanglement breaking. 
\begin{definition}[Eventual entanglement breaking and index of separability]
    Let $\Psi = \seq{\psi_n}_{n\in\mbZ}$ be a bi-infinite sequence of linear maps $\psi_n:\matrices\to\matrices$. 
    If there exists $N\in\mbN$ such that $\psi_{N-1}\circ\cdots\circ\psi_0\in\eb$, we say that $\Psi$ is eventually entanglement breaking (e.e.b.). 
    We define $\iota(\Psi))$ to be the minimal $N\in\mbN$ such that $\psi_{N-1}\circ\cdots\circ\psi_0\in\eb$, and if no minimum exists we define $\iota(\Psi) = \infty$.
    We call $\iota(\Psi)$ the index of separability of $\Psi$. 
    We call $\Phi$ e.e.b. at $\omega\in\Omega$ if $\Phi_\omega$ is e.e.b., and we say that $\Phi$ is e.e.b. if $\Phi_\omega$ is e.e.b. for almost every $\omega\in\Omega$. 
\end{definition}
By abuse of notation, we write $\iota:\Omega\to\mbN\cup\set{\infty}$ to denote the random number $\iota\!\seq{\Phi}$, and we call $\iota$ the index of separability of $\Phi$. 
It is clear from the definition of $\iota$ that $\iota$ is a $\Phi$-stopping time, and, by a simple application of Theorem \ref{Thm:EB}, we can already prove Theorem \ref{Thm:Index_of_serparbility}. 
\Separability*
\begin{proof}
    As noted above, it is clear that $\iota$ is a $\Phi$-stopping time, so it just remains to show that $\mu[\iota < \infty]\in\set{0, 1}$. 
    To do this, we show that $T^{-1}\set{\iota < \infty}\subset \set{\iota<\infty}$, so that by Proposition \ref{Prop:Basics_of_ergodicity} we will conclude that $\mu[\iota < \infty]\in\set{0, 1}$.
    To see this fact, assume that $\iota(\omega)<\infty$. 
    Then there is $n\in\mbN$ such that $\Phin_\omega\in\eb$. 
    Therefore, by Theorem \ref{Thm:EB} (c), it holds that $\Phi^{(n+1)}_{T^{-1}(\omega)} = \Phin_\omega\circ\phi_{T^{-1}(\omega)}\in\eb$. 
    Therefore, $n+1$ satisfies $\iota(T^{-1}(\omega))\leq n+1 < \infty$. 
    This shows that $T^{-1}\set{\iota < \infty}\subset \set{\iota<\infty}$, and, as described above, the result follows.
\end{proof}
We now set out to prove our final theorem, Theorem \ref{Thm:EEB}. 
Recall that a quantum channel $\psi:\matrices\to\matrices$ is called strictly positive if $\psi(p)$ is full rank whenever $p\geq 0$ is nonzero, that $\psi$ is called primitive if there exists $n\in\mbN$ such that $\psi^n$ is strictly positive, and $\psi$ is called irreducible if there is a unique density matrix $\rho$ such that $\psi(\rho) = \rho$ and $\rho$ is positive definite.
In the nondisordered theory, the well-developed theory of repeated compositions of \textit{fixed} quantum channels (see \cite{MichaelM.Wolf2012QuantumTour}) allows one to conclude that any primitive quantum channel is eventually entanglement breaking. 
A generalization of primitivity for compositions of random quantum channels is the notion of eventual strict positivity mentioned in the introduction: 
\begin{description}
\hypertarget{ESPa}{}
    \item[\ESPa] For almost every $\omega\in\Omega$, there exists $N_0\in\mbN$ such that for all $n\geq N_0$, $\Phin_\omega$ is strictly positive. 
\end{description}
I.E., if $\Phi$ satisfies \ESPa, we call $\Phi$ eventually strictly positive. 
Our main technical result here is the following result that refines \cite[Theorem 2]{Movassagh2022AnStates} in the bistochastic case. 
\begin{prop}\label{Prop:ESP}
    Assume $\phi\in\bcp$ almost surely. 
    Let $\mcA_\Phi$ denote the stabilized multiplicative domain of $\Phi$. 
    Then $\Phi$ satisfies \ESPa\, if and only if $\StabMultDom{\Phi} = \mbC\mbI$ almost surely. 
    Moreover, if these equivalent conditions hold, there is deterministic $\gamma\in (0, 1)$ such that for almost every $\omega\in\Omega$, we have
    \begin{equation}\label{Eqn:Thm:ESP, eqn 1}
        \hsnorm{
        \Phin_\omega
        -
        \Delta_1
        }
        \leq 
        C_{\gamma; \omega} \gamma^n,
    \end{equation}
    for all $n\in\mbN$, where $\Delta_1$ is the map $\Delta_1(a) = d^{-1}\tr{a}\mbI$ and $C_{\gamma}:\Omega\to(0, \infty)$ is measurable.
\end{prop}
\begin{proof}
    We start by showing that \ESPa\, if and only if $\StabMultDom{\Phi} = \mbC\mbI$ almost surely.
    So, assume that $\Phi$ satisfies \ESPa. 
    Then for almost every $\omega\in\Omega$, there is $N\in\mbN$ such that $\Phi^{(N)}_\omega$ is strictly positive. 
    Because $\Phi^{(N)}_\omega$ is a quantum channel, we have that $\Phi^{(N)*}_\omega$ is also strictly positive. 
    Thus, $\Phi^{(N)*}_\omega\circ\Phi^{(N)}_\omega$ is strictly positive.
    Because strictly positive maps are irreducible \cite[Ch. 6.3]{MichaelM.Wolf2012QuantumTour}, we may conclude that $\Phi^{(N)*}_\omega\circ\Phi^{(N)}_\omega$ is irreducible, so by \cite[Lemma 2.1]{Rahaman2017MultiplicativeChannels} we have that $\mcM_{\Phi^{(N)}_\omega} = \mbC \mbI$.
    Because this holds for almost every $\omega\in\Omega$, we conclude that $\StabMultDom{\Phi}\subseteq\mbC\mbI$ almost surely.
    However, $\StabMultDom{\Phi}$ is a unital algebra, hence $\StabMultDom{\Phi} = \mbC\mbI$ almost surely, as claimed. 
    Conversely, assume $\StabMultDom{\Phi} = \mbC\mbI$ almost surely.
    Our strategy to show that $\Phi$ satisfies \ESPa is to apply Theorem \ref{Thm:Main theorem} to conclude that $\Phin_\omega$ is asymptotic to the replacement channel $\Delta_1$ described in the statement of the proposition above, and then to use continuity arguments to conclude that $\Phin_\omega(\rho)$ is positive definite for some $n$ almost surely for any density matrix $\rho\in\matrices$. 
    So, let $\rho\in\matrices$ be a density matrix. 
    Then for any $n\in\mbN$, we have that 
    \begin{equation}\label{Eqn:ESP, eqn 1}
        \Phin_\omega(\rho)
        =
        d^{-1}\mbI 
        +
        \Phin_\omega\seq{\rho - d^{-1}\mbI},
    \end{equation}
    because $\Phin_\omega$ is unital almost surely.
    Note that $\rho - d^{-1}\mbI\in\StabMultDomPerp{\Phi; \omega} = \seq{\mbC\mbI}^\perp$, since $\seq{\mbC\mbI}^\perp$ is the set of matrices with zero trace.
    Now, because the set $\seq{\mbC\mbI}^\perp\cap\set{a\in\matrices\,\,:\,\,\hsnorm{a} \leq \sqrt{2]}}$ is compact, from Theorem \ref{Thm:Main theorem}, if we let $\kappa = \lambda_2$ be the second Lyapunov exponent of $\Phi$, then there is a constant $C>0$ independent of $\rho$ such that 
    \begin{equation}\label{Eqn:ESP, eqn 2}
        \infnorm{\Phin_\omega\seq{\rho - d^{-1}\mbI}}
        \leq 
        C e^{-n\kappa}
    \end{equation}
    for all $n$, uniformly in $\rho$. 
    Therefore, from (\ref{Eqn:ESP, eqn 1}) we can conclude that there is some $N$ for which $ \sigma\seq{\Phi^{(N)}_\omega(\rho)}\subset (0, \infty)$ holds for all $\rho$. 
    Because this holds almost surely, we conclude $\Phi$ satisfies \ESPa. 
    The next part of the theorem follows directly from \cite[Theorem 2]{Movassagh2022AnStates}, but we give a proof here for completeness. 
    First, note that by the same argument that lead to (\ref{Eqn:ESP, eqn 2}), we know that for almost every $\omega\in\Omega$, there is a constant $C_\omega' \in (0, \infty)$ for which (\ref{Eqn:ESP, eqn 2}) holds. 
    Therefore, from (\ref{Eqn:ESP, eqn 1}) and (\ref{Eqn:ESP, eqn 2}), we conclude that 
    \begin{equation}
        \sup_{\rho\in\set{\text{density matrices}}}
        \hsnorm{
        \Phin_\omega(\rho)
        -
        \Delta_1(\rho)
        }
        \leq  
        C_\omega' e^{-n\kappa}
    \end{equation}
    holds for almost every $\omega\in\Omega$. 
    Because any $a\in\matrices$ can be written as a linear combination of four density matrices, $a = c_1 \rho_1 + c_2\rho_2 + c_3\rho_3 + c_4\rho_4$, with each $|c_i|\leq \infnorm{a}$, we may conclude from this equation that 
    \begin{equation}
        \sup_{a\in\matrices\,\,:\,\,\hsnorm{a} = 1}
        \hsnorm{
        \Phin_\omega(a)
        -
        \Delta_1(a)
        }
        \leq 
        4C''
        C_\omega'
        e^{-n\kappa}
    \end{equation}
    holds almost surely, where $C''>0$ is a universal constant giving the equivalence of norms $\hsnorm{\cdot}$ and $\infnorm{\cdot}$. 
    Thus, $C_\omega =  4C''
        C_\omega'$ and $\gamma = e^{-n\kappa}$ fulfills the requirements of the proposition, which concludes the proof. 
\end{proof}
This proposition finally allows us to conclude the version of Theorem \ref{Thm:EEB} without assumption \Indep. 
\begin{thm}
    Assume \BCPa. 
    If $\StabMultDom{\Phi} = \mbC\mbI$ almost surely, then $\Phi$ is e.e.b.
\end{thm}
\begin{proof}
    Let $\Delta_1:\matrices\to\matrices$ be the map $\Delta_1(a) = d^{-1}\tr{a}\mbI$, which is clearly entanglement breaking. 
    By arguing as in \cite[Discussion following Theorem IV.2]{Rahaman2018EventuallyMaps}, from \cite{Gurvits2002}, we know that there is an open neighborhood $W$ of $\Delta_1$ such that $\psi\in W$ implies $\psi\in\eb$.
    Thus, by Proposition \ref{Prop:ESP}, we conclude that almost surely there exists some finite $N\in\mbN$ such that $\Phi^{(N)}\in W$, hence $\Phi^{(N)}\in\eb$. 
    That is, $\Phi$ is eventually entanglement breaking, as claimed. 
\end{proof}
%
%
%
%
%

\section{Final Remarks}\label{Sec:Final}
We conclude with some final remarks. 
We first discuss how the assumption of bistochasticity is used in the above, and see how, if we drop this assumption, how we might be able to prove a theorem like Theorem \ref{Thm:Main theorem}. 
Then we describe a result of independent interest that generalizes a fact proved by Kuperberg. 
\subsection{The bistochasticity assumption}
In \cite{Ekblad2024ReducibilityInteractions}, it is shown that for any ergodic quantum process $\Phi$ defined by $\seq{T, \phi}$ where $\phi_\omega$ is unital almost surely, there is a random density matrix $\varrho:\Omega\to\matrices$ such that 
\begin{equation}\label{Eqn:Steady_state}
    \phi^*_{T(\omega)}(\varrho_{T(\omega)})
    =
    \varrho_{\omega}
\end{equation}
holds almost surely. 
In general, however, $\varrho$ may not be full rank and may not be deterministic. 
Within this context, therefore, the bistochasticity assumption amounts to taking $\varrho = \mbI$ almost surely. 
It is certainly a natural question to ask whether a Theorem \ref{Thm:Main theorem} can be proved under only the assumption of almost sure unitality, but owed to the fact that $\varrho$ need not even be full-rank almost surely, there would need to be some substantial modifications made to the arguments we gave here in order to prove it. 
This is an interesting route for further work, and, in order to more concretely highlight the differences between the bistochastic regime and the general trace-preserving regime, we now give some basic results towards generalizing Rahaman's result (Theorem \ref{Thm:Rahaman}) \textit{without} the assumption of trace-preservation, thereby giving a rough layout for how the disordered case might go. 
A first step is to notice that by the Perron-Frobenius theory of positive maps on finite-dimensional $C^*$-algebras \cite{Evans1978SpectralC-Algebras}, for any unital completely positive map $\psi:\matrices\to\matrices$, there is a (not necessarily unique) density matrix $\varrho\in\matrices$ such that 
\begin{equation}
    \psi^*(\varrho) = \varrho.
\end{equation}
To refer to such a $\varrho$, we call $\varrho$ a Perron-Frobenius eigenmatrix of $\psi^*$. 
We shall make the assumption that $\varrho$ is full-rank, so that, in particular, it defines a nondegenerate inner product $\inner{\cdot}{\cdot}_\varrho$ on $\matrices$ by the rule 
\begin{equation}
    \inner{a}{b}_\varrho 
    =
    \tr{\varrho b^* a}.
\end{equation}
We write $\|\cdot\|_\varrho$ to denote the norm induced by $\inner{\cdot}{\cdot}_\varrho$, and give a subset $V\subset\matrices$, we write $V^{\perp_\varrho}$ to denote the set $\set{a\in\matrices\,\,:\,\,\inner{v}{a}_\varrho\,\text{for all $v\in V$}}$.
Then we have the following generalization of Lemma \ref{Lem:Mult dom of bcp map}. 
\begin{lemma}\label{Lem:Final_remarks}
    For a unital completely positive map $\psi$ with a full-rank Perron-Frobenius eigenmatrix $\varrho$ of $\psi^*$, we have
    \begin{equation}
        \mcM_\psi
        =
        \set{a\in\matrices\,\,:\,\, \|a\|_\varrho 
        =
        \|\psi(a)\|_\varrho
        \text{ and }
        \|a^*\|_\varrho 
        =
        \|\psi(a^*)\|_\varrho
        }
    \end{equation}
    and
    \begin{equation}
         \mcM_\psi^{\perp_\varrho}\setminus\set{0}
         \subset 
         \set{
         a\in\matrices\,\,:\,\, \|a\|_\varrho > \|\psi(a)\|_\varrho
         \text{ and }
         \|a^*\|_\varrho > \|\psi(a^*)\|_\varrho
         }.
    \end{equation}
    In particular, $\|\psi\|_\varrho \leq 1$. 
\end{lemma}
\begin{proof}
    Assume $a\in\mcM_\psi$. 
    Then from Theorem \ref{Thm:Choi} (which applies to any unital, not necessarily trace-preserving maps), we have that $\psi(a^*a) = \psi(a)^*\psi(a)$ and $\psi(aa^*) = \psi(a)\psi(a)^*$. 
    Thus, since $\psi^*(\varrho) = \varrho$, we conclude that $\tr{\varrho\psi(a^*a)}
        =
        \tr{\varrho a^*a}$, i.e., $\|a\|_\varrho = \|\psi(a)\|_\varrho$.
        Similarly, because $\psi(aa^*) = \psi(a)\psi(a)^*$, we conclude $\|a^*\|_\varrho = \|\psi(a^*)\|_\varrho$.
    Conversely, suppose $\|a\|_\varrho = \|\psi(a)\|_\varrho$ and $\|a^*\|_\varrho 
        =
        \|\psi(a^*)\|_\varrho$.
    By the Schwarz inequality and the fact that $\varrho\geq 0$, we know that $\varrho^{1/2}\psi(a^*a)\varrho^{1/2} - \varrho^{1/2}\psi(a)^*\psi(a)\varrho^{1/2}\geq 0$, so because $\psi^*(\varrho) = \varrho$, we conclude that
    \begin{equation}
        \tr{
        \varrho^{1/2}\psi(a^*a)\varrho^{1/2} - \varrho^{1/2}\psi(a)^*\psi(a)\varrho^{1/2}
        }
        =
        \|a\|_\varrho^2
        -
        \|\psi(a)\|_\varrho^2
        =
        0.
    \end{equation}
    Thus, because $\varrho$ is full rank, this implies that $\psi(a^*a) = \psi(a)^*\psi(a)$. 
    By arguing the same for $a^*$ using that $\|a^*\|_\varrho 
        =
        \|\psi(a^*)\|_\varrho$, we conclude $\psi(aa^*) = \psi(a)\psi(a)^*$.
    To see the last part, notice that the Schwarz inequality gives 
    \begin{equation}
        \|a\|_\varrho^2 
        =
        \tr{\varrho a^*a}
        =
        \tr{\varrho \psi(a^*a)}
        \geq 
        \tr{\varrho \psi(a)^*\psi(a)}
        =
        \|\psi(a)\|_\varrho^2
    \end{equation}
    for any $a\in\matrices$. 
    So, since $\mcM_\psi\cap\mcM_{\psi}^{\perp_\varrho} = \set{0}$, by the above any $a\in \mcM_{\psi}^{\perp_\varrho}$ must satisfy $\|a\|_\varrho > \|\psi(a)\|_\varrho$ and $\|a^*\|_\varrho > \|\psi(a^*)\|_\varrho$, which concludes the proof. 
\end{proof}
A corollary of this is the following analog of Lemma \ref{Lem:Comp of mult doms}.
\begin{lemma}\label{Lem:Final_remakrs_2}
    For a unital completely positive map $\psi$ such that there is a full-rank Perron-Frobenius eigenmatrix $\varrho$ for $\psi^*$, $\mcM_{\psi^{n+1}} = \set{a\in\mcM_{\psi^n}\,\,:\,\,\psi(a)\in\mcM_{\psi}}$ for all $n\in\mbN$. 
\end{lemma}
\begin{proof}
    Let $\varrho$ be a fixed Perron-Frobenius eigenmatrix of $\psi^*$. 
    Then $\varrho$ is also a Perron-Frobenius eigenmatrix of $(\psi^*)^n$ for all $n$. 
    In particular, by the Schwarz inequality we have that 
    \begin{equation}\label{Eqn:Final_remarks}
        \tr{\varrho \psi^{n+1}(a)^*\psi^{n+1}(a)}
        \leq 
        \tr{\varrho \psi^{n}(a)^*\psi^{n}(a)}
        \leq 
        \tr{\varrho a^*a}
    \end{equation}
    for any $a\in\matrices$. 
    That is, $\| \psi^{n+1}(a)\|_\varrho\leq \| \psi^{n}(a)\|_\varrho\leq \| a\|_\varrho$. 
    So, by Lemma \ref{Lem:Final_remarks}, because $a\in\mcM_{\psi^{n+1}}$ if and only if $\| \psi^{n+1}(a)\|_\varrho = \| a\|_\varrho$, the inequalities in (\ref{Eqn:Final_remarks}) all become equalities, and another application of Lemma \ref{Lem:Final_remarks} concludes the proof.  
\end{proof}
With these two lemmas in hand, one might now attempt to emulate the proof we gave for Theorem \ref{Thm:Main theorem} in order to provide a suitable generalization of Theorem \ref{Thm:Rahaman}, replacing in each place the usage of Lemma \ref{Lem:Mult dom of bcp map} by Lemma \ref{Lem:Final_remarks} and Lemma \ref{Lem:Comp of mult doms} by Lemma \ref{Lem:Final_remakrs_2}, and using the random steady state $\varrho$ from (\ref{Eqn:Steady_state}).
As we have already seen, however, we have required the assumption that $\varrho$ in the above be full-rank. 
So, there are at least two technical obstructions here: 
first, one must determine how to replace the usage of $\innerhs{\cdot}{\cdot}$ with $\inner{\cdot}{\cdot}_\varrho$, and second, one must determine how to deal with the case that $\varrho$ is not even necessarily full-rank. 
It is probable these problems can be overcome, although it will likely take new arguments to do so. 
\subsection{A generalization of a theorem of Kuperberg}
Changing directions to an entirely different issue, we take this space to give a proposition that seems to be of independent interest that generalizes the result \cite[Theorem 2.1]{Kuperberg2003TheMemory} of Kuperberg, at least in the bcp case.
We say $\psi:\matrices\to\matrices$ is the conditional expectation onto a subalgebra $\mcA\subseteq\matrices$ if $\psi$ is the orthogonal projection (in the Hilbert-Schmidt sense) onto $\mcA$.
We then are able to prove the following result.
\begin{prop}
    Assume $\phi\in\bcp$ almost surely, and that $\seq{\phi_n}_{n\in\mbZ}$ is an i.i.d. sequence. 
    Let $\mcA_\phi$ denote the stabilized multiplicative domain of $\Phi$. 
    If we let $\psi:\matrices\to\matrices$ denote the conditional expectation onto $\StabMultDom{\Phi}$, then almost surely $\seq{\phi_n\circ\cdots\circ\phi_1}_{n\in\mbN}$ contains a subsequence converging to $\psi$. 
\end{prop}
By Theorem \ref{Thm:Main theorem}, to prove this theorem it suffices to note the following fact from the theory of random walks on compact groups, which generalizes the pedestrian fact that, given $\lambda\in\mbT$, there is a sequence $\seq{n_k}_{k\in\mbN}$ such that $\lambda^{n_k}\to 1$.
For a Hilbert space $\scrK$, let $\mbU\seq{\scrK}$ denote the collection of unitary operators $U:\scrK\to\scrK$. 
\begin{prop}\label{Prop:Products of random unitaries}
     Let $\seq{\Xi, \mcG, \nu}$ be a standard probability space, and let $\scrK$ be a finite-dimensional Hilbert space. 
     Suppose that $\seq{U_n}_{n\in\mbN}$ is an i.i.d. sequence of unitary matrices $U_n:\Xi\to\mbU\seq{\scrK}$. 
     Then almost surely $\seq{U_n\cdots U_1}_{n\in\mbN}$ contains a subsequence converging to the identity map $\operatorname{Id}_\scrK$ of $\scrK$.  
\end{prop}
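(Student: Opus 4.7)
The plan is to realize the products $S_n := U_n\cdots U_1$ as the second-coordinate trajectory of a measure-preserving skew-product on $\Omega\times\mbU\seq{\scrK}$ and extract convergence to $\operatorname{Id}_{\scrK}$ from Poincar\'e recurrence. Following the reduction recalled in Section \ref{Subsec:IID}, I first lift $\seq{U_n}_{n\in\mbN}$ to the shift setting: on $\seq{\Omega, \mcF, \mu} = \seq{\Xi^\mbZ, \mcG^{\otimes\mbZ}, \nu^{\otimes\mbZ}}$ with shift $T$, take a measurable $U:\Omega\to\mbU\seq{\scrK}$ depending only on the $0$-th coordinate so that $\seq{U\circ T^{n-1}}_{n\in\mbN}$ has the same joint distribution as $\seq{U_n}_{n\in\mbN}$. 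It therefore suffices to prove the statement for these shift-driven products.

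Next, I consider the skew-product
\begin{equation}
R:\Omega\times\mbU\seq{\scrK}\to\Omega\times\mbU\seq{\scrK},\qquad R\seq{\omega, V}:=\seq{T(\omega), U(\omega)V},
\end{equation}
which iterates as $R^n\seq{\omega, V} = \seq{T^n(\omega), S_n(\omega) V}$. Let $m$ denote the Haar probability measure on $\mbU\seq{\scrK}$; a short Fubini computation using left-invariance of $m$ shows that $R$ preserves $\mu\otimes m$. The crucial observation is that the Hilbert-Schmidt norm is bi-invariant on $\mbU\seq{\scrK}$, so for any $V\in\mbU\seq{\scrK}$ and any $n\in\mbN$, $\hsnorm{S_n(\omega) V - V} = \hsnorm{S_n(\omega) - \operatorname{Id}_{\scrK}}$, an expression that does not depend on $V$.

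For each $k\in\mbN$, I set $B_k := \set{W\in\mbU\seq{\scrK}\,\,:\,\,\hsnorm{W - \operatorname{Id}_{\scrK}} < 1/k}$ and $A_k := \Omega\times B_k$; because $B_k$ is a nonempty open set, $m(B_k) > 0$ and $\seq{\mu\otimes m}(A_k) > 0$. Applying Poincar\'e recurrence to $R$ and $A_k$ produces a subset of $A_k$ of full $\seq{\mu\otimes m}$-measure whose points $\seq{\omega, V}$ satisfy $R^n\seq{\omega, V}\in A_k$ for infinitely many $n$, which by the triangle inequality together with the bi-invariance identity above yields $\hsnorm{S_n(\omega) - \operatorname{Id}_{\scrK}} < 2/k$ for infinitely many $n$. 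Since this condition depends only on $\omega$, Fubini upgrades it to a $\mu$-full-measure statement; intersecting over $k$ and extracting a diagonal subsequence then produces, for $\mu$-almost every $\omega$, indices $n_j\to\infty$ with $S_{n_j}(\omega)\to\operatorname{Id}_{\scrK}$.

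The main technical obstacle is translating the recurrence provided by Poincar\'e (which only guarantees that $S_n(\omega)V$ returns near $V$) into the desired smallness of $S_n(\omega) - \operatorname{Id}_{\scrK}$; this is exactly what the bi-invariance of the Hilbert-Schmidt norm on unitaries accomplishes, since it makes the recurrence defect $\hsnorm{S_n(\omega)V - V}$ independent of the fiber variable $V$ and allows the Fubini step to decouple the two coordinates.
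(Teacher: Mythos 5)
Your proof is correct, and it takes a genuinely different route from the paper. The paper disposes of this proposition in two lines by citing the recurrence theory of random walks on linear groups: for an i.i.d. walk on the compact group $\mbU\seq{\scrK}$, the products visit every neighborhood of $\operatorname{Id}_\scrK$ infinitely often almost surely, and the subsequence is extracted from that. You instead prove the needed recurrence from scratch: the skew product $R\seq{\omega, V} = \seq{T(\omega), U(\omega)V}$ preserves $\mu\otimes m$ by left-invariance of Haar measure, Poincar\'e recurrence applied to $\Omega\times B_k$ gives infinitely many returns, unitary invariance of the Hilbert--Schmidt norm converts $\hsnorm{S_n(\omega)V - V}$ into the fiber-independent quantity $\hsnorm{S_n(\omega) - \operatorname{Id}_\scrK}$, and Fubini (using $m(B_k)>0$) pushes the conclusion down to a $\mu$-full set of $\omega$; the diagonal extraction over $k$ is then routine. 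Each step checks out, including the strengthened (infinitely-often) form of Poincar\'e recurrence and the transfer from the shift realization back to the original i.i.d. sequence via equality of joint laws. What your argument buys is self-containedness --- no appeal to random-walk theory --- and, notably, more generality: nothing in the skew-product construction or in Poincar\'e recurrence uses independence or ergodicity, so your proof actually establishes the recurrence statement for any stationary sequence of random unitaries driven by a measure-preserving transformation, whereas the paper's citation is tailored to the i.i.d. setting in which the proposition is applied.
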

\begin{proof}
    Let $\nu$ denote the law of $U_1$, which is a Borel probability measure on the compact second countable group $\mbU\seq{\scrK}$, and let $G$ denote the closure of the group generated by the support of $\nu$, which is itself a compact second countable group.
    In \cite{Guivarch2012RecurrenceSpaces} (specifically, the discussion leading up to \cite[Theorem 1.1]{Guivarch2012RecurrenceSpaces}, where we note that $G$ is $\nu$-adapted and locally compact with polynomial growth of degree $d = 0$), it was shown that for all $g\in G$ the random walk on $G$ defined by 
    \begin{equation}
        \begin{split}
            X_n: \Xi &\to\mbU\seq{\scrK}\\
            X_n\seq{\xi} &:= U_n(\xi)\cdots U_1(\xi)
        \end{split}
    \end{equation}
    is \textit{recurrent}, i.e., for all neighborhoods $W\subseteq G$ containing the identity $\operatorname{Id}_\scrK\in G$ and for $\nu$-almost every $\xi\in\Xi$, we have that 
    \begin{equation}
        \#\set{n\in\mbN\,\,:\,\, X_n(\xi)\in W}
        =
        \infty. 
    \end{equation}
    Because this is almost surely true for arbitrary neighborhoods $W$ of $\operatorname{Id}_\scrK$, it is clear that almost surely $\seq{U_n\cdots U_1}_{n\in\mbN}$ contains a subsequence converging to $\operatorname{Id}_\scrK$, which concludes the proof. 
\end{proof}
\appendix 
%


%
\section{The Grassmannian and the Fell \texorpdfstring{$\sigma$-algebra}{l}}\label{App:Grassmannian}
Recall that, given a finite-dimensional $\mbC$-vector space $V$ with a nondegenerate inner product $\inner{\cdot}{\cdot}_V$, the Grassmannian $\mbG(V)$ of $V$ is the collection of subspaces of $V$.
It is well-known that $\mbG(V)$ has the structure of a metric space when equipped with the Grassmannian metric 
\begin{equation}\label{Eqn:App:Grassmannian metric}
    d_{\mbG(V)}\seq{W, U}
        =
    \max 
    \Bigg( 
        \sup_{w\in W_1}\inf_{u\in U} \|w - u\|_V, 
        \sup_{u\in U_1}\inf_{w\in W}\|w - u\|_V
    \Bigg)
\end{equation}
for $W, U\in\mbG(V)$, where $\|\cdot\|_V$ denotes the norm induced by $\inner{\cdot}{\cdot}_V$ and $S_1 = \{s\in S\,\,:\,\, \|S\|_V = 1\}$ for a subset $S\subset V$.
We call the Borel $\sigma$-algebra on $\mbG(V)$ that this metric space structure induces the Grassmannian $\sigma$-algebra (for $V$).
Note that $\matrices$ is a finite-dimensional vector space with the Hilbert-Schmidt inner product $\inner{\cdot}{\cdot}_{\operatorname{HS}}$, so this construction applies to $\mbG\seq{\matrices}$. 
In this section, we establish the measurability of the multiplicative domain and the stabilized multiplicative domain as mappings $\bcp\to\mbG(\matrices)$. 
To do this, it is convenient to discuss another measurability structure for some collection of subsets of $\matrices$. 
Recall the definition of the Fell $\sigma$-algebra \cite[Ch. 1.1.1]{Molchanov2017TheorySets}. 
\begin{definition}[Fell $\sigma$-algebra]
    Let $\mbE$ be a locally compact Hausdorff second countable space, and let $\scrF(\mbE)$ denote the set of closed subsets of $\mbE$. 
    Define the Fell $\sigma$-algebra $\mcB_{\scrF\seq{\mbE}}$ on $\scrF(\mbE)$ to be the topology generated by sets of the form 
    \begin{equation}
        \mfF_K  
            :=
        \set{F\in\scrF\seq{\mbE}\,\,:\,\,
        F\cap K\neq\emptyset},
    \end{equation}
    where $K$ varies over all compact subsets of $\mbE$. 
\end{definition}
By local compactness, we note that, for any open set $G\subseteq\mbE$ and compact set $K\subseteq \mbE$, sets of the form
\begin{equation}
    \mfF_G
    =
    \set{F\in\scrF(\mbE)\,\,:\,\, F\cap G\neq\emptyset}
    \quad\text{and}\quad
    \mfF_{G\cap K}
    =
    \set{F\in\scrF(\mbE)\,\,:\,\, F\cap G\cap K\neq\emptyset}
\end{equation}
are Fell-measurable (see \cite[Ch. 1.1.1]{Molchanov2017TheorySets}), a fact which we shall use later. 
We then have the following. 
\begin{lemma}\label{Lem:App:Mult dom is Fell meas}
    The map $\mcM:\bcp\to \scrF\seq{\matrices}$ defined by $\psi\mapsto \mcM_\psi$ is measurable with respect to the Fell $\sigma$-algebra.  
\end{lemma}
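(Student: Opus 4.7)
The plan is to prove Fell-measurability of $\mcM:\psi\mapsto\mcM_\psi$ by reducing it, via separability of $\matrices$, to Borel-measurability of the real-valued map $\psi\mapsto d(a,\mcM_\psi)$ for each fixed $a\in\matrices$, and then to establish the latter by expressing this distance through a resolvent approximation of a spectral projector.

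First, I would derive the algebraic identity $\mcM_\psi = \ker(\operatorname{Id}-\psi^*\psi)$. By Lemma \ref{Lem:Mult dom of bcp map}, $a\in\mcM_\psi$ iff $\|a\|_{\operatorname{HS}}^2 = \|\psi(a)\|_{\operatorname{HS}}^2 = \innerhs{\psi^*\psi(a)}{a}$, which is equivalent to $\innerhs{(\operatorname{Id}-\psi^*\psi)a}{a}=0$. Since $\hsnorm{\psi}\leq 1$ forces $\operatorname{Id}-\psi^*\psi\geq 0$, a positive semidefinite operator whose quadratic form vanishes at $a$ also annihilates $a$, so $\mcM_\psi=\ker(\operatorname{Id}-\psi^*\psi)$ for every $\psi\in\bcp$.

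Next, I would reduce Fell-measurability to pointwise Borel-measurability. The Fell $\sigma$-algebra is generated by the sets $\mfF_G=\{F:F\cap G\neq\emptyset\}$ for open $G\subseteq\matrices$. Fix a countable base $\{B(a_n,r_n)\}$ for the topology of $\matrices$. For any closed subspace $V$ and any open ball $B(a,r)$, one has $V\cap B(a,r)\neq\emptyset$ iff $d(a,V)<r$, so a standard covering argument gives
\begin{equation}
\mcM^{-1}(\mfF_G)
=
\bigcup_{\{n\,:\, B(a_n,r_n)\subseteq G\}}
\set{\psi\in\bcp \,:\, d(a_n,\mcM_\psi)<r_n}.
\end{equation}
This is a countable union, so the result follows once $\psi\mapsto d(a,\mcM_\psi)$ is Borel for every fixed $a\in\matrices$.

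For the last step, set $A_\psi:=\operatorname{Id}-\psi^*\psi\geq 0$ and use orthogonal decomposition to write
\begin{equation}
d(a,\mcM_\psi)^2
= \innerhs{a}{P_{\mcM_\psi^\perp}a}
= \innerhs{a}{P_{\operatorname{Im}A_\psi}a},
\end{equation}
since $\mcM_\psi^\perp = (\ker A_\psi)^\perp = \operatorname{Im}A_\psi$ for self-adjoint $A_\psi$. Applying the functional calculus eigenvalue-by-eigenvalue to the function $\lambda\mapsto \lambda/(\lambda+\epsilon)$, which converges pointwise on $[0,\infty)$ to $\mbbm{1}_{(0,\infty)}$, yields the pointwise limit
\begin{equation}
P_{\operatorname{Im}A_\psi}
= \lim_{\epsilon\to 0^+} A_\psi\,(A_\psi+\epsilon\operatorname{Id})^{-1}.
\end{equation}
Each $\psi\mapsto A_\psi(A_\psi+\epsilon\operatorname{Id})^{-1}$ is continuous in $\psi$ on $\bcp$ (since $A_\psi+\epsilon\operatorname{Id}$ is uniformly invertible as $\epsilon>0$), so $\psi\mapsto P_{\mcM_\psi^\perp}$ is Borel as a pointwise limit of continuous maps, and hence so is $\psi\mapsto d(a,\mcM_\psi)^2$. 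I do not anticipate a serious obstacle: the resolvent limit is a routine spectral fact, and the remaining ingredients are organizational.
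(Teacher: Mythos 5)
Your proof is correct, but it takes a genuinely different route from the paper. The paper argues directly on the generating sets of the Fell $\sigma$-algebra: it shows that for each compact $K$ the preimage $\mcM^{-1}\seq{\mfF_K}$ is \emph{closed} in $\bcp$, by taking $\psi_n\to\psi$ with $a_n\in\mcM_{\psi_n}\cap K$, extracting a convergent subsequence $a_{n_k}\to a\in K$, and using the isometric characterization of Lemma \ref{Lem:Mult dom of bcp map} to pass the identity $\hsnorm{a_{n_k}}=\hsnorm{\psi_{n_k}(a_{n_k})}$ to the limit, so $a\in\mcM_\psi$; this is shorter and needs no spectral machinery. You instead upgrade Lemma \ref{Lem:Mult dom of bcp map} to the identity $\mcM_\psi=\ker(\operatorname{Id}-\psi^*\psi)$ (valid since $\operatorname{Id}-\psi^*\psi\geq 0$), reduce Fell-measurability to Borel-measurability of $\psi\mapsto d_{\operatorname{HS}}(a,\mcM_\psi)$ via hitting sets of a countable base of balls, and obtain the latter from the resolvent approximation $P_{\mcM_\psi^\perp}=\lim_{\epsilon\to 0^+}(\operatorname{Id}-\psi^*\psi)\big((\operatorname{Id}-\psi^*\psi)+\epsilon\operatorname{Id}\big)^{-1}$, a pointwise limit of maps continuous in $\psi$. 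Two small points you should make explicit: the reduction from compact to open hitting sets (e.g.\ $\mfF_K=\bigcap_n\mfF_{G_n}$ for a shrinking sequence of open neighborhoods, a standard fact for locally compact second countable spaces), and that $\mcM_\psi\neq\emptyset$ (it contains $\mbI$), so the ball-hitting equivalence $V\cap B(a,r)\neq\emptyset\Leftrightarrow d_{\operatorname{HS}}(a,V)<r$ applies. What your approach buys is an explicit Borel formula for the projection-valued map $\psi\mapsto P_{\mcM_\psi}$, which would in fact also give directly the Grassmannian measurability that the paper derives separately in Lemma \ref{Lem:App:Measurability of mult domains}; what the paper's approach buys is brevity and the slightly stronger observation that the hitting preimages of compact sets are closed.
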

\begin{proof}
    Let $K\subset\matrices$ be compact. 
    We need to show that 
    \begin{equation}
        \mcM^{-1}\seq{\mfF_K}
            =
        \set{\psi\in\bcp\,\,:\,\,
        \mcM_\psi\cap K\neq\emptyset}
    \end{equation}
    is measurable with respect to the Borel $\sigma$-algebra on $\matrices$. 
    In fact, we have that $\mcM^{-1}\seq{\mfF_K}$ is a closed set (hence measurable): indeed, suppose that $\seq{\psi_n}_{n\in\mbN}$ is a convergent sequence in $\mcM^{-1}\seq{\mfF_K}$ with limit point $\psi\in\bcp$. 
    Then for all $n\in\mbN$, there is $a_n\in\mcM_{\psi_n}\cap K$. 
    By the compactness of $K$, therefore, there is a convergent subsequence $\seq{a_{n_k}}_{k\in\mbN}$ of $\seq{a_n}_{n\in\mbN}$, converging to some $a\in K$. 
    Thus, by Lemma \ref{Lem:Mult dom of bcp map}, we have that 
    \begin{equation}
        \|\psi(a)\|_{\operatorname{HS}}
        =
        \lim_{k\to\infty}\|\psi_{n_k}(a_{n_k})\|_{\operatorname{HS}}
        =
        \lim_{k\to\infty}\|a_{n_k}\|_{\operatorname{HS}}
        =
        \|a\|_{\operatorname{HS}},
    \end{equation}
    so by another application of Lemma \ref{Lem:Mult dom of bcp map} we conclude that $a\in\mcM_\psi$. 
    That is, $\mcM_\psi\cap K\neq\emptyset$, hence $\psi\in \mcM^{-1}\seq{\mfF_K}$, which shows that $\mcM^{-1}\seq{\mfF_K}$ is closed, concluding the proof. 
\end{proof}
With these preliminary results in hand, we are positioned to give a proof of the measurability of all the maps associated to $\mcM$ we consider in this paper. 
Note the trivial fact that all subspaces are closed sets, hence $\mbG\seq{\matrices}\subset\scrF\seq{\matrices}$, so in particular the Fell $\sigma$-algebra induces a $\sigma$-algebra on $\mbG\seq{\matrices}$.
\begin{lemma}\label{Lem:App:Measurability of mult domains}
    Let $\seq{\Xi, \mcA}$ be a measurable space and suppose that $\mcV: \Xi\to \mbG\seq{\matrices}$ defines a subspace-valued random variable that is measurable with respect to the Fell $\sigma$-algebra. 
    Then $\mcV$ defines a subspace-valued random variable that is measurable with respect to the Grassmannian metric. 
\end{lemma}
\begin{proof}
    To see that the subspace-valued function $\mcV:\Xi\to\mbG\seq{\matrices}$ is measurable with respect to the Grassmannian metric, it suffices to show that 
    \begin{equation}
    \begin{split}
        \Xi &\to[0, \infty)\\
        \xi&\mapsto d_{\mbG\seq{\matrices}}\seq{\mcV_\xi, W}
    \end{split}
    \end{equation}
    defines a measurable function for any $W\in\mbG\seq{\matrices}$, as the collection of balls generates the Grassmannian $\sigma$-algebra. 
    So, fix  $W\in\mbG\seq{\matrices}$. 
    For $i = 1, 2$, define $d_i:\Xi\to [0, \infty)$ by 
    \begin{equation}
    d_{i}(\xi)
        :=
        \begin{cases}
            \phantom{..}\displaystyle\sup_{w\in W_1}\inf_{a\in\mcV_\xi}\|w - a\|_{\operatorname{HS}} &\text{if $i = 1$}\\
            \displaystyle\sup_{a\in\seq{\mcV_\xi}_1}\inf_{w\in W}\|w - a\|_{\operatorname{HS}} &\text{if $i = 2$}.
        \end{cases}
    \end{equation}
    Since $d_{\mbG\seq{\matrices}}\seq{\mcV_\xi, W} = \operatorname{max}\seq{d_1(\psi), d_2(\psi)}$, it suffices to show that $d_1$ and $d_2$ are measurable functions of $\xi$. 
    First, some notation:
    for a nonempty subset $S\subseteq\matrices$, let $d_{\operatorname{HS}}(\cdot, S)$ denote the function defined by 
    \begin{equation}
        \begin{split}
            d_{\operatorname{HS}}(\cdot, S):\matrices &\to [0, \infty)\\
            d_{\operatorname{HS}}(a, S) &= \inf_{s\in S}\hsnorm{a- s}.
        \end{split}
    \end{equation}
    In the particular case that $S$ is a closed set, it is easy to see that $d_{\operatorname{HS}}(\cdot, S)$ is a continuous function.  
    We shall use this fact in the following. 
    To see that $d_1$ is measurable, fix a countable dense subset $\set{w_n}_{n\in\mbN}$ of $W_1$. 
    Because $\mcV_\xi$ is a closed subset of $\matrices$, the map $d_{\operatorname{HS}}\seq{\cdot, \mcV_\xi}$ is continuous. 
    In particular, we see that 
    \begin{align*}
        d_1(\xi)
            &= 
        \sup_{w\in W_1}
        d_{\operatorname{HS}}\seq{w, \mcV_\xi}\\
            &= 
        \sup_{n\in\mbN}\,
        d_{\operatorname{HS}}\seq{w_n, \mcV_\xi}.
    \end{align*}
    Thus, we just need to show that $\xi\mapsto d_{\operatorname{HS}}\seq{w_n, \mcV_\xi}$ is measurable for any $n$. 
    To do this, fix $r>0$ and compute
    \begin{align*}
        \set{\xi\in\Xi\,\,:\,\,\inf_{a\in\mcV_\xi}\hsnorm{w_n - a} \geq r}
            &= 
         \set{\xi\in\Xi\,\,:\,\,\text{for all $a\in\mcV_\xi$, }\hsnorm{w_n - a} \geq r}\\
            &\hspace{-5mm}= 
        \matrices\setminus\set{\xi\in\Xi\,\,:\,\,
            \text{there is $a\in\mcV_\xi$ with }\hsnorm{w_n - a} < r}\\
            &\hspace{-5mm}= 
         \matrices\setminus\set{\xi\in\Xi\,\,:\,\,
         \mcV_\xi\cap B_r(w_n)\neq\emptyset},
    \end{align*}
    which is measurable since $\xi\mapsto\mcV_\xi$ is Fell measurable. 
    Thus, because $r>0$ was arbitrary, we may conclude that $\xi\mapsto d_1(\xi)$ is measurable. 
    To see that $d_2$ is measurable, again fix $r > 0$ and compute 
    \begin{align*}
        \set{\xi\in\Xi\,\,:\,\,
            d_2(\xi) > r}
                &= 
        \set{\xi\in\Xi\,\,:\,\,
            \text{there is $a\in\seq{\mcV_\xi}_1$ with }d_{\operatorname{HS}}(a, W) > r}
                    \\
                &= 
        \set{\xi\in\Xi\,\,:\,\,
                \mcV_\xi\cap\seq{\matrices}_1\cap d_{\operatorname{HS}}(\cdot, W)^{-1}\big((r, \infty)\big)
                \neq\emptyset
                }.
    \end{align*}
    Thus, since $W$ is a closed set, $d_{\operatorname{HS}}(\cdot, W)^{-1}\big((r, \infty)\big)$ is an open set, so in particular $d_{\operatorname{HS}}(\cdot, W)^{-1}\big((r, \infty)\big)$ may be written as a countable increasing union of compact subsets of $\matrices$.
    Thus, by the Fell measurability of $\xi\mapsto\mcV_\xi$, we may conclude that $\set{\xi\in\Xi\,\,:\,\,
            d_2(\xi) > r}$ is measurable. 
    In particular, because $r>0$ was arbitrary, we conclude that $d_2$ is measurable, and the proof is concluded. 
\end{proof}
Now, note that if $\mcV_n:\Omega\to \scrF\seq{\matrices}$ are Fell-measurable for all $n\in\mbN$, then 
\begin{equation}
    \begin{split}
        \bigcap_{n\in \mbN} \mcV_n:\Omega&\to\scrF\seq{\matrices}\\
        \omega &\mapsto \bigcap_{n\in \mbN} \mcV_{n; \omega}
    \end{split}
\end{equation}
is also Fell-measurable. 
Thus, the above lemma gives us the following. 
\begin{cor}\label{Cor:App:All Ms are Grassmannian measurable}
    The following maps are measurable with respect to the Grassmannian metric. 
    \begin{enumerate}[label = (\alph*)]
        \item The map $\mcM:\bcp\to\mbG\seq{\matrices}$ given by $\psi\mapsto\mcM_\psi$.
        \item The map $\StabMultDom{\phantom{\cdot}}:\bcp\to\mbG\seq{\matrices}$ given by $\psi\mapsto\StabMultDom{\psi}$.
        \item The map $\Omega\to\mbG\seq{\matrices}$ defined by $\omega\mapsto\mcM_{\Phin_\omega}$, for all $n\in\mbN$.
        \item The map $\Omega\to\mbG\seq{\matrices}$ defined by $\omega\mapsto\StabMultDom{\Phi; \omega}$. 
    \end{enumerate}
\end{cor}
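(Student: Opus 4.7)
The plan is to deduce all four statements as near-immediate consequences of the two lemmas just proved, together with the countable-intersection fact for Fell-measurable random closed sets recorded in the preamble to the corollary.

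Part (a) follows at once: Lemma \ref{Lem:App:Mult dom is Fell meas} gives Fell-measurability of $\psi\mapsto\mcM_\psi$, the multiplicative domain is always a $*$-subalgebra (hence a linear subspace), so Lemma \ref{Lem:App:Measurability of mult domains} upgrades Fell-measurability to Grassmannian-measurability.

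For part (c), the extra step is to observe that $\omega\mapsto\Phin_\omega$ is a measurable map $\Omega\to\bcp$ for each fixed $n$. This decomposes as the composition of the measurable maps $\omega\mapsto\phi_{T^k(\omega)}$ for $0\leq k\leq n-1$ with the composition map $\bcp^n\to\bcp$, which is jointly continuous via the standard telescoping estimate $\hsnorm{\psi\circ\varphi-\psi'\circ\varphi'}\leq\hsnorm{\psi-\psi'}\hsnorm{\varphi}+\hsnorm{\psi'}\hsnorm{\varphi-\varphi'}$ together with the uniform bound $\hsnorm{\cdot}\leq 1$ on $\bcp$. Composing with the Fell-measurable $\mcM$ from part (a) and invoking Lemma \ref{Lem:App:Measurability of mult domains} concludes the proof.

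For parts (b) and (d), I would write the stabilized multiplicative domain as a countable intersection and invoke the cited countable-intersection fact for Fell-measurability. Specifically, $\StabMultDom{\psi}=\bigcap_{n\in\mbN}\mcM_{\psi^n}$, and each map $\psi\mapsto\mcM_{\psi^n}$ is Fell-measurable, being part (a) precomposed with the continuous iteration $\psi\mapsto\psi^n$; similarly $\StabMultDom{\Phi;\omega}=\bigcap_{n\in\mbN}\mcM_{\Phin_\omega}$, where each factor is Fell-measurable by part (c). Intersections of subspaces are subspaces, so a final application of Lemma \ref{Lem:App:Measurability of mult domains} closes the argument in both cases.

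The one nontrivial ingredient is the stability of Fell-measurability under countable intersection, which is cited as a standard result from the theory of random closed sets. Should a self-contained route be preferred, one could bypass the citation by exploiting the Artinian property of $\matrices$: the descending chains $\mcM_{\psi^n}$ and $\mcM_{\Phin_\omega}$ must stabilize at some finite index (analogous to $\tau$ in Theorem \ref{Thm:Stopping time}, whose measurability is established in Lemma \ref{Lem:App:Meas of tau}), converting the countable intersections into measurable finite ones and avoiding any general theory of random closed sets.
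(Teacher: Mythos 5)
Your proposal is correct and follows essentially the same route as the paper: Fell-measurability of $\psi\mapsto\mcM_\psi$ (Lemma \ref{Lem:App:Mult dom is Fell meas}), the upgrade to Grassmannian-measurability (Lemma \ref{Lem:App:Measurability of mult domains}), measurability of $\omega\mapsto\Phin_\omega$, and stability of Fell-measurable random closed sets under countable intersections; you simply make explicit some steps the paper leaves implicit. One small caution on your proposed self-contained alternative: do not literally invoke the measurability of $\tau$ from Lemma \ref{Lem:App:Meas of tau}, since that rests on this corollary (via Lemma \ref{Lem:Meas of the Ms} and Corollary \ref{Cor:Orbit dimension constant on orbits}); instead argue that the eventually-constant pointwise limit of the measurable maps $\omega\mapsto\mcM_{\Phin_\omega}$ in the Grassmannian metric is measurable.
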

Thus, our measurability concerns are alleviated.
We now change gears somewhat, and prove some technical lemmas about the Grassmannian required in the proof of Theorem \ref{Thm:Deterministic}.
For $r\in\set{1, \dots, d^2}$, let $\mbG_r\seq{\matrices}$ be the subset of $\mbG\seq{\matrices}$ consisting of those subspaces of dimension $r$. 
Then it is straightforward to check that, given $V, W\in\mbG_r\seq{\matrices}$, we have that 
\begin{equation}\label{Eqn:App:Simplifcation of Grassmannian metric for rank r}
    d_{\mbG\seq{\matrices}}
    \seq{
    V, W
    }
    =
    \sup_{v\in V_1}\inf_{w\in W}\hsnorm{v - w}.
\end{equation}
Again for $r\in\set{1, \dots, d^2}$, let $\scrP_r$ denote the set
\begin{equation}
    \scrP_r
    =
    \set{
    P\in\scrL\,\,:\,\, P = P^* = P^2
    },
\end{equation}
where we recall $\scrL$ is the set of linear maps $\matrices\to\matrices$, and the adjoint $P^*$ is with respect to $\innerhs{\cdot}{\cdot}$.
Then letting $\infnorm{P}$ denote the operator norm of $P$ with respect to $\hsnorm{\cdot}$, we define a metric $d_\scrP$ on $\scrP_r$ given by 
\begin{equation}
    d_\scrP(P, Q)
    :=
    \infnorm{P - Q}.
\end{equation}
Now, recall the standard bijection between $\scrP_r$ and $\mbG_r\seq{\matrices}$, which is given by the map 
\begin{equation}
    \begin{split}
        \operatorname{Proj}:\mbG_r\seq{\matrices} &\to \scrP_r,
    \end{split}
\end{equation}
where for $W\in\mbG_r\seq{\matrices}$, $\operatorname{Proj}(W)$ is the orthogonal projection onto $W$, i.e., for $a\in\matrices$, $\operatorname{Proj}(W)(a) = b\in W$ is the unique element of $W$ such that $ \hsnorm{a - b}
    =
    \inf_{c\in W}\hsnorm{a - c}.$
It is a standard fact that this is a bijection, and, moreover, this map is an isometry between $\seq{\mbG_r\seq{\matrices}, d_\mbG}$ and $\seq{\scrP_r, d_\scrP}$, a fact we state as a lemma.
\begin{lemma}\label{Lem:App:Isometry_Grassmannian}
    The map $\operatorname{Proj}:\mbG_r\seq{\matrices}\to\scrP_r$ is an isometry between metric spaces $\seq{\mbG_r\seq{\matrices}, d_{\mbG\seq{\matrices}}}$ and $\seq{\scrP_r, d_\scrP}$.
\end{lemma}
\begin{proof}
    See \cite[Ch. III, Section 34]{Akhiezer1993TheorySpace}.
\end{proof}
In the proof of Theorem \ref{Thm:Deterministic}, we also use the following fact. 
\begin{lemma}\label{Lem:App:Lipschitz for Grassmann}
    Let $L:\matrices\to\matrices$ be a linear map. 
    Let $V\in\mbG_r\seq{\matrices}$ and suppose that $L\vert_V$ is isometric with respect to $\hsnorm{\cdot}$.
    Then for any $W\in\mbG_r\seq{\matrices}$, we have that 
    \begin{equation}
        d_{\mbG\seq{\matrices}}(L(V), L(W))
        \leq 
        \hsnorm{L}
        d_{\mbG\seq{\matrices}}(V, W).
    \end{equation}
\end{lemma}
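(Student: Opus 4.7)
My plan is to bound $d_{\mbG\seq{\matrices}}\seq{L(V), L(W)}$ by separately estimating the two suprema defining the Grassmannian metric. The key hypothesis is that $L|_V$ is isometric, hence injective, so $\dim L(V) = r$ and $L|_V:V\to L(V)$ is an isometric isomorphism with isometric inverse. Writing $\rho := d_{\mbG\seq{\matrices}}(V, W)$, the goal is to show that both $\sup_{a\in L(V)_1}\inf_{b\in L(W)}\hsnorm{a - b}$ and $\sup_{b\in L(W)_1}\inf_{a\in L(V)}\hsnorm{a - b}$ are at most $\hsnorm{L}\cdot \rho$.

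The first supremum is handled directly. Given $a\in L(V)_1$, set $v := (L|_V)^{-1}(a)\in V_1$. Since $V$ and $W$ both lie in $\mbG_r\seq{\matrices}$, the simplified formula (\ref{Eqn:App:Simplifcation of Grassmannian metric for rank r}) gives $\inf_{w\in W}\hsnorm{v - w}\leq \rho$. Using $\hsnorm{a - L(w)} = \hsnorm{L(v - w)}\leq \hsnorm{L}\hsnorm{v - w}$ and taking the infimum over $w\in W$ yields $\inf_{b\in L(W)}\hsnorm{a - b}\leq \hsnorm{L}\rho$.

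For the second supremum, I would split into cases on $\dim L(W)$. In the generic case $\dim L(W) = r$, both $L(V)$ and $L(W)$ lie in $\mbG_r\seq{\matrices}$, so (\ref{Eqn:App:Simplifcation of Grassmannian metric for rank r}) applies to them symmetrically, identifying the two suprema; the second bound then follows from the first. In the degenerate case $\dim L(W) < r$ (which occurs exactly when $W$ meets $\ker L$ nontrivially), the crude bound $\inf_{a\in L(V)}\hsnorm{a - b}\leq \hsnorm{b} = 1$ (take $a = 0 \in L(V)$) shows the second supremum is at most $1$, and it remains to verify $\hsnorm{L}\rho \geq 1$. For this, dimension counting gives $L(V)\cap L(W)^\perp\neq\set{0}$; choose any nonzero $a_\star$ in this intersection. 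Since $a_\star\perp L(W)$, one has $\inf_{b\in L(W)}\hsnorm{a_\star/\hsnorm{a_\star} - b} = 1$, and applying the first-paragraph bound to $a_\star/\hsnorm{a_\star}\in L(V)_1$ gives $1\leq \hsnorm{L}\rho$, as needed.

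The only real subtlety is the possibility that $L|_W$ fails to be injective even though $L|_V$ is isometric, so that $\dim L(W) < r$; once this case is dispatched via the dimension-counting trick above, everything else is routine from the definitions.
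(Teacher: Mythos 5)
Your proof is correct, and its core estimate is the same as the paper's: pull a unit vector $a\in\seq{L(V)}_1$ back to $v=(L\vert_V)^{-1}(a)\in V_1$, push the approximating $w\in W$ forward, and pick up a factor $\hsnorm{L}$, so that the one-sided supremum over $\seq{L(V)}_1$ is at most $\hsnorm{L}\,d_{\mbG\seq{\matrices}}(V,W)$. Where you go beyond the paper is in the treatment of the other one-sided supremum: the paper simply invokes (\ref{Eqn:App:Simplifcation of Grassmannian metric for rank r}) for the pair $L(V), L(W)$, even though that identity is only stated for two subspaces of equal dimension $r$, and $L(W)$ may have smaller dimension when $W$ meets $\ker L$. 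Your explicit case split --- reducing to the first bound when $\dim L(W)=r$, and otherwise using $L(V)\cap L(W)^{\perp}\neq\set{0}$ to produce a unit vector of $L(V)$ at distance exactly $1$ from $L(W)$, whence $\hsnorm{L}\,d_{\mbG\seq{\matrices}}(V,W)\geq 1$ dominates the crude bound $1$ on the second supremum --- closes this gloss and makes the argument fully self-contained; the paper's shorter route buys brevity at the cost of implicitly extending the rank-$r$ formula to the unequal-dimension case.
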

\begin{proof}
    Fix $W\in\mbG\seq{\matrices}$.
    For $v'\in L(V)$ and $w'\in L(W)$, there are $v\in V$ and $w\in W$ so that $v' = L(v)$ and $w' = L(w)$, hence $\hsnorm{v' - w'}\leq\hsnorm{L}\hsnorm{v - w}.$
    In particular, for fixed $v' = L(v)\in \seq{L(V)}_1$, we have that 
    \begin{equation}
        \inf_{w'\in L(W)}
        \hsnorm{v' - w'}
        =
         \inf_{w\in W}
        \hsnorm{L(v) - L(w)}
            \leq 
        \hsnorm{L}
        \inf_{w\in W}
        \hsnorm{v - w}.
    \end{equation}
    On the other hand, because $L$ acts isometrically on $V$, we know that $L\seq{V_1} = \seq{L(V)}_1$.    
    So, from (\ref{Eqn:App:Simplifcation of Grassmannian metric for rank r}) and the above, we conclude that    
    \begin{align*}
        d_{\mbG\seq{\matrices}}
        \seq{L(V), L(W)}
            &= 
        \sup_{v'\in \seq{L(V)}_1}
        \inf_{w'\in L(W)}
        \hsnorm{v' - w'}\\
            &= 
        \sup_{v\in V_1}
        \inf_{w\in W}
        \hsnorm{L(v) - L(w)}\\
        &\leq 
        \hsnorm{L}
        \sup_{v\in V_1}\inf_{w\in W}
        \hsnorm{v - w}\\
        &=
        \hsnorm{L}d_{\mbG\seq{\matrices}}\seq{V, W}.
    \end{align*}
\end{proof}
\section{Stopping times}\label{App:Meas}
We shall now recall some basic notions from the theory of stochastic processes \cite{Zhan2019LectureProcesses}. 
\begin{definition}[Stochastic processes and stopping times]\label{Def:App:Stoch proc and stopping times}
    Let $\seq{\Xi, \mcA}$ be a measurable space and let $I\subset\mbR$ be an indexing set. 
    A filtration on $I$ is a collection $\mcG = \seq{\mcG_i}_{i\in I}$ of sub-$\sigma$-algebras of $\mcA$ such that $i < j$ implies $\mcG_i\subseteq \mcG_j$. 
    Given another measurable space $\seq{S, \mcS}$, we say that $\seq{X_i}_{i\in I}$ is an $S$-valued stochastic process if $X_i:\seq{\Xi, \mcA}\to \seq{S, \mcS}$ is measurable for all $i$, and we say that $\seq{X_i}_{i\in I}$ is $\mcG$-adapted if $X_i$ is $\mcG_i$-measurable for all $i$. 
    Given a filtration $\mcG$ and a measurable mapping $\iota:\seq{\Xi, \mcA}\to I\cup\{\sup I\}$ (where $I$ is given the $\sigma$-algebra induced by the Borel $\sigma$-algebra on $\mbR$), we say that $\iota$ is an $\mcG$-stopping time if for any $i\in I$, 
    \begin{equation}
        \set{\iota\leq i} = \set{\xi\in\Xi\,\,:\,\, \iota_\xi \leq i}\in \mcG_i. 
    \end{equation}
    Given an $\mcG$-stopping time $\iota$, we let $\mcG_\iota$ denote the $\sigma$-algebra defined by 
    \begin{equation}
        \mcG_\iota 
            := 
        \set{
        A\in\mcA
            \,\,:\,\,
        A\cap\set{\iota\leq i}\in\mcG_i \text{ for all $i\in I$}
        }.
    \end{equation}
    Note that $\mcG_\iota\subset\mcA$ by definition.
\end{definition}
The technical fact from this theory that we shall use is the following. 
\begin{lemma}[Measurability of stopped stochastic process]\label{Lem:App:Stopped process}
    Let $\seq{\Xi, \mcA}$ and $\seq{S, \mcS}$ be measurable spaces, and let $\mcG = \seq{\mcG_i}_{i\in\mbN}$ be filtration on $\mcA$ with index $\mbN$. 
    Let $X = \seq{X_i}_{i\in\mbN}$ be a $\mcG$-adapted stochastic process taking values in $S$. 
    If $\iota:\Xi\to\mbN\cup\{\infty\}$ is a $\mcG$-stopping time, then the map defined by 
    \begin{equation}
        \Xi\ni\xi\mapsto X_{\iota(\xi)}(\xi)
    \end{equation}
    is $\mcG_\iota$-measurable. 
\end{lemma}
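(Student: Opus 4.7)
The plan is to verify the definition of $\mcG_\iota$-measurability directly. Since $\mcG_\iota$ consists of those $A \in \mcA$ with $A \cap \{\iota \leq n\} \in \mcG_n$ for every $n$, it suffices to show, for every $B \in \mcS$ and every $n \in \mbN$, that
\begin{equation}
\{X_\iota \in B\} \cap \{\iota \leq n\} \in \mcG_n.
\end{equation}
The set on which $\iota = \infty$ poses no difficulty because intersecting with $\{\iota \leq n\}$ restricts attention to a finite level.

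The key step is the decomposition
\begin{equation}
\{X_\iota \in B\} \cap \{\iota \leq n\}
= \bigcup_{k=1}^{n} \Big( \{X_k \in B\} \cap \{\iota = k\} \Big),
\end{equation}
which uses only that $\iota$ takes integer values and that on $\{\iota = k\}$ we have $X_\iota = X_k$ pointwise.

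For each $k \leq n$, note that $\{\iota = k\} = \{\iota \leq k\} \setminus \{\iota \leq k-1\}$ lies in $\mcG_k$ because $\iota$ is a $\mcG$-stopping time (with the convention $\{\iota \leq 0\} = \emptyset$), and $\{X_k \in B\} \in \mcG_k$ by the $\mcG$-adaptedness of $\seq{X_i}_{i \in \mbN}$. Hence each summand lies in $\mcG_k \subseteq \mcG_n$, and the finite union also lies in $\mcG_n$. Since $B \in \mcS$ and $n \in \mbN$ were arbitrary, the map $\xi \mapsto X_{\iota(\xi)}(\xi)$ is $\mcG_\iota$-measurable, as required.

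There is no serious obstacle here; the argument is entirely set-theoretic once the correct countable decomposition of $\{X_\iota \in B\} \cap \{\iota \leq n\}$ is isolated and the defining properties of stopping time, adaptedness, and $\mcG_\iota$ are applied in sequence.
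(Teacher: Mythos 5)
Your argument is correct. The paper itself does not write out a proof of this lemma at all---it simply defers to \cite[Lemma 6.5]{Zhan2019LectureProcesses}---so your contribution is a self-contained version of the standard argument that the cited reference supplies. The decomposition $\set{X_\iota\in B}\cap\set{\iota\leq n}=\bigcup_{k=1}^{n}\seq{\set{X_k\in B}\cap\set{\iota=k}}$, together with $\set{\iota=k}\in\mcG_k$ (from the stopping-time property) and $\set{X_k\in B}\in\mcG_k$ (from adaptedness), is exactly the right mechanism, and the finite union lands in $\mcG_n$ as you say. Two small points worth making explicit if this were to replace the citation: first, the definition of $\mcG_\iota$ used in the paper requires the set $\set{X_\iota\in B}$ to lie in $\mcA$ in the first place; this follows from your same decomposition taken as a countable union over all $k\in\mbN$, so it costs nothing, but it should be stated. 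Second, your remark that $\set{\iota=\infty}$ "poses no difficulty" quietly assumes either that $X_\iota$ is left undefined (or assigned a fixed measurable value) on that event; in the paper's application this is vacuous, since the multiplicative index $\tau$ is everywhere finite, but the lemma as stated allows $\iota=\infty$ and a one-line convention would tidy this up. Neither point is a genuine gap in the intended finite-valued setting.
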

\begin{proof}
The proof may be found in \cite[Lemma 6.5]{Zhan2019LectureProcesses}.
\end{proof}
Now consider the stochastic process $\seq{\Phin}_{n\in\mbN}$ taking values in $\bcp$, which is $\mcT$-adapted, where $\mcT = \seq{\mcT_n}_{n\in\mbN}$ is the natural filtration on $\mbN$ defined by $\seq{\Phin}_{n\in\mbN}$, i.e., 
\begin{equation}
    \begin{split}
        \mcT_n 
        =
        \sigma\seq{\Phi^{(m)}\,\,:\,\,m = 1, \dots, n}.
    \end{split}
\end{equation}
Then we have what we need to show that $\omega\mapsto\Phi^{(\tau_\omega)}_\omega$ is measurable. 
\begin{lemma}[Measurability of $\tau$]\label{Lem:App:Meas of tau}
    The map $\omega\mapsto \tau_\omega$ defines a $\mcT$-stopping time. 
    In particular, $\omega\mapsto \Phi^{(\tau_\omega)}_\omega$ is $\mcT_\tau$-measurable (hence $\mcF$-measurable). 
\end{lemma}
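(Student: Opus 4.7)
The plan is to first establish that $\tau$ defines a $\mcT$-stopping time; once this is in hand, the $\mcT_\tau$-measurability of $\omega \mapsto \Phi^{(\tau_\omega)}_\omega$ follows immediately from Lemma \ref{Lem:App:Stopped process} applied to the $\mcT$-adapted stochastic process $\seq{\Phin}_{n\in\mbN}$ together with the stopping time $\tau$.

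To show $\tau$ is a $\mcT$-stopping time, I would verify that $\set{\tau \leq n} \in \mcT_n$ for every $n\in\mbN$. The central reduction is the identity
\begin{equation}
    \set{\tau\leq n}
        =
    \set{\omega\in\Omega \,\,:\,\, \mcM_{\Phin_\omega} = \StabMultDom{\Phi; \omega}},
\end{equation}
which holds for every $\omega$: by definition of $\tau$, the descending chain $\mcM_{\Phi^{(1)}_\omega}\supseteq\mcM_{\Phi^{(2)}_\omega}\supseteq\cdots$ is constant from step $\tau_\omega$ onward, so $\tau_\omega\leq n$ iff $\mcM_{\Phin_\omega}$ already coincides with the stabilized multiplicative domain. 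Since $\StabMultDom{\Phi; \omega}\subseteq\mcM_{\Phin_\omega}$ holds for all $\omega$ and $n$, equality of these subspaces is equivalent to equality of their dimensions, so
\begin{equation}
    \set{\tau\leq n} = \set{\omega\in\Omega\,\,:\,\,\dim\mcM_{\Phin_\omega} = \dim\StabMultDom{\Phi; \omega}}.
\end{equation}

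The key move is to invoke Corollary \ref{Cor:Orbit dimension constant on orbits}, which produces a deterministic constant $\delta\in\mbN$ with $\dim\StabMultDom{\Phi; \omega} = \delta$ for almost every $\omega$. Consequently, up to a $\mu$-null set, $\set{\tau\leq n}$ coincides with $\set{\omega\in\Omega\,\,:\,\,\dim\mcM_{\Phin_\omega} = \delta}$. By construction of $\mcT_n$, the map $\omega\mapsto\Phin_\omega$ is $\mcT_n$-measurable, and by Lemma \ref{Lem:Meas of the Ms} the map $\psi\mapsto\dim\mcM_\psi$ is measurable $\bcp\to\mbN$. Composing gives $\mcT_n$-measurability of $\omega\mapsto\dim\mcM_{\Phin_\omega}$, hence $\set{\dim\mcM_{\Phin_\omega} = \delta}\in\mcT_n$. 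Completing $\mcT_n$ by $\mu$-null sets in the standard way, we conclude $\set{\tau\leq n}\in\mcT_n$ for every $n$, so $\tau$ is a $\mcT$-stopping time; the measurability of $\omega\mapsto\Phi^{(\tau_\omega)}_\omega$ then follows from Lemma \ref{Lem:App:Stopped process}.

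The main obstacle in this argument is conceptual rather than computational: the defining condition $\mcM_{\Phin_\omega} = \mcM_{\Phi^{(n+m)}_\omega}$ for all $m\in\mbN$ involves arbitrarily many future terms and is therefore not obviously $\mcT_n$-measurable. Ergodicity of $T$ resolves this via the constancy of the stabilized dimension: once we know the target value $\delta$ is deterministic, the infinite-horizon test ``has the chain stabilized by step $n$?'' collapses to the single $\mcT_n$-measurable check ``does $\mcM_{\Phin_\omega}$ already have dimension $\delta$?''.
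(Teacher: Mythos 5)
Your proof is correct and tracks the paper's argument essentially step for step: both reduce the stopping-time condition to the $\mcT_n$-measurable event that $\dim\mcM_{\Phi^{(n)}_\omega}$ equals the deterministic constant $\delta$ from Corollary~\ref{Cor:Orbit dimension constant on orbits}, then invoke Lemma~\ref{Lem:Meas of the Ms} for measurability of the dimension map and Lemma~\ref{Lem:App:Stopped process} for the final measurability claim. Your use of $\{\tau\leq n\}$ in place of the paper's $\{\tau=m\}$ is, if anything, slightly tidier, and your appeal to null-set augmentation of $\mcT_n$ plays the same role as the paper's invocation of completeness of $(\Omega,\mcF,\mu)$.
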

\begin{proof}
    By Lemma \ref{Lem:App:Stopped process} it suffices to show that $\omega\mapsto \tau_\omega$ defines a $\mcT$-stopping time.
    To do this, it suffices to show that, for any $m\in\mbN$, we have that 
    \begin{equation}
        \set{\tau = m}\in\mcT_m.
    \end{equation}
    We note that, on the full probability event $D = \set{\delta = \dim\StabMultDom{\Phi}}$, $\tau_\omega = m$ if and only if $\dim \mcM_{\Phi^{(m)}_\omega} = \delta$, where $\delta$ is the constant from Corollary \ref{Cor:Orbit dimension constant on orbits}.
    Because $\seq{\Omega, \mcF,\mu}$ is a complete probability space, therefore, it suffices to show that 
     \begin{equation}
       \set{\omega\in\Omega \,\,:\,\, \dim \mcM_{\Phi^{(m)}_\omega} = \delta}
       \in\mcT_m
    \end{equation}
    for all $m$. 
    This, in turn, is immediate from Lemma \ref{Lem:App:Measurability of mult domains}, since the map 
    \begin{equation}
    \begin{split}
    \Omega &\to \mbG(\matrices)\\
        \omega &\mapsto \mcM_{\Phi^{(m)}_\omega}
        =
        \mcM_{\phi_{T^{m-1}(\omega)\circ\cdots\circ\phi_{\omega}}}
    \end{split}
    \end{equation}
    is expressly $\seq{\mcT_m, \mcB\seq{\scrG}}$-measurable, and the dimension map $\mbG(\matrices)\to\mbN$ is measurable. 
\end{proof}

\subsection*{Acknowledgments}
The author is grateful to Jeffrey Schenker for useful mathematical discussions related to the content of this paper, and to anonymous reviewers whose comments greatly improved the presentation.
The author completed this work while supported by research assistantships from Professor Jeffrey Schenker with funding from the US National Science Foundation under Grant No. DMS-2153946.

\printbibliography

\end{document}